\documentclass[11pt, draftclsnofoot, onecolumn]{IEEEtran}
\usepackage{amsmath,amssymb,amsthm}
\usepackage[hidelinks]{hyperref}
\usepackage{xcolor}

\newtheorem{theorem}{Theorem}
\newtheorem{lemma}{Lemma}
\newtheorem{proposition}{Proposition}

\theoremstyle{definition}
\newtheorem{definition}{Definition}
\theoremstyle{remark}
\newtheorem{remark}{Remark}

\newcommand{\G}{\bar{G}}
\newcommand{\cX}{\mathcal{X}}
\newcommand{\cY}{\mathcal{Y}}
\newcommand{\cT}{\mathcal{T}}
\newcommand{\cZ}{\mathcal{Z}}
\newcommand{\cI}{\mathcal{I}}
\DeclareMathOperator{\supp}{supp}
\DeclareMathOperator{\IB}{IB}
\DeclareMathOperator{\PF}{PF}
\newcommand{\clique}{\textsc{Clique}}
\newcommand{\IBT}{\textsc{IB-Threshold}}
\newcommand{\IBL}{\textsc{IB-Lagrangian}}
\newcommand{\PFT}{\textsc{PF-Threshold}}
\newcommand{\PFL}{\textsc{PF-Lagrangian}}
\newcommand{\remove}[1]{}

\title{The Information Bottleneck and the Privacy Funnel are  NP-hard}
\author{ Roberto Bruno \  \
Ugo Vaccaro \\
Department of Computer Science, University of Salerno\\
84084 Fisciano (SA), Italy\\
Email:{\tt \{rbruno,uvaccaro\}@unisa.it}}
\date{}

\begin{document}
\maketitle

\begin{abstract}
Given a joint distribution $P_{XY}$ of random variables $X,Y$ on  finite alphabets, the information
bottleneck method seeks a representation $T$ of $X$ that is as informative
as possible about $Y$ while compressing $X$; 
the privacy funnel  asks for a representation that discloses much about
$X$ while leaking little about $Y$. In both scenarios  $T$ and $Y$  are
conditionally independent given $X$. We show that computing the optimal
value $\IB(P_{XY},R) = \sup_T \{ I(Y;T) : I(X;T) \le R \}$ of the
information bottleneck problem, as well as its Lagrangian relaxation
$L_\beta(P_{XY}) = \sup_T [ I(Y;T) - \beta I(X;T) ]$, is NP-hard.
Moreover, unless $\mathrm{P} = \mathrm{NP}$, neither $\IB(P_{XY},R)$ nor
$L_\beta(P_{XY})$ can be approximated in polynomial time within an
additive error of order $|\cX|^{-(1+\alpha)}$, for every fixed $\alpha > 0$.
We also prove that computing the
 privacy funnel
$\PF(P_{XY},r) = \inf_T \{ I(Y;T) : I(X;T) \ge r \}$ and its Lagrangian
relaxation $\Lambda_\beta(P_{XY}) = \inf_T [ I(Y;T) - \beta I(X;T) ]$ is
NP-hard, and that, unless $\mathrm{P} = \mathrm{NP}$, neither can be
approximated in polynomial time within an additive error of order
$|\cX|^{-2}$.
\end{abstract}

\textit{Index Terms}--- Information Bottleneck, Privacy Funnel, NP-hardness

\section{Introduction}\label{sec:intro}

Extracting compact representations that preserve information relevant to
a task is a central problem in information theory, statistics, 
machine learning, and beyond. The information bottleneck (IB) method, introduced by
Tishby, Pereira and Bialek in 1999~\cite{TPB99}\footnote{We
should note that an equivalent optimization problem has previously been
studied in the literature on information
theory~\cite{Abu, WZ73a,WZ73b,WW75}.}, formulates this problem as a trade-off
between compressing an observed random variable $X$ and retaining
information about a target variable $Y$.

Applications of IB include clustering and document
classification~\cite{TPB99}, as well as the design of representations for
neural classifiers and generative models~\cite{GP20}. Its operational
interpretation through remote source coding under logarithmic loss,
together with extensions to distributed learning and communication,
connects these applications to information-theoretic performance
limits~\cite{ZEAS20}. The IB trade-off thus serves both as an objective
for constructing representations and as a benchmark for assessing their
quality. The vast literature on the information bottleneck problem and
its numerous applications cannot be synthesized here; we refer the reader
to the  review papers~\cite{AC20,GP20,HLYY24, Lew, Liu,  Pia, ZEAS20,ZL24} for a partial overview of the area.

Formally, for a joint distribution $P_{XY}$ on finite alphabets
$\cX \times \cY$ and a budget $R \ge 0$, the IB problem corresponds to the
following optimization problem:
\begin{equation}\label{eq:IB_def}
    \IB(P_{XY},R) = \sup_{\substack{P_{T\mid X}:\, T-X-Y\\ I(X;T) \le R}} I(Y;T).
\end{equation}
The supremum ranges over all finite representation alphabets $\cT$ and
stochastic encoders $P_{T\mid X}$; the Markov condition $T - X - Y$ means
that $T$ and $Y$ are conditionally independent given $X$.
We also consider the Lagrangian formulation
\begin{equation}\label{eq:L_def}
    L_\beta(P_{XY}) = \sup_{P_{T\mid X}:\, T-X-Y}
    \big[ I(Y;T) - \beta I(X;T) \big], \qquad \beta > 0.
\end{equation}
All information quantities are measured in bits.

In general, computing the optimal values in~\eqref{eq:IB_def}
and~\eqref{eq:L_def} is computationally challenging
\cite{Agmon2023, HuangElGamal2021}. 
Both problems are non-convex optimization problems
\cite{AC20,GP20, TPB99}. The iterative algorithm proposed
in~\cite{TPB99}, as well as its later variants, is only guaranteed to
converge to a local optimum~\cite{FriedmanMST01,StrouseS17,HuangElGamal2021}.
A
basic question is therefore whether the optimal values
in~\eqref{eq:IB_def} and~\eqref{eq:L_def} admit efficient computation or
approximation with provable accuracy for arbitrary finite input
distributions.

To the best of our knowledge, exact characterizations of the information
bottleneck curve are known in a handful of cases only. For binary sources
observed through a binary symmetric channel, the exact trade-off follows
from Mrs.\ Gerber's lemma~\cite{WZ73a}; extensions cover binary-input
memoryless symmetric channels with uniform input~\cite{CS89,SSZ05}. For
nonbinary Hamming channels with uniform input, Benger, Asoodeh and
Chen~\cite{BAC23} characterize the entire curve as the upper concave
envelope of an explicit scalar function. Other exact cases include
discrete deterministic targets $Y = f(X)$, for which
$\IB(P_{XY},R) = \min\{R, H(Y)\}$ for all $R \ge 0$~\cite[Sec.~3]{KTV19},
and the case in which $Y$ is obtained from $X$ through an erasure channel
with erasure probability $\epsilon$, independently of the input symbol.
In this case, $\IB(P_{XY},R) = (1-\epsilon)R$ for $0 \le R \le H(X)$, as
stated in~\cite[Lemma~1]{AC20}. In the continuous setting, Chechik \textit{et
al}.~\cite{CGTW05} obtain an analytic solution for jointly Gaussian
vectors, with optimal representations given by noisy linear projections
onto canonical-correlation directions. Dikshtein, Weinberger and
Shamai~\cite{DWS22} determine the optimal asymptotic relevance rate for
jointly stationary Gaussian processes through spectral water filling.

A closely related problem is the privacy funnel (PF), introduced by
Makhdoumi et al.~\cite{MSFM14}, in which one seeks a representation $T$
of $X$ that reveals at least a prescribed amount of information about $X$,
while leaking as little information as possible about $Y$. Formally, for
a rate $0\leq r \leq H(X)$, the PF problem corresponds to the following optimization problem:
\begin{equation}\label{eq:PF_def}
    \PF(P_{XY},r) = \inf_{\substack{P_{T\mid X}:\, T-X-Y\\ I(X;T) \ge r}} I(Y;T),
\end{equation}
and its Lagrangian formulation is
\begin{equation}\label{eq:Lambda_def}
    \Lambda_\beta(P_{XY}) = \inf_{P_{T\mid X}:\, T-X-Y}
    \big[ I(Y;T) - \beta I(X;T) \big], \qquad \beta > 0.
\end{equation}
Problems \eqref{eq:PF_def}  and \eqref{eq:Lambda_def} share the same computational difficulties as
\eqref{eq:IB_def}  and \eqref{eq:L_def}.

\subsection{Our results}
Our results provide a complexity-theoretic explanation for the
computational difficulties discussed above, by establishing a
fundamental obstacle to efficient computation in the worst case.
For the information bottleneck, we prove that the threshold decision
problems associated with both~\eqref{eq:IB_def} and~\eqref{eq:L_def}
are NP-hard. Thus, unless $\mathrm{P}=\mathrm{NP}$, no polynomial-time
algorithm can solve these problems for all instances.

 The NP-hardness of~\eqref{eq:L_def} is first obtained with the
Lagrange multiplier fixed at $\beta = 1/2$, and then extended to every
fixed rational $\beta \in (0,1)$, and to every fixed real
$\beta \in (0,1)$ computable in polynomial time. Moreover, unless
$\mathrm{P} = \mathrm{NP}$, no polynomial-time algorithm can approximate
$L_{1/2}(P_{XY})$ within additive error $1/(54|\cX|^{3/2})$, or
$\IB(P_{XY},R)$ within additive error $1/(144|\cX|^{3/2})$
(Theorems~\ref{thm:IBL_hard} and~\ref{thm:IBT_hard}). By suitably modifying the source instances 
of the decision problem from which we perform the reduction, 
we successively improve the exponent
$3/2$ as follows: for every fixed $\alpha > 0$, unless $\mathrm{P} = \mathrm{NP}$, no
polynomial-time algorithm can approximate $L_{1/2}(P_{XY})$ within
$1/(36|\cX|^{1+\alpha})$, or $\IB(P_{XY},R)$ within
$1/(96|\cX|^{1+\alpha})$, and the same holds, up to a constant factor, for
every fixed rational $\beta \in (0,1)$ (Theorem~\ref{thm:padding}).

Building on the same methodology, we show that computing the optimal
value of the privacy funnel is NP-hard as well. This hardness applies not
only to the exact computation, but also to approximation within an
additive error of order $|\cX|^{-2}$ (Theorem~\ref{thm:PF_hard}).
Moreover, the same results hold for the Lagrangian
formulation~\eqref{eq:Lambda_def}, for every fixed rational multiplier
$\beta \in (0,1)$ (Proposition~\ref{prop:PF_lagrangian}).

All our results hold in the standard bit model of computation, with
rational input distributions and rational rate budgets.

\subsection{Overview of the proofs}

All the results are proved by polynomial-time reductions from the
\clique{} problem, and the reductions share a common construction. Given an instance
$(G,k)$ of \clique{}, we construct a larger graph $\G$ by inserting, for
each edge $e$ of $G$, a clique $C_e$ of size $k-1$ containing only the
edge $e$ among the edges of $G$. This construction ensures that the
cliques $C_e$ identify a partition of the edges of $\G$, and that no clique of size
larger than $k-1$ is created, apart from those already present in $G$. In
all our reductions, the random variable $X$ is uniformly distributed on the oriented edges of
$\G$. We then consider two encoders of $X$: a deterministic encoder
$T_0$, which maps each oriented edge to the clique $C_e$ containing it,
and, if $G$ contains a $k$-clique $K$, a randomized encoder $T_1$ which
uses $K$.

For the information bottleneck, given $X = x$, the random variable $Y$ is
one of the two endpoints of the oriented edge $x$, chosen uniformly at
random. If $G$ has no $k$-clique, we show that $L_{1/2}(P_{XY})$ is
achieved by the encoder $T_0$. If, instead, $G$ has a $k$-clique, we show that 
the encoder $T_1$
strictly improves over $T_0$, and that the improvement is at least
inverse-polynomial in $|\cX|$. This yields both the NP-hardness of the
Lagrangian problem and the hardness of approximating $L_{1/2}(P_{XY})$.
The result for the constrained formulation \eqref{eq:IB_def} follows from the same
construction, by a suitable choice of the rate budget $R$. Moreover, by
modifying the instance $(G,k)$ of \clique{} without changing its answer,
we further improve the exponent $3/2$ in
Theorems~\ref{thm:IBL_hard} and~\ref{thm:IBT_hard} to $1 + \alpha$, for
every fixed $\alpha > 0$.

For the privacy funnel, we keep the same  $X$, $T_0$ and $T_1$ as for the IB problem,
but we change the
channel from $X$ to $Y$: given $X$, the random variable $Y$ either
reveals $X$, or reveals a pair of oriented edges that cannot belong to a
common clique. The reduction relies on the lower bound 
$I(Y;T) \ge I(X;T) - H(X \mid Y)$, which holds for every encoder $T$. More precisely, we show that, if $G$ does not contain a $k$-clique, then the lower bound cannot be attained at any rate smaller than
$\log|E(G)|$, while, if $G$ contains a $k$-clique, it is attained at a strictly
smaller rate by combining the encoders $T_0$ and $T_1$.

\subsection{Organization of the paper}

In Section~\ref{sec:prelim}, we set the notation, recall some basic facts on
encoders, and formally define the decision problems we study. In
Section~\ref{sec:construction}, we describe the extended graph $\G$ and
the encoders $T_0$ and $T_1$, which are common to all our reductions. In
Section~\ref{sec:IB}, we prove the hardness of the information
bottleneck, both in its Lagrangian and in its constrained formulation. In
Section~\ref{sec:PF}, we prove the hardness of the privacy funnel.
Section~\ref{sec:conclusions} contains some concluding remarks and open
problems.

\section{Preliminaries}\label{sec:prelim}

\subsection{Notation}\label{subsec:notation}

All alphabets throughout the paper are finite, and we denote by $\log$
the logarithm in base $2$ and by $\ln$ the natural logarithm. Let $X$ be
a random variable taking values in an alphabet $\cX$. We denote by $P_X$
the probability distribution of $X$, that is,
$P_X(x) = \mathbb{P}(X = x)$ for each $x \in \cX$, and by
\begin{equation}\label{eq:entropy}
    H(X) = -\sum_{x \in \cX} P_X(x) \log P_X(x)
\end{equation}
the Shannon entropy of $X$, with the convention $0 \log 0 = 0$. Since
$H(X)$ depends only on the distribution $P_X$, we also write $H(P_X)$ to
denote the entropy of the distribution $P_X$. We denote by
$h_2(z) = -z\log z - (1-z)\log(1-z)$, for $z \in [0,1]$, the binary
entropy function.

Let $(X,Y)$ be a pair of random variables taking values in
$\cX \times \cY$, with joint probability distribution
$P_{XY}(x,y) = \mathbb{P}(X = x, Y = y)$ and marginal distributions $P_X$
and $P_Y$. Without loss of generality, we assume that $P_X(x) > 0$ for
all $x \in \cX$; otherwise, we simply remove from $\cX$ the values with
$P_X(x) = 0$. Then, for every $x \in \cX$, the conditional distribution $P_{Y\mid X}(\cdot\mid x)$
of $Y$ given $X = x$ is
\begin{equation}\label{eq:cond_distr}
    P_{Y\mid X}(y \mid x) = \frac{P_{XY}(x,y)}{P_X(x)},
    \qquad \forall y \in \cY,
\end{equation}
and we denote by $H(Y \mid X = x) = H\big(P_{Y\mid X}(\cdot \mid x)\big)$
its entropy. The conditional entropy of $Y$ given $X$ is the average of
$H(Y \mid X = x)$ with respect to $P_X$, that is,
\begin{equation}\label{eq:conditional_entropy}
    H(Y \mid X) = \sum_{x \in \cX} P_X(x)\, H(Y \mid X = x),
\end{equation}
and the mutual information between $X$ and $Y$ is
\begin{equation}\label{eq:MI}
    I(X;Y) = H(Y) - H(Y \mid X) = H(X) - H(X \mid Y).
\end{equation} 
Conditional
entropies and mutual information given a third random variable $Z$ are
defined analogously;  we recall that
$I(X;Y \mid Z) = H(X \mid Z) - H(X \mid Y,Z)$, and the chain rules
\begin{equation}\label{eq:chain_rule}
    H(X,Y) = H(X) + H(Y \mid X)
    \qquad\text{and}\qquad
    I(X; Y,Z) = I(X;Z) + I(X; Y \mid Z).
\end{equation}
Let $(X,Y,T)$ be a triple of random variables with joint
distribution $P_{XYT}$ on $\cX \times \cY \times \cT$. The random
variables $Y$ and $T$ are conditionally independent given $X$ if
\begin{equation}\label{eq:cond_indep}
    P_{YT\mid X}(y,t \mid x) = P_{Y\mid X}(y \mid x)\, P_{T\mid X}(t \mid x),
    \qquad \forall (x,y,t) \in \cX \times \cY \times \cT.
\end{equation}

Let $P$ and $Q$ be two probability distributions on a finite set $S$,
and let $\supp(P) = \{s \in S : P(s) > 0\}$ denote the support of $P$. If
$\supp(P) \subseteq \supp(Q)$, the Kullback--Leibler divergence and the
$\chi^2$-divergence of $P$ from $Q$ are defined as
\begin{equation}\label{eq:divergences}
    D(P \,\|\, Q) = \sum_{s \in \supp(P)} P(s) \log\frac{P(s)}{Q(s)}
    \qquad\text{and}\qquad
    \chi^2(P \,\|\, Q) = \sum_{s \in \supp(Q)} \frac{\big(P(s) - Q(s)\big)^2}{Q(s)}.
\end{equation}
Both divergences are non-negative, and they vanish if and only if
$P = Q$. Moreover, the two divergences are related by the following well-known
inequality.

\begin{proposition}[\cite{SV16}]\label{prop:KL_chi2}
    Let $P$ and $Q$ be two probability distributions on a finite set $S$
    such that $\supp(P) \subseteq \supp(Q)$. Then, it holds that
    \begin{equation}\label{eq:KL_chi2}
        D(P \,\|\, Q) \le \log\big(1 + \chi^2(P \,\|\, Q)\big)
        \le \frac{\chi^2(P \,\|\, Q)}{\ln 2}.
    \end{equation}
\end{proposition}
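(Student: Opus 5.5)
The plan is to prove the first inequality with Jensen's inequality applied to the concave function $\log$, and the second with the elementary bound $\ln(1+u)\le u$.

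\textbf{Rewriting the divergence as an average.} Write
\[
D(P\,\|\,Q)=\sum_{s\in\supp(P)}P(s)\log\frac{P(s)}{Q(s)}.
\]
This is the expectation of $\log\big(P(s)/Q(s)\big)$ when $s$ is drawn according to $P$. Because $\log$ is concave, Jensen's inequality moves the logarithm outside the expectation:
\[
D(P\,\|\,Q)\le \log\sum_{s\in\supp(P)}\frac{P(s)^2}{Q(s)}.
\]
The support assumption $\supp(P)\subseteq\supp(Q)$ guarantees that every ratio appearing here is well defined.

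\textbf{Identifying the resulting sum.} Next we express that sum through the $\chi^2$-divergence. Expanding the square in the definition of $\chi^2(P\,\|\,Q)$, with the sum taken over $\supp(Q)$, gives
\[
\chi^2(P\,\|\,Q)=\sum_{s\in\supp(Q)}\frac{P(s)^2}{Q(s)}-2\sum_{s}P(s)+\sum_{s}Q(s).
\]
Since $\supp(P)\subseteq\supp(Q)$, the first sum equals $\sum_{s\in\supp(P)}P(s)^2/Q(s)$. Both remaining sums equal $1$, because $P$ and $Q$ put all their mass on $\supp(Q)$. Hence the quantity inside the logarithm is exactly $1+\chi^2(P\,\|\,Q)$, which proves the first inequality.

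\textbf{Second inequality.} Use $\log(1+u)=\ln(1+u)/\ln 2$ together with $\ln(1+u)\le u$, which holds for all $u\ge 0$ and in particular for $u=\chi^2(P\,\|\,Q)\ge0$. This gives $\log\big(1+\chi^2(P\,\|\,Q)\big)\le \chi^2(P\,\|\,Q)/\ln 2$.

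The only step needing care is the support bookkeeping in the second paragraph. That is where $\supp(P)\subseteq\supp(Q)$ is used to reconcile the sum over $\supp(P)$ coming from Jensen with the sum over $\supp(Q)$ in the definition of $\chi^2$.
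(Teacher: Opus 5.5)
Your argument is correct and complete. Jensen's inequality for the concave $\log$ with $s\sim P$, the identity $\sum_{s\in\supp(P)}P(s)^2/Q(s)=1+\chi^2(P\,\|\,Q)$ (which is exactly where $\supp(P)\subseteq\supp(Q)$ is needed), and the bound $\ln(1+u)\le u$ together give both inequalities. The paper does not prove this itself but cites Sason--Verd\'u~\cite{SV16}, and yours is the standard derivation of that bound (equivalently, $D=D_1\le D_2=\log(1+\chi^2)$ by monotonicity of R\'enyi divergence in its order), so nothing is missing.
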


Finally, given a probability distribution $P$ on an alphabet $S$, we denote by
$P \otimes P$ the product distribution on $S \times S$, defined by
$(P \otimes P)(s,s') = P(s)P(s')$ for all $(s,s') \in S \times S$.

\subsection{Encoders}\label{subsec:encoders}

Let $(X,Y) \sim P_{XY}$ be a pair of random variables taking values in
$\cX \times \cY$. In the optimization problems~\eqref{eq:IB_def}--\eqref{eq:Lambda_def},
one seeks a representation (or encoder) $T$ of $X$ that is conditionally
independent of $Y$ given $X$. This type of encoder is formalized as
follows.

\begin{definition}\label{def:encoder}
    Let $\cT$ be a finite set. An encoder of $X$ is a conditional
    distribution (or channel) $P_{T\mid X}$ from $\cX$ to $\cT$, that is, a
    collection $\{P_{T\mid X}(\cdot \mid x)\}_{x \in \cX}$ of probability
    distributions on $\cT$. The encoder $P_{T\mid X}$ induces a random
    variable $T$ taking values in $\cT$ such that $(X,Y,T)$ has joint
    distribution
    \begin{equation}\label{eq:encoder_joint}
        P_{XYT}(x,y,t) = P_X(x)\, P_{Y\mid X}(y \mid x)\, P_{T\mid X}(t \mid x),
        \qquad \forall (x,y,t) \in \cX \times \cY \times \cT.
    \end{equation}
    In particular, $T$ and $Y$ are conditionally independent given $X$. Moreover, we refer to the mutual information $I(X;T)$ as the
    \emph{rate} of the encoder $T$.
\end{definition}

To simplify the notation, we identify an encoder with the random variable
$T$ it induces, and we refer to $T$ itself as the encoder whenever the
channel $P_{T\mid X}$ is clear from the context. Moreover, without loss of
generality, we assume that $P_T(t) > 0$ for all $t \in \cT$, and we call
the distributions $\{P_{X\mid T}(\cdot \mid t)\}_{t \in \cT}$ the
\emph{posteriors} of $T$.

Let $W$ denote the channel from $\cX$ to $\cY$ induced by $P_{XY}$, that
is, $W(y \mid x) = P_{Y\mid X}(y \mid x)$ for all $x \in \cX$ and
$y \in \cY$. The channel $W$ maps every probability distribution $Q$ on
$\cX$ to the probability distribution $WQ$ on $\cY$ defined by
\begin{equation}\label{eq:WQ}
    (WQ)(y) = \sum_{x \in \cX} W(y \mid x)\, Q(x),
    \qquad \forall y \in \cY.
\end{equation}
In particular, $WP_X = P_Y$. The next lemma shows that the same relation holds
conditionally on $T$. Indeed,  since $T$ and $Y$ are conditionally independent
given $X$, applying $W$ to the posterior of $X$ given $T = t$ gives the
posterior of $Y$ given $T = t$.

\begin{lemma}\label{lemma:posterior_Y_given_T}
    Let $T$ be an encoder of $X$. Then, it holds that
    \begin{equation}\label{eq:posterior_Y_given_T}
        P_{Y\mid T}(\cdot \mid t) = W P_{X\mid T}(\cdot \mid t),
        \qquad \forall t \in \cT.
    \end{equation}
\end{lemma}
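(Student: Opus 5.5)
The plan is a direct computation from the joint distribution in Definition~\ref{def:encoder}. Fix $t \in \cT$; recall that we assume $P_T(t) > 0$, so the posteriors $P_{X\mid T}(\cdot\mid t)$ and $P_{Y\mid T}(\cdot\mid t)$ are well defined.

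First, I will express the posterior of $X$ given $T = t$ by Bayes' rule applied to \eqref{eq:encoder_joint}, namely
\[
P_{X\mid T}(x\mid t) = \frac{P_X(x)\,P_{T\mid X}(t\mid x)}{P_T(t)}.
\]

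Next, I will compute $P_{Y\mid T}(y\mid t)$ by marginalizing $X$ out of \eqref{eq:encoder_joint}:
\[
P_{Y\mid T}(y\mid t) = \frac{1}{P_T(t)}\sum_{x\in\cX} P_X(x)\,P_{Y\mid X}(y\mid x)\,P_{T\mid X}(t\mid x).
\]
I then regroup each summand as $W(y\mid x)$ times $P_X(x)P_{T\mid X}(t\mid x)/P_T(t)$. By the first step, this second factor is exactly $P_{X\mid T}(x\mid t)$, so the sum equals $\sum_x W(y\mid x)\,P_{X\mid T}(x\mid t) = (WP_{X\mid T}(\cdot\mid t))(y)$ by \eqref{eq:WQ}. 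Since $y$ was arbitrary, this gives \eqref{eq:posterior_Y_given_T}.

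There is no real obstacle here. The only point needing care is that the factorization \eqref{eq:encoder_joint}, which encodes the conditional independence of $T$ and $Y$ given $X$, is what allows $W(y\mid x)$ to be pulled out as a factor independent of $t$. Without that factorization, the conditional law of $Y$ given $(X,T)$ could depend on $t$, and the regrouping above would fail.
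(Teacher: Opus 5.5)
Your proposal is correct and follows essentially the same route as the paper. The paper also marginalizes $X$ out of the factorization \eqref{eq:encoder_joint}, uses Bayes' rule to identify $P_X(x)\,P_{T\mid X}(t\mid x)/P_T(t)$ with $P_{X\mid T}(x\mid t)$, and concludes via \eqref{eq:WQ}.
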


\begin{proof}
    Let $y \in \cY$ and $t \in \cT$. By~\eqref{eq:encoder_joint}, and
    since $P_X(x)\, P_{T\mid X}(t \mid x) = P_T(t)\, P_{X\mid T}(x \mid t)$,
    we have that
    \begin{equation*}
        P_{Y\mid T}(y \mid t)
        = \frac{\sum_{x \in \cX} P_X(x)\, P_{T\mid X}(t \mid x)\, W(y \mid x)}{P_T(t)}
        = \sum_{x \in \cX} W(y \mid x)\, P_{X\mid T}(x \mid t)
        = \big(W P_{X\mid T}(\cdot \mid t)\big)(y),
    \end{equation*}
    where the last equality follows from~\eqref{eq:WQ}.
\end{proof}

The next simple lemma gives a lower bound on the information that an encoder
leaks about $Y$, in terms of the information it reveals about $X$. It
plays a useful role in Section~\ref{sec:PF}.

\begin{lemma}\label{lemma:PF_lower}
    For every joint distribution $P_{XY}$ and every encoder $T$ of $X$,
    \begin{equation}\label{eq:PF_lower}
        I(Y;T) \ge I(X;T) - H(X \mid Y),
    \end{equation}
    with equality if and only if $H(X \mid Y,T) = 0$.
\end{lemma}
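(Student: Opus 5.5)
The plan is to expand the mutual informations with the chain rule~\eqref{eq:chain_rule} and use the Markov structure $T - X - Y$ from Definition~\ref{def:encoder}. The whole argument reduces to one identity relating the two sides of~\eqref{eq:PF_lower}.

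First I will expand $I(X,Y;T)$ in two ways. On the one hand,
\begin{equation*}
    I(X,Y;T) = I(X;T) + I(Y;T \mid X) = I(X;T),
\end{equation*}
since $Y$ and $T$ are conditionally independent given $X$ by~\eqref{eq:encoder_joint}, so $I(Y;T\mid X)=0$. On the other hand,
\begin{equation*}
    I(X,Y;T) = I(Y;T) + I(X;T \mid Y).
\end{equation*}
Equating the two expansions gives the identity $I(Y;T) = I(X;T) - I(X;T\mid Y)$.

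Next I will bound the correction term. By definition,
\begin{equation*}
    I(X;T\mid Y) = H(X\mid Y) - H(X\mid Y,T) \le H(X\mid Y),
\end{equation*}
because conditional entropy of a discrete variable is non-negative. Substituting this into the identity yields~\eqref{eq:PF_lower}.

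For the equality case, equality holds exactly when $I(X;T\mid Y) = H(X\mid Y)$, that is, when $H(X\mid Y,T)=0$. This gives both directions at once.

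There is no real obstacle here. The only point needing care is justifying $I(Y;T\mid X)=0$ from the factorization~\eqref{eq:encoder_joint}, which is immediate since $P_{YT\mid X}=P_{Y\mid X}P_{T\mid X}$. All quantities are finite because the alphabets are finite.
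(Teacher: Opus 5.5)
Your proof is correct and follows essentially the same route as the paper: both use $I(X;T) = I(X,Y;T) = I(Y;T) + I(X;T\mid Y)$ from the conditional independence of $Y$ and $T$ given $X$, then bound $I(X;T\mid Y) = H(X\mid Y) - H(X\mid Y,T) \le H(X\mid Y)$, with equality exactly when $H(X\mid Y,T)=0$. The only difference is that you state the step $I(Y;T\mid X)=0$ explicitly, whereas the paper leaves it implicit.
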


\begin{proof}
    Since $Y$ and $T$ are conditionally independent given $X$, we have
    that $I(X;T) = I(X,Y;T) = I(Y;T) + I(X;T \mid Y)$. Moreover,
    $I(X;T \mid Y) = H(X \mid Y) - H(X \mid Y,T) \le H(X \mid Y)$, with
    equality if and only if $H(X \mid Y,T) = 0$. Putting these together,
    we get that
    \begin{equation*}
        I(Y;T) = I(X;T) - I(X;T \mid Y) \ge I(X;T) - H(X \mid Y),
    \end{equation*}
    which proves~\eqref{eq:PF_lower}, with equality if and only if
    $H(X \mid Y,T) = 0$.
\end{proof}

Finally, we show that two encoders can be combined into a new encoder,
whose rate interpolates between their rates and whose posteriors are
those of the two encoders. We will use this fact in
Section~\ref{sec:PF}.

\begin{lemma}\label{lemma:mixture}
    Let $T_a$ and $T_b$ be two encoders of $X$ taking values in $\cT_a$
    and $\cT_b$, respectively, and let $\theta \in [0,1]$. Let $T_\theta$
    be the encoder of $X$ taking values in
    $(\{a\} \times \cT_a) \cup (\{b\} \times \cT_b)$ defined by
    \begin{equation}\label{eq:mixture}
        P_{T_\theta \mid X}\big((a,t) \mid x\big) = \theta\, P_{T_a \mid X}(t \mid x),
        \qquad
        P_{T_\theta \mid X}\big((b,t) \mid x\big) = (1-\theta)\, P_{T_b \mid X}(t \mid x).
    \end{equation}
    In other words, $T_\theta$ applies $T_a$ with probability $\theta$ and
    $T_b$ with probability $1 - \theta$, independently of $X$, and reveals
    which of the two encoders has been applied. Then, the posteriors of
    $T_\theta$ are posteriors of $T_a$ or of $T_b$, that is,
    \begin{equation}\label{eq:mixture_post}
        P_{X\mid T_\theta}\big(\cdot \mid (a,t)\big) = P_{X\mid T_a}(\cdot \mid t)
        \quad\text{and}\quad
        P_{X\mid T_\theta}\big(\cdot \mid (b,t)\big) = P_{X\mid T_b}(\cdot \mid t),
    \end{equation}
    whenever the left-hand sides are defined. Moreover,
    \begin{equation}\label{eq:mixture_rate}
        I(X;T_\theta) = \theta\, I(X;T_a) + (1-\theta)\, I(X;T_b).
    \end{equation}
\end{lemma}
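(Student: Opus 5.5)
The approach is direct computation from the definition~\eqref{eq:mixture}. The first step is the marginal of $T_\theta$. Summing over $x$ gives
\[
P_{T_\theta}\big((a,t)\big) = \sum_x P_X(x)\,\theta\,P_{T_a\mid X}(t\mid x) = \theta\,P_{T_a}(t),
\]
and likewise $P_{T_\theta}((b,t)) = (1-\theta)P_{T_b}(t)$. These are positive exactly when $\theta>0$ (respectively $\theta<1$), since $P_{T_a}(t)>0$ and $P_{T_b}(t)>0$ by our standing convention. So the left-hand sides of~\eqref{eq:mixture_post} are defined precisely in these cases. By Bayes' rule,
\[
P_{X\mid T_\theta}(x\mid (a,t)) = \frac{P_X(x)\,\theta\,P_{T_a\mid X}(t\mid x)}{\theta\,P_{T_a}(t)} = P_{X\mid T_a}(x\mid t).
\]
The factor $\theta$ cancels, and the same cancellation with $1-\theta$ handles the $b$ branch. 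This proves~\eqref{eq:mixture_post}.

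For the rate identity~\eqref{eq:mixture_rate}, I would write mutual information as an average of divergences between posteriors and the prior:
\[
I(X;T) = \sum_t P_T(t)\, D\big(P_{X\mid T}(\cdot\mid t)\,\|\,P_X\big).
\]
Splitting the sum over $\cT_\theta$ into the $a$-part and the $b$-part, and substituting the marginals and posteriors just computed, gives
\[
I(X;T_\theta) = \theta\sum_{t} P_{T_a}(t)\, D\big(P_{X\mid T_a}(\cdot\mid t)\,\|\,P_X\big) + (1-\theta)\sum_{t} P_{T_b}(t)\, D\big(P_{X\mid T_b}(\cdot\mid t)\,\|\,P_X\big),
\]
which equals $\theta I(X;T_a) + (1-\theta) I(X;T_b)$.

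An alternative argument uses an auxiliary random variable. Let $J\in\{a,b\}$ be the first coordinate of $T_\theta$. Then $J$ is a function of $T_\theta$ and is independent of $X$, so
\[
I(X;T_\theta) = I(X;J,T_\theta) = I(X;J) + I(X;T_\theta\mid J) = I(X;T_\theta\mid J),
\]
and conditioning on $J=a$ or $J=b$ reproduces the joint laws of $(X,T_a)$ and $(X,T_b)$.

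None of these steps is a real obstacle. The only care needed is with the degenerate cases $\theta\in\{0,1\}$. There, one branch of $\cT_\theta$ has zero probability and its posteriors are undefined, which is why the statement says "whenever defined". In~\eqref{eq:mixture_rate} that branch contributes zero weight, so the identity still holds. One could drop these symbols to respect the convention $P_T>0$, and this does not affect any quantity.
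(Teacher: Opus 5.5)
Your proof is correct. The posterior computation by Bayes' rule is the same as the paper's, and your alternative argument for the rate (conditioning on the label $J$ via the chain rule, with $I(X;J)=0$) is exactly how the paper proves the rate identity. Your primary rate argument, which writes $I(X;T)$ as the $P_T$-average of $D\big(P_{X\mid T}(\cdot\mid t)\,\|\,P_X\big)$, is an equally valid variant that reuses the marginals and posteriors you had already computed.
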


\begin{proof}
    Let $J$ be a random variable taking values in $\{a,b\}$, independent
    of $(X,Y)$, with $\mathbb{P}(J = a) = \theta$. Given $X$ and $J = j$,
    we generate $T_j$ through the channel $P_{T_j \mid X}$, and we let
    $T_\theta = (J, T_J)$. Then $T_\theta$ has the channel defined in~\eqref{eq:mixture},
    and it is an encoder of $X$, since it depends only on $X$ and on $J$,
    which is independent of $(X,Y)$.

    We first prove~\eqref{eq:mixture_post}. Let $j \in \{a,b\}$ with
    $\mathbb{P}(J = j) > 0$, and let $t$ be such that $P_{T_j}(t) > 0$.
    By~\eqref{eq:mixture}, we have that
    $P_{T_\theta}\big((j,t)\big) = \mathbb{P}(J = j)\, P_{T_j}(t)$, and
    hence, by Bayes' rule, for every $x \in \cX$, it holds that
    \begin{equation*}
        P_{X\mid T_\theta}\big(x \mid (j,t)\big)
        = \frac{P_X(x)\, \mathbb{P}(J = j)\, P_{T_j \mid X}(t \mid x)}
               {\mathbb{P}(J = j)\, P_{T_j}(t)}
        = P_{X\mid T_j}(x \mid t),
    \end{equation*}
    which proves~\eqref{eq:mixture_post}.
    
    We now prove~\eqref{eq:mixture_rate}. Since $J$ is independent of $X$,
    we have that $I(X;J) = 0$. Moreover, given $J = j$, the pair
    $(X, T_J)$ has the joint distribution induced by the encoder $T_j$, so
    that $I(X;T_J \mid J = j) = I(X;T_j)$. Hence, by the chain
    rule~\eqref{eq:chain_rule}, we obtain that
    \begin{equation*}
        I(X;T_\theta) = I(X;J) + I(X;T_J \mid J)
        = \theta\, I(X;T_a) + (1-\theta)\, I(X;T_b). \qedhere
    \end{equation*}
\end{proof}

\subsection{Decision problems}\label{subsec:decision}

In this section, we state the decision versions of the optimization
problems~\eqref{eq:IB_def}--\eqref{eq:Lambda_def}, which we use to prove
NP-hardness\footnote{We refer to \cite{GJ79} for the basics about NP-hardness.}.
Throughout, a probability distribution is called rational if all its
probabilities are rational numbers.

\medskip
\noindent
{\bf\IBT}

\noindent
\textsc{Input:} A rational probability distribution $P_{XY}$ on a finite
alphabet $\cX \times \cY$, and non-negative rational numbers $R$ and
$\vartheta$.

\noindent
\textsc{Question:} Does there exist an encoder $T$ of $X$ such that
$I(X;T) \le R$ and $I(Y;T) \ge \vartheta$?

\medskip
\noindent
{\bf\IBL$_\beta$}, for a fixed $\beta > 0$

\noindent
\textsc{Input:} A rational probability distribution $P_{XY}$ on a finite
alphabet $\cX \times \cY$, and a rational number $\tau$.

\noindent
\textsc{Question:} Does there exist an encoder $T$ of $X$ such that
$I(Y;T) - \beta I(X;T) \ge \tau$?

\medskip
\noindent
{\bf\PFT}

\noindent
\textsc{Input:} A rational probability distribution $P_{XY}$ on a finite
alphabet $\cX \times \cY$, and non-negative rational numbers $r$ and
$\vartheta$.

\noindent
\textsc{Question:} Does there exist an encoder $T$ of $X$ such that
$I(X;T) \ge r$ and $I(Y;T) \le \vartheta$?

\medskip
\noindent
{\bf\PFL$_\beta$}, for a fixed $\beta > 0$

\noindent
\textsc{Input:} A rational probability distribution $P_{XY}$ on a finite
alphabet $\cX \times \cY$, and a rational number $\tau$.

\noindent
\textsc{Question:} Does there exist an encoder $T$ of $X$ such that
$I(Y;T) - \beta I(X;T) \le \tau$?
\medskip

\noindent
We write \IBL{} for \IBL$_{1/2}$. We observe that the optima in~\eqref{eq:IB_def},
\eqref{eq:L_def} and~\eqref{eq:Lambda_def}, as well as the optimum
in~\eqref{eq:PF_def} whenever $r \le H(X)$, are attained (see,
e.g.,~\cite{AC20,WW75}). Moreover, if $r > H(X)$, then no encoder satisfies $I(X;T) \ge r$, and,
in this case, we adopt the convention $\PF(P_{XY},r) = +\infty$. Hence, the above decision problems are equivalent to
comparing the optimal values of the corresponding optimization problems
with a threshold. Namely, \IBT{} asks whether
$\IB(P_{XY},R) \ge \vartheta$, \IBL$_\beta$ asks whether
$L_\beta(P_{XY}) \ge \tau$, and, similarly, \PFT{} and \PFL$_\beta$ ask
whether $\PF(P_{XY},r) \le \vartheta$ and $\Lambda_\beta(P_{XY}) \le \tau$,
respectively.

We prove the NP-hardness of these problems by reductions from the
well-known \clique{} decision problem~\cite{Karp72}, which we now recall. A graph
$G$ is a pair $(V(G),E(G))$, where $V(G)$ is a finite set of vertices and
$E(G)$ is a set of unordered pairs $\{u,v\}$ of distinct vertices
$u,v \in V(G)$, called edges. A clique of $G$ is a subset
$C \subseteq V(G)$ such that $\{u,v\} \in E(G)$ for all distinct
$u,v \in C$. The size of a clique $C$ is its cardinality $|C|$, and a
clique of size $k$ is called a $k$-clique. The clique number of $G$ is
\begin{equation}\label{eq:clique_number_def}
    \omega(G) = \max\{ |C| : C \subseteq V(G) \text{ is a clique of } G \}.
\end{equation}

\medskip
\noindent
\clique

\noindent
\textsc{Input:} A graph $G$ and a positive integer $k$.

\noindent
\textsc{Question:} Does $G$ contain a $k$-clique, that is, is
$\omega(G) \ge k$?
\medskip

\noindent
It is well known that \clique{} is NP-complete~\cite{Karp72}.

\section{The extended graph and the encoders \texorpdfstring{$T_0$}{T0} and \texorpdfstring{$T_1$}{T1}}\label{sec:construction}

In this section, we describe the construction shared by all our
reductions: the extended graph $\G$ of an instance $(G,k)$ of \clique{},
the random variable $X$, which is uniformly distributed on the oriented
edges of $\G$, and the two encoders $T_0$ and $T_1$ of $X$. The random
variable $Y$, and hence the joint distribution $P_{XY}$, will be defined
differently in the reductions for the information bottleneck (Section~\ref{sec:IB}) and for
the privacy funnel (Section~\ref{sec:PF}). However, since the encoders $T_0$ and $T_1$
are channels from $X$, the results of this section do not depend on the random variable $Y$.

\subsection{Oriented edges and cliques}

Let $G$ be a graph with at least one edge, and let
\begin{equation}\label{eq:oriented_edges}
    A(G) = \big\{ (u,v) \in V(G) \times V(G) : \{u,v\} \in E(G) \big\}
\end{equation}
denote the set of the $2|E(G)|$ oriented edges of $G$. For a clique $C$
of $G$ of size $j \ge 2$, let
\begin{equation}\label{eq:oriented_edges_clique}
    A(C) = \{ (u,v) \in C \times C : u \ne v \} \subseteq A(G)
\end{equation}
be the set of its $j(j-1)$ oriented edges, and let $U_C$ denote the
probability distribution on $A(G)$ that is uniform on $A(C)$, that is,
\begin{equation}\label{eq:U_C}
    U_C(x) =
    \begin{cases}
        \dfrac{1}{j(j-1)}, & x \in A(C),\\[6pt]
        0, & \text{otherwise}.
    \end{cases}
\end{equation}
Clearly, $H(U_C) = \log\big(j(j-1)\big)$.

\subsection{The extended graph}\label{subsec:extended_graph}

Let $(G,k)$ be an instance of \clique{}, and let
\begin{equation}\label{eq:n_N_B}
    n = |V(G)|, \qquad N = |E(G)|, \qquad B = \binom{k}{2} = \frac{k(k-1)}{2}.
\end{equation}
We assume that $G$ is represented by its adjacency matrix, so that the
size of $(G,k)$ is at least $n$. For the sake of simplicity, we restrict
to instances of \clique{} satisfying
\begin{equation}\label{eq:assumption_on_G}
    k \ge 4 \qquad\text{and}\qquad N \ge B.
\end{equation}
This does not affect the NP-hardness. Indeed, if $k \le 3$, \clique{} can
be solved in $O(n^3)$ time by examining all subsets of at most three
vertices. On the other hand, if $N < B$, then $G$ has no $k$-clique, since
a $k$-clique has $B$ edges.

Let $m = k - 1 \ge 3$. We construct the extended graph $\G$ from $G$ as
follows: for each edge $e = \{u,v\} \in E(G)$, we add a set $S_e$ of
$m - 2$ new vertices, and we add all the edges needed to make
$C_e = \{u,v\} \cup S_e$ a clique of size $m$. No other vertices or edges
are added. Throughout the paper, we let
\begin{equation}\label{eq:s_mu_M}
    s = m(m-1), \qquad \mu = \frac1s, \qquad M = |A(\G)|.
\end{equation}

The next lemma collects the properties of $\G$ on which our reductions
rely.
\begin{lemma}\label{lemma:gadget}
    The graph $\G$ satisfies the following properties.
    \begin{enumerate}
        \item For every $e \in E(G)$, the only edge of $G$ in $E(C_e)$ is
        $e$. Moreover, the collection $\{E(C_e)\}_{e \in E(G)}$ is a
        partition of $E(\G)$. Consequently, every oriented edge of $\G$
        belongs to $A(C_e)$ for exactly one $e \in E(G)$.
        \item The number of oriented edges of $\G$ is 
        \begin{equation}\label{eq:M_A_G_def}
            M = |A(\G)| = Nm(m-1) = Ns.
        \end{equation}
        \item $\omega(\G) = \max\{\omega(G), m\}$.
        \item $|V(\G)| \le n + n^3$, $M \le n^4$, and $\G$ can be
        constructed in time polynomial in $n$.
    \end{enumerate}
\end{lemma}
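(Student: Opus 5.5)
We verify the four items in order, relying only on the construction: every vertex of $S_e$ is new, and its neighbourhood in $\G$ is exactly $C_e \setminus \{w\}$ for that vertex $w$.

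\emph{Item 1.} Fix $e=\{u,v\}$. An edge of $E(C_e)$ either has an endpoint in $S_e$, and then it is not an edge of $G$ because that endpoint is new, or it is $\{u,v\}=e$. So $e$ is the only edge of $G$ in $E(C_e)$. For the partition, note that every edge of $\G$ is either an edge of $G$ or an edge added while building some $C_e$; an edge of $G$ lies in $E(C_e)$ for $e$ itself, and an added edge lies in the $C_e$ for which it was added. The sets $E(C_e)$ are pairwise disjoint. Indeed, $C_e \cap C_{e'} = e \cap e'$ for $e\ne e'$, since the sets $S_e$ are disjoint and new, and $|e\cap e'|\le 1$. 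Hence no edge has both endpoints in both cliques. Since every oriented edge $(a,b)$ determines the edge $\{a,b\}$, the statement about $A(C_e)$ follows.

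\emph{Item 2.} By Item 1, $|E(\G)|=\sum_e |E(C_e)| = N\binom{m}{2}$. Each edge gives two oriented edges, so $M = N m(m-1) = Ns$.

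\emph{Item 3.} This is the step requiring an actual argument. First, $\omega(\G)\ge \max\{\omega(G),m\}$, because $G$ is a subgraph of $\G$ and each $C_e$ is an $m$-clique; here $N\ge 1$ by the assumption $N\ge B$. For the reverse inequality, take a clique $C$ of $\G$. If $C\subseteq V(G)$, then all of its edges are edges of $\G$ between original vertices. The only added edges between original vertices would be $\{u,v\}$ in $C_e$, and these are already in $E(G)$, so $C$ is a clique of $G$ and $|C|\le\omega(G)$. Otherwise $C$ contains some $w\in S_e$. Since $C\subseteq \{w\}\cup N(w)=C_e$, we get $|C|\le m$.

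\emph{Item 4.} We have $|V(\G)| = n + N(m-2) \le n + n^2\cdot n = n+n^3$, using $N\le n^2$ and $m\le k\le n$. We may assume $k\le n$, since otherwise the instance is trivially a no-instance. Likewise $M = Nm(m-1)\le n^2\cdot n^2 = n^4$. The construction lists the $O(n^3)$ new vertices and the $N\binom m2=O(n^4)$ edges, which can be written as an adjacency matrix of polynomial size in polynomial time.

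The main obstacle is Item 3, specifically making sure that no mixed clique using both new and original vertices exceeds size $m$. This is settled by the neighbourhood observation $N(w)\cup\{w\}=C_e$ for new vertices $w$.
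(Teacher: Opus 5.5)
Your proof is correct and follows the paper's argument essentially step by step, including the neighbourhood observation $N(w)\cup\{w\}=C_e$ for $w\in S_e$ in Item~3. The differences are cosmetic. You get disjointness of the sets $E(C_e)$ from $C_e\cap C_{e'}=e\cap e'$, whereas the paper splits into cases according to whether the edge lies in $E(G)$. You also justify $k\le n$ by discarding trivial instances; the paper derives it directly from the standing assumption $\binom{k}{2}=B\le N\le\binom{n}{2}$, which makes your extra assumption unnecessary.
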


\begin{proof}
    1. Let $e \in E(G)$. Since $C_e \cap V(G) = e$, the only edge of $G$
    with both endpoints in $C_e$ is $e$, which proves the first claim. We
    now show that $\{E(C_e)\}_{e \in E(G)}$ is a partition of $E(\G)$. By
    construction, every edge of $\G$ belongs to $E(C_e)$ for some
    $e \in E(G)$. Thus, it suffices to show that this $e$ is unique. Let
    $x = \{u,v\} \in E(\G)$. If $x \in E(G)$, then, by the first claim,
    $x \in E(C_e)$ if and only if $x = e$. If $x \notin E(G)$, then $x$ was
    added in the construction of $\G$, and at least one of its endpoints,
    say $u$, belongs to $S_e$ for some $e \in E(G)$, so that
    $x \in E(C_e)$. Since the sets $S_e$ are pairwise disjoint,
    $u \notin C_{e'}$ for every $e' \ne e$, and hence $x \notin E(C_{e'})$ for every $e\neq e'$.
    In both cases, $x$ belongs to exactly one set $E(C_e)$. Therefore, each edge of $\G$ lies in exactly one of the sets
    $E(C_e)$, with $e \in E(G)$. Hence,  $\{E(C_e)\}_{e\in E(G)}$ forms a partition of $E(\G)$. Finally, since
    $(u,v) \in A(C_e)$ if and only if $\{u,v\} \in E(C_e)$, every oriented
    edge of $\G$ belongs to $A(C_e)$ for exactly one $e \in E(G)$.

    2. By property~1, $E(\G)$ is the disjoint union of the $N$ sets
    $E(C_e)$, each of cardinality $\binom{m}{2}$. Hence, the total number of oriented edges is
    $|A(\G)| = 2N\binom{m}{2} = Nm(m-1)=Ns$.

    3. We first show that $\omega(\G) \le \max\{\omega(G), m\}$. Let $K$
    be a clique of $\G$. If $K$ contains a vertex $z \in S_e$ for some
    $e \in E(G)$, then, by construction of $\G$, all the neighbours of $z$ belong
    to $C_e$. Therefore, $K \subseteq C_e$ and $|K| \le m$. Otherwise,  if $K$ does not contain a vertex $z \in S_e$ for any $e\in E(G)$, then we have that
    $K \subseteq V(G)$. Therefore, $K$ is a clique of $G$, and $|K| \le \omega(G)$.
    Conversely, every clique of $G$ is a clique of $\G$, and each $C_e$ is
    a clique of $\G$ of size $m$. Thus, $\omega(\G) \ge \max\{\omega(G), m\}$. Putting everything together we have that    $\omega(\G) = \max\{\omega(G), m\}$.

    4. We have that $N \le \binom{n}{2} \le n^2$, and,
    by~\eqref{eq:assumption_on_G}, $\binom{k}{2} = B \le N \le \binom{n}{2}$,
    which implies $k \le n$ and hence $m < n$. Since $\G$ is obtained from
    $G$ by adding $m-2$ new vertices for each edge of $G$, together with
    the edges of the cliques $C_e$, we get that
    $|V(\G)| = n + N(m-2) \le n + n^3$, and, by property~2,
    $M = Nm(m-1) \le n^4$. Therefore, $\G$ can be constructed in time
    polynomial in $n$.
\end{proof}

\subsection{The random variable \texorpdfstring{$X$}{X} and the encoder \texorpdfstring{$T_0$}{T0}}

In all our reductions, the random variable $X$ is uniformly distributed
on $\cX = A(\G)$, that is,
\begin{equation}\label{eq:P_X}
    P_X(x) = \frac1M = \frac{\mu}{N}, \qquad \forall x \in A(\G).
\end{equation}
By property~1 of Lemma~\ref{lemma:gadget}, we can define the
deterministic encoder $T_0$ of $X$, taking values in $E(G)$, by
\begin{equation}\label{eq:T_0_def}
    P_{T_0\mid X}(e \mid x) =
    \begin{cases}
        1, & \text{if } x \in A(C_e),\\
        0, & \text{otherwise},
    \end{cases}
    \qquad \forall x \in A(\G),\ e \in E(G).
\end{equation}
In other words, $T_0$ maps each oriented edge $x$ to the unique edge
$e \in E(G)$ such that $x \in A(C_e)$, that is, it reveals the clique
$C_e$ containing $x$. The following lemma computes the posteriors and the rate of $T_0$.

\begin{lemma}\label{lemma:T_0}
    The encoder $T_0$ defined in~\eqref{eq:T_0_def} satisfies
    \begin{equation}\label{eq:T_0_posteriors}
        P_{T_0}(e) = \frac1N
        \qquad\text{and}\qquad
        P_{X\mid T_0}(\cdot \mid e) = U_{C_e},
        \qquad \forall e \in E(G).
    \end{equation}
    Consequently, $H(X \mid T_0) = \log s$ and
    \begin{equation}\label{eq:T_0_rate}
        I(X;T_0) = \log N.
    \end{equation}
\end{lemma}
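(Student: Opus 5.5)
The plan is a direct Bayes computation that rests only on property~1 and property~2 of Lemma~\ref{lemma:gadget}.

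First, $P_{T_0}(e)$. Because $T_0$ is deterministic,
\[
P_{T_0}(e)=\sum_{x\in A(C_e)} P_X(x).
\]
By property~1, the sets $A(C_e)$ are well defined, and each has $|A(C_e)| = m(m-1) = s$ elements. Combining this with $P_X(x) = \mu/N = 1/(Ns)$ from~\eqref{eq:P_X} gives
\[
P_{T_0}(e) = s\cdot\frac{1}{Ns} = \frac1N.
\]
In particular every $e\in E(G)$ has positive probability, consistent with the convention that $P_T(t)>0$.

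Next, the posteriors. By Bayes' rule,
\[
P_{X\mid T_0}(x\mid e) = \frac{P_X(x)\,P_{T_0\mid X}(e\mid x)}{P_{T_0}(e)}.
\]
If $x\in A(C_e)$, the right-hand side equals $\frac{1/(Ns)}{1/N} = 1/s$. Otherwise it equals $0$. Since $s = m(m-1)$ and $C_e$ has size $m$, this is exactly $U_{C_e}$ as defined in~\eqref{eq:U_C}.

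Finally, the entropies. Conditioning on each value of $T_0$,
\[
H(X\mid T_0) = \sum_e \tfrac1N\, H(U_{C_e}) = \log\big(m(m-1)\big) = \log s.
\]
Since $X$ is uniform on $M = Ns$ points (property~2), $H(X) = \log(Ns)$. Therefore
\[
I(X;T_0) = H(X) - H(X\mid T_0) = \log(Ns) - \log s = \log N.
\]

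No step is a real obstacle. The only point requiring care is that property~1 of Lemma~\ref{lemma:gadget} is used twice: it makes $T_0$ a well-defined deterministic map, and it ensures each fiber of $T_0$ is exactly $A(C_e)$ with cardinality $s$.
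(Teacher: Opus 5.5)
Your proposal is correct and matches the paper's proof step for step. Both compute $P_{T_0}(e)=|A(C_e)|/M=s/(Ns)=1/N$, apply Bayes' rule to get the posterior $U_{C_e}$, and then use $H(X)=\log M=\log(Ns)$ and $H(X\mid T_0)=\log s$ to conclude $I(X;T_0)=\log N$.
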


\begin{proof}
    We first prove~\eqref{eq:T_0_posteriors}. For every $e \in E(G)$, the
    set $A(C_e)$ contains $s = m(m-1)$ oriented edges, each of probability
    $1/M$ under $P_X$. Hence, by~\eqref{eq:T_0_def} and property~2 of
    Lemma~\ref{lemma:gadget}, we have that
    \begin{equation*}
        P_{T_0}(e) = \sum_{x \in A(\G)} P_X(x)\, P_{T_0\mid X}(e \mid x)
        = \frac{|A(C_e)|}{M} = \frac{s}{Ns} = \frac1N.
    \end{equation*}
    Moreover, by Bayes' rule, $P_{X\mid T_0}(x \mid e) = N/M = 1/s$ if
    $x \in A(C_e)$, and $P_{X\mid T_0}(x \mid e) = 0$ otherwise, that is,
    $P_{X\mid T_0}(\cdot \mid e) = U_{C_e}$, where $U_{C_e}$ is the probability distribution on $A(\G)$ that is uniform on $A(C_e)$  (cf.~\eqref{eq:U_C}).

    We now compute the rate of $T_0$. Since each $U_{C_e}$ is uniform on
    $s$ oriented edges, by~\eqref{eq:conditional_entropy} we get that
    $H(X \mid T_0) = \sum_{e \in E(G)} \frac1N \log s = \log s$. Moreover, since $X$
    is uniform on the $M$ elements of $A(\G)$, it follows that
    $I(X;T_0) = H(X) - H(X \mid T_0) = \log M - \log s = \log N$, where the last equality holds by property~2 of Lemma~\ref{lemma:gadget}.
\end{proof}

\subsection{The encoder \texorpdfstring{$T_1$}{T1}}\label{subsec:T_1}

We now assume that $G$ contains a $k$-clique $K$, and we define an
encoder $T_1$ that uses $K$. The encoder depends on a parameter
$p \in (0,1)$, which will be chosen differently for the information
bottleneck and for the privacy funnel. Let $T_1$ be the encoder taking
values in $\{\star\} \cup E(G)$, where $\star$ is a new symbol, defined by
\begin{equation}\label{eq:T_1_def}
    P_{T_1\mid X}(\star \mid x) =
    \begin{cases}
        p, & \text{if } x \in A(K),\\
        0, & \text{otherwise},
    \end{cases}
    \qquad
    P_{T_1\mid X}(e \mid x) = \big(1 - P_{T_1\mid X}(\star \mid x)\big)\, P_{T_0\mid X}(e \mid x),
\end{equation}
for every $x \in A(\G)$ and $e \in E(G)$. In other words, if
$X \notin A(K)$, then $T_1 = T_0$, while, if $X \in A(K)$, then
$T_1 = \star$ with probability $p$ and $T_1 = T_0$ with probability
$1-p$. Thus, $T_1$ differs from $T_0$ only on the oriented edges of the
$k$-clique $K$. Let
\begin{equation}\label{eq:mu_eps_lambda}
    \varepsilon = 2\mu p \qquad\text{and}\qquad \lambda = \frac{B\varepsilon}{N}.
\end{equation}
Note that $\varepsilon < 1$, since $\mu \le 1/6$ and $p < 1$. For every
$e = \{u,v\} \in E(K)$, let $A(e) = \{(u,v),(v,u)\}$ denote the set of the
two orientations of $e$, and let $V_e$ be the probability distribution on
$A(\G)$ defined by
\begin{equation}\label{eq:V_e_def}
    V_e(x) =
    \begin{cases}
        a_0 = \dfrac{\mu}{1-\varepsilon}, & \text{if } x \in A(C_e) \setminus A(e),\\[8pt]
        a_1 = \dfrac{\mu(1-p)}{1-\varepsilon}, & \text{if } x \in A(e),\\[8pt]
        0, & \text{otherwise}.
    \end{cases}
\end{equation}
The distribution $V_e$ is a slight perturbation of the uniform distribution $U_{C_e}$ \eqref{eq:T_0_posteriors}. Indeed, the
probability of the two orientations of $e$ is slightly decreased, and the
probability of the other oriented edges of $C_e$ is slightly increased. The next lemma computes the output distribution and the posteriors of
$T_1$, which we need to compare $T_1$ with $T_0$.

\begin{lemma}\label{lemma:T_1}
    The encoder $T_1$ defined in~\eqref{eq:T_1_def} satisfies
    \begin{equation}\label{eq:T_1_distribution}
        P_{T_1}(t) =
        \begin{cases}
            \lambda, & t = \star,\\
            (1-\varepsilon)/N, & t \in E(K),\\
            1/N, & t \in E(G) \setminus E(K),
        \end{cases}
        \qquad
        P_{X\mid T_1}(\cdot \mid t) =
        \begin{cases}
            U_K, & t = \star,\\
            V_t, & t \in E(K),\\
            U_{C_t}, & t \in E(G) \setminus E(K).
        \end{cases}
    \end{equation}
\end{lemma}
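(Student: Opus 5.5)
This is a direct computation with Bayes' rule, using the uniform prior $P_X(x)=1/M=\mu/N$ from~\eqref{eq:P_X} and the partition structure of property~1 of Lemma~\ref{lemma:gadget}. The one structural fact needed first is that $A(K)$ is the disjoint union of the sets $A(e)$, $e\in E(K)$. Indeed, $K\subseteq V(G)$, so every oriented edge of $K$ is an orientation of an edge $e\in E(K)\subseteq E(G)$. By property~1, $A(e)\subseteq A(C_e)$ and $A(C_e)\cap A(K)=A(e)$, because the only edge of $G$ in $E(C_e)$ is $e$. Consequently, for $e\in E(G)\setminus E(K)$ the set $A(C_e)$ is disjoint from $A(K)$, while for $e\in E(K)$ exactly the two oriented edges in $A(e)$ lie in $A(K)$. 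Also $|A(K)|=k(k-1)=2B$.

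\emph{Output distribution.} For $t=\star$,
\[
P_{T_1}(\star)=\sum_{x\in A(K)}\frac{\mu}{N}\,p=\frac{2B\mu p}{N}=\frac{B\varepsilon}{N}=\lambda .
\]
For $t=e\in E(G)\setminus E(K)$, we have $P_{T_1\mid X}(e\mid x)=P_{T_0\mid X}(e\mid x)$ for every $x$, since $A(C_e)\cap A(K)=\emptyset$. Hence $P_{T_1}(e)=P_{T_0}(e)=1/N$ by Lemma~\ref{lemma:T_0}. For $e\in E(K)$, sum over $A(C_e)$: the $s-2$ oriented edges outside $A(e)$ contribute weight $1$ each, and the two in $A(e)$ contribute $1-p$ each. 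This gives
\[
P_{T_1}(e)=\frac{\mu}{N}\bigl(s-2+2(1-p)\bigr)=\frac{\mu}{N}(s-2p)=\frac{1-2\mu p}{N}=\frac{1-\varepsilon}{N},
\]
using $\mu s=1$.

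\emph{Posteriors.} Apply $P_{X\mid T_1}(x\mid t)=P_X(x)P_{T_1\mid X}(t\mid x)/P_{T_1}(t)$ in each case.
\begin{itemize}
\item For $t=\star$, the numerator is $\mu p/N$ on $A(K)$ and $0$ elsewhere. It is constant on $A(K)$ and normalized, so the posterior equals $U_K$.
\item For $t\in E(G)\setminus E(K)$, the joint is unchanged from $T_0$, so the posterior is $U_{C_t}$.
\item For $t=e\in E(K)$, dividing by $(1-\varepsilon)/N$ gives $\mu/(1-\varepsilon)=a_0$ on $A(C_e)\setminus A(e)$, $\mu(1-p)/(1-\varepsilon)=a_1$ on $A(e)$, and $0$ elsewhere. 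This is exactly $V_e$.
\end{itemize}

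No step is a real obstacle. The only point needing care is the intersection claim $A(C_e)\cap A(K)=A(e)$, which relies on $S_e$ consisting of new vertices and on property~1. As a consistency check, the computed masses sum to $\lambda+B(1-\varepsilon)/N+(N-B)/N=1$.
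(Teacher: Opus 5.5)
Your proposal is correct and follows the paper's proof essentially line by line. It uses the same Bayes computation with $P_X(x)=\mu/N$, the same key fact $A(C_e)\cap A(K)=A(e)$ for $e\in E(K)$ and $\emptyset$ otherwise (from property~1 of Lemma~\ref{lemma:gadget}), and the same three cases $\star$, $E(K)$, $E(G)\setminus E(K)$. Your first statement of $A(C_e)\cap A(K)=A(e)$ should be explicitly restricted to $e\in E(K)$; this is cosmetic, since your next sentence handles $e\notin E(K)$ correctly.
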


\begin{proof}
    Recall that $P_X(x) = 1/M = \mu/N$ for every $x \in A(\G)$,
    by~\eqref{eq:P_X}. We consider the cases $T_1=\star$ and $T_1=e$ with $e\in E(G)$,  
    separately.

    We first consider the case $T_1=\star$. Since $K$ is a clique of size
    $k$, it has $k(k-1) = 2B$ oriented edges. Hence, by the definition~\eqref{eq:T_1_def} of the encoder $T_1$, and
    since $\varepsilon = 2\mu p$ and $\lambda = \frac{B\varepsilon}{N}$
    by~\eqref{eq:mu_eps_lambda}, we have that
    \begin{equation*}
        P_{T_1}(\star)
        = \sum_{x \in A(\G)} P_X(x)\, P_{T_1 \mid X}(\star \mid x)
        = \sum_{x \in A(K)} \frac{\mu}{N}\, p
        = \frac{2B\mu p}{N}
        = \frac{B\varepsilon}{N} = \lambda,
    \end{equation*}
    and, by Bayes' rule,
    $$P_{X\mid T_1}(x \mid \star) =\frac{P_X(x)\, P_{T_1 \mid X}(\star \mid x)}{P_{T_1}(\star)}= \frac{\mu p/N}{2B\mu p/N} = \frac{1}{k(k-1)}$$
    for $x \in A(K)$, and $P_{X\mid T_1}(x \mid \star) = 0$ otherwise. By
    definition~\eqref{eq:U_C} of $U_K$, this means that
    $P_{X\mid T_1}(\cdot \mid \star) = U_K$.

    We now consider the case $T_1=e$ for $e \in E(G)$. By~\eqref{eq:T_0_def}
    and~\eqref{eq:T_1_def}, we have that
    \begin{equation}\label{eq:T_1_channel_e}
        P_{T_1\mid X}(e \mid x) =
        \begin{cases}
            1 - p, & \text{if } x \in A(C_e) \cap A(K),\\
            1, & \text{if } x \in A(C_e) \setminus A(K),\\
            0, & \text{if } x \notin A(C_e).
        \end{cases}
    \end{equation}
    Thus, $T_1$ differs from $T_0$ only on the oriented edges in
    $A(C_e) \cap A(K)$. We claim that
    \begin{equation}\label{eq:A_Ce_cap_A_K}
        A(C_e) \cap A(K) =
        \begin{cases}
            A(e), & \text{if } e \in E(K),\\
            \emptyset, & \text{otherwise}.
        \end{cases}
    \end{equation}
    Indeed, let $x = (u,v) \in A(K)$. Since $K$ is a clique of $G$,
    $\{u,v\}$ is an edge of $G$, and, by property~1 of
    Lemma~\ref{lemma:gadget}, it holds that  $\{u,v\} \in E(C_e)$ if and only if
    $\{u,v\} = e$. Hence, $A(C_e) \cap A(K)$ consists of the orientations
    of $e$ that belong to $A(K)$, which are both orientations if
    $e \in E(K)$, and none otherwise, proving~\eqref{eq:A_Ce_cap_A_K}.

    Now, if $e \in E(G) \setminus E(K)$, then, by~\eqref{eq:A_Ce_cap_A_K}
    and~\eqref{eq:T_1_channel_e},
    $P_{T_1\mid X}(e \mid \cdot) = P_{T_0\mid X}(e \mid \cdot)$, and, by
    Lemma~\ref{lemma:T_0}, $P_{T_1}(e) = 1/N$ and
    $P_{X\mid T_1}(\cdot \mid e) = U_{C_e}$. If $e \in E(K)$, then, by
    \eqref{eq:A_Ce_cap_A_K}, it holds that $A(C_e) \cap A(K) = A(e)$. Notice that  $A(e)$  contains $2$
    elements, while $A(C_e) \setminus A(e)$ contains $s - 2$ elements. Hence,
    by~\eqref{eq:T_1_channel_e}, and using $\mu s = 1$, we obtain that
    \begin{equation*}
        P_{T_1}(e) =\sum_{x \in A(\G)} P_X(x)\, P_{T_1 \mid X}(e \mid x)= \frac{\mu}{N}\big( (s-2) + 2(1-p) \big)
        = \frac{1 - 2\mu p}{N} = \frac{1-\varepsilon}{N}.
    \end{equation*}
    Moreover, applying  Bayes' rule, one can see that the following equalities hold:
    $P_{X\mid T_1}(x \mid e) = \frac{\mu/N}{(1-\varepsilon)/N} = a_0$ if
    $x \in A(C_e) \setminus A(e)$,
    $P_{X\mid T_1}(x \mid e) = \frac{\mu(1-p)/N}{(1-\varepsilon)/N} = a_1$ if
    $x \in A(e)$, and $P_{X\mid T_1}(x \mid e) = 0$ otherwise;  that is,
    $P_{X\mid T_1}(\cdot \mid e) = V_e$ by~\eqref{eq:V_e_def}.
\end{proof}
Lemma~\ref{lemma:T_1} shows that the encoder $T_1$ replaces part of the posteriors
$U_{C_e}$ of $T_0$ with the  posterior $U_K$. In doing so, however, it perturbs the posterior $U_{C_e}$ into
$V_e$ (cf. \eqref{eq:V_e_def}) for every $e \in E(K)$.
The next lemma shows that the perturbed posteriors $V_e$ are close to
$U_{C_e}$, namely, their divergence from $U_{C_e}$ is quadratic in
$\varepsilon$.

\begin{lemma}\label{lemma:perturbation}
    For every $e \in E(K)$, we have that
    $H(V_e) = \log s - D(V_e \,\|\, U_{C_e})$. Moreover,
    the Kullback--Leibler divergence $D(V_e \,\|\, U_{C_e})$ does not depend on $e$, and satisfies
    \begin{equation}\label{eq:D_V_bound}
        D(V_e \,\|\, U_{C_e}) \le \frac{\chi^2(V_e \,\|\, U_{C_e})}{\ln 2}
        \le \frac{\varepsilon^2}{2\mu(1-\varepsilon)^2 \ln 2}.
    \end{equation}
\end{lemma}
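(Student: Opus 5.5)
Since $U_{C_e}$ is uniform on $A(C_e)$, a set of $s$ elements, and $\supp(V_e) = A(C_e)$, we expand the divergence directly. For every $x \in A(C_e)$ we have $\log\frac{V_e(x)}{U_{C_e}(x)} = \log V_e(x) + \log s$. Summing against $V_e$ gives
\begin{equation*}
D(V_e \,\|\, U_{C_e}) = -H(V_e) + \log s,
\end{equation*}
which is the first identity. Independence from $e$ is immediate from the form of $V_e$. By \eqref{eq:V_e_def}, $V_e$ takes the value $a_0$ on $s-2$ points and $a_1$ on $2$ points, with $a_0, a_1$ depending only on $\mu, p, \varepsilon$. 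Both divergences are therefore functions of these numbers alone.

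The first inequality in \eqref{eq:D_V_bound} is Proposition~\ref{prop:KL_chi2}, applicable because $\supp(V_e) = \supp(U_{C_e})$. For the second, we compute $\chi^2$ explicitly using $\mu = 1/s$ and $\varepsilon = 2\mu p$. On the $s-2$ points of $A(C_e)\setminus A(e)$,
\begin{equation*}
a_0 - \mu = \frac{\mu\varepsilon}{1-\varepsilon}.
\end{equation*}
On the two points of $A(e)$,
\begin{equation*}
a_1 - \mu = \frac{\mu(\varepsilon - p)}{1-\varepsilon}.
\end{equation*}
Dividing by $U_{C_e}(x) = \mu$ and summing,
\begin{equation*}
\chi^2(V_e\,\|\,U_{C_e}) = \frac{\mu}{(1-\varepsilon)^2}\Big[(s-2)\varepsilon^2 + 2(p-\varepsilon)^2\Big].
\end{equation*}

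It remains to simplify the bracket, which is the only step requiring care. We substitute $p = \varepsilon/(2\mu) = s\varepsilon/2$, so that $p - \varepsilon = \varepsilon(s-2)/2$. The bracket becomes
\begin{equation*}
\varepsilon^2 (s-2)\Big[1 + \tfrac{s-2}{2}\Big] = \frac{\varepsilon^2 s(s-2)}{2}.
\end{equation*}
Hence
\begin{equation*}
\chi^2 = \frac{\varepsilon^2 (s-2)}{2(1-\varepsilon)^2} \le \frac{\varepsilon^2 s}{2(1-\varepsilon)^2} = \frac{\varepsilon^2}{2\mu(1-\varepsilon)^2}.
\end{equation*}
Dividing by $\ln 2$ yields \eqref{eq:D_V_bound}.
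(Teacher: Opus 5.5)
Your proof is correct and follows the same route as the paper: the identity $D(V_e\|U_{C_e}) = \log s - H(V_e)$ from the uniformity of $U_{C_e}$, Proposition~\ref{prop:KL_chi2} for the first inequality, and an explicit computation of $\chi^2$ split over the $s-2$ oriented edges of $A(C_e)\setminus A(e)$ and the two orientations of $e$. Your exact value $\chi^2 = \varepsilon^2(s-2)/\bigl(2(1-\varepsilon)^2\bigr)$ is the paper's $\varepsilon^2(1/2-\mu)/\bigl(\mu(1-\varepsilon)^2\bigr)$, written in terms of $s$ instead of $\mu$.
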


\begin{proof}
    We first prove the identity for $H(V_e)$. Since
    $\supp(V_e) \subseteq A(C_e)$ and $U_{C_e}(x) = \mu = 1/s$ for every
    $x \in A(C_e)$, we have that
    \begin{equation*}
        D(V_e \,\|\, U_{C_e}) = \sum_{x \in \supp(V_e)} V_e(x) \log\frac{V_e(x)}{\mu}
        = -H(V_e) + \log\frac1\mu = \log s - H(V_e).
    \end{equation*}
    By~\eqref{eq:V_e_def}, $V_e$ takes the value $a_0$ on $s-2$ oriented
    edges and the value $a_1$ on $2$ oriented edges, and these values do
    not depend on $e$. Hence, $H(V_e)$, and thus $D(V_e \,\|\, U_{C_e})$,
    does not depend on $e$.

    We now prove~\eqref{eq:D_V_bound}. The first inequality  in~~\eqref{eq:D_V_bound} follows from
    Proposition~\ref{prop:KL_chi2}, since
    $\supp(V_e) \subseteq A(C_e) = \supp(U_{C_e})$. For the second one, we
    compute $\chi^2(V_e \,\|\, U_{C_e})$. Using $\mu p = \varepsilon/2$, we
    have that
    \begin{equation}\label{eq:a_0_a_1_}
        a_0 - \mu = \frac{\mu\varepsilon}{1-\varepsilon}
        \qquad\text{and}\qquad
        a_1 - \mu = \frac{\mu(1-p) - \mu(1-\varepsilon)}{1-\varepsilon}
        = -\frac{\varepsilon(1/2 - \mu)}{1-\varepsilon}.
    \end{equation}
    Since $U_{C_e}(x)=\mu$ for every $x\in A(C_e)$, the
definition in~\eqref{eq:divergences} gives
\[
\chi^2(V_e\|U_{C_e})
=
\sum_{x\in A(C_e)}\frac{(V_e(x)-\mu)^2}{\mu}.
\]
We split this sum into two groups. The $s-2$ oriented edges
in $A(C_e)\setminus A(e)$ each contribute
$(a_0-\mu)^2/\mu$, while the two orientations of $e$,
which form $A(e)$, each contribute $(a_1-\mu)^2/\mu$.
Thus,
\[
\chi^2(V_e\|U_{C_e})
=
\frac{(s-2)(a_0-\mu)^2+2(a_1-\mu)^2}{\mu}.
\]
Substituting the expressions for $a_0-\mu$ and $a_1-\mu$
from~\eqref{eq:a_0_a_1_}, we obtain
    \begin{align}
        \chi^2(V_e \,\|\, U_{C_e})
        &=\sum_{x \in A(C_e)} \frac{\big(V_e(x) - \mu\big)^2}{\mu}\nonumber\\
        &=\frac{1}{\mu}\Big( (s-2)\frac{\mu^2\varepsilon^2}{(1-\varepsilon)^2}
        + 2\frac{\varepsilon^2(1/2-\mu)^2}{(1-\varepsilon)^2} \Big)
        = \frac{\varepsilon^2}{\mu(1-\varepsilon)^2}
        \Big( \mu - 2\mu^2 + \frac12 - 2\mu + 2\mu^2 \Big)\nonumber\\
        &= \frac{\varepsilon^2(1/2 - \mu)}{\mu(1-\varepsilon)^2}
        \le \frac{\varepsilon^2}{2\mu(1-\varepsilon)^2},\label{eq:chi_bound}
    \end{align}
    where the third equality uses $\mu s = 1$.  
    Dividing~\eqref{eq:chi_bound} by $\ln 2>0$, we obtain
\[
\frac{\chi^2(V_e\|U_{C_e})}{\ln 2}
\le
\frac{\varepsilon^2}{2\mu(1-\varepsilon)^2\ln 2},
\]
which is the second inequality in~\eqref{eq:D_V_bound}.
Together with the first inequality, already established
using Proposition~1, this completes the proof.
\end{proof}

In what follows, we denote by $D_V$ the common value of
$D(V_e \,\|\, U_{C_e})$ for $e \in E(K)$. The next lemma computes the
rate of the encoder $T_1$ defined in~\eqref{eq:T_1_def}, and shows that, compared to $T_0$, the
rate of $T_1$ differs by the quantity
$\eta$ defined in~\eqref{eq:T_1_rate}. 

\begin{lemma}\label{lemma:T_1_rate}
    The encoder $T_1$ defined in~\eqref{eq:T_1_def} satisfies
    \begin{equation}\label{eq:T_1_rate}
        I(X;T_1) = \log N - \eta,
        \qquad\text{where}\qquad
        \eta = \lambda \log\frac{m+1}{m-1} - \frac{B(1-\varepsilon)}{N}\, D_V.
    \end{equation}
\end{lemma}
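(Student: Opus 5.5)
We compute $I(X;T_1)$ as $H(X) - H(X\mid T_1)$. Since $X$ is uniform on $A(\G)$, Lemma~\ref{lemma:gadget} gives $H(X) = \log M = \log N + \log s$. For $H(X\mid T_1)$ we expand via~\eqref{eq:conditional_entropy}, using the output probabilities and posteriors from Lemma~\ref{lemma:T_1}:
\begin{equation*}
H(X\mid T_1) = \lambda H(U_K) + \sum_{e\in E(K)} \frac{1-\varepsilon}{N} H(V_e) + \sum_{e\in E(G)\setminus E(K)} \frac1N H(U_{C_e}).
\end{equation*}
The three families of entropies are then known. We have $H(U_K) = \log(k(k-1)) = \log\big((m+1)m\big)$ since $k = m+1$, and $H(U_{C_e}) = \log s = \log\big(m(m-1)\big)$. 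By Lemma~\ref{lemma:perturbation}, $H(V_e) = \log s - D_V$ for every $e\in E(K)$, where $|E(K)| = B$.

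Substituting these values gives
\begin{equation*}
H(X\mid T_1) = \lambda\log\big((m+1)m\big) + \frac{B(1-\varepsilon)}{N}(\log s - D_V) + \frac{N-B}{N}\log s.
\end{equation*}
The key bookkeeping step is to collect the coefficients of $\log s$. These are $\frac{B(1-\varepsilon)}{N} + \frac{N-B}{N} = 1 - \frac{B\varepsilon}{N} = 1-\lambda$, which is consistent with the fact that the probabilities in~\eqref{eq:T_1_distribution} sum to one. It follows that
\begin{equation*}
H(X\mid T_1) = \log s + \lambda\big(\log((m+1)m) - \log(m(m-1))\big) - \frac{B(1-\varepsilon)}{N}D_V = \log s + \lambda\log\frac{m+1}{m-1} - \frac{B(1-\varepsilon)}{N}D_V.
\end{equation*}

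Subtracting this from $H(X) = \log N + \log s$ yields $I(X;T_1) = \log N - \eta$, with $\eta$ exactly as stated in~\eqref{eq:T_1_rate}. There is no real obstacle in this argument. The only points requiring care are the identification $k(k-1) = (m+1)m$ and the coefficient identity $\frac{B(1-\varepsilon)}{N} + \frac{N-B}{N} = 1-\lambda$, which uses $\lambda = B\varepsilon/N$ from~\eqref{eq:mu_eps_lambda}.
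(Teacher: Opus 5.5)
Your proof is correct and takes essentially the same route as the paper. You write $I(X;T_1)=\log M - H(X\mid T_1)$, expand $H(X\mid T_1)$ using the output probabilities and posteriors of Lemma~\ref{lemma:T_1}, substitute $H(V_e)=\log s - D_V$ from Lemma~\ref{lemma:perturbation}, and cancel the $\log s$ terms because the weights sum to one; the only cosmetic difference is that you collect the coefficient of $\log s$ directly as $1-\lambda$, whereas the paper subtracts $\log s$ times the normalization identity.
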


\begin{proof}
    Since $X$ is uniform on $A(\G)$, we have that
    $I(X;T_1) = \log M - H(X \mid T_1)$, and, since
    $\log M - \log s = \log N$, the claim~\eqref{eq:T_1_rate} is
    equivalent to prove that
    \begin{equation}\label{eq:T_1_rate_goal}
        H(X \mid T_1) = \log s + \eta.
    \end{equation}
    Moreover, since $H(X \mid T_0) = \log s$ by Lemma~\ref{lemma:T_0}, this implies that $\eta$ is the increase in conditional entropy obtained by
    replacing $T_0$ with $T_1$. We now prove~\eqref{eq:T_1_rate_goal}.

    By Lemma~\ref{lemma:T_1} and~\eqref{eq:conditional_entropy}, and
    since $|E(K)| = B$ and $|E(G) \setminus E(K)| = N - B$, we have that
    \begin{align}\label{eq:H_X_T_1_expl}
        H(X \mid T_1) &= \lambda\, H(U_K)
        + \frac{1-\varepsilon}{N} \sum_{e \in E(K)} H(V_e)
        + \frac{1}{N}\sum_{e\in E(G) \setminus E(K)} H(U_{C_e})\nonumber\\&= \lambda\, H(U_K)
        + \frac{1-\varepsilon}{N} \sum_{e \in E(K)} H(V_e)
        + \frac{N-B}{N}\, H(U_{C_e}).
    \end{align}
    We now substitute the entropies of the posteriors. Since $U_K$ and
    $U_{C_e}$ are uniform on $k(k-1)$ and $s$ oriented edges,
    respectively, we have that $H(U_K) = \log\big(k(k-1)\big)$ and
    $H(U_{C_e}) = \log s$, and, by Lemma~\ref{lemma:perturbation},
    $H(V_e) = \log s - D_V$. Hence,~\eqref{eq:H_X_T_1_expl} becomes
    \begin{equation}\label{eq:H_X_T_1_final}
        H(X \mid T_1) = \lambda \log\big(k(k-1)\big)
        + \frac{B(1-\varepsilon)}{N}\big(\log s - D_V\big)
        + \frac{N-B}{N} \log s.
    \end{equation}
    Finally, by Lemma~\ref{lemma:T_1}, the probabilities of the values of
    $T_1$ sum to $1$, that is,
    \begin{equation}\label{eq:T_1_weights}
        \lambda + \frac{B(1-\varepsilon)}{N} + \frac{N-B}{N} = 1.
    \end{equation}
    Multiplying~\eqref{eq:T_1_weights} by $\log s$ and subtracting the obtained 
    result from~\eqref{eq:H_X_T_1_final}, the terms
    $\frac{B(1-\varepsilon)}{N}\log s$ and $\frac{N-B}{N}\log s$ cancel, and
    we obtain that
    \begin{equation*}
        H(X \mid T_1) - \log s = \lambda \log\frac{k(k-1)}{s}
        - \frac{B(1-\varepsilon)}{N}\, D_V = \eta,
    \end{equation*}
    where the last equality follows since $\frac{k(k-1)}{s} = \frac{(m+1)m}{m(m-1)} = \frac{m+1}{m-1}$.
    This proves~\eqref{eq:T_1_rate_goal}.
\end{proof}

\section{Hardness of the information bottleneck}\label{sec:IB}

In this section, we prove our results for the information bottleneck,
which are formalized in the following theorems.

\begin{theorem}\label{thm:IBL_hard}
    \IBL{} is NP-hard. Moreover, unless $\mathrm{P} = \mathrm{NP}$, there
    is no polynomial-time algorithm that, given a rational joint
    distribution $P_{XY}$ on a finite alphabet $\cX \times \cY$, outputs a
    value $v$ such that
    \begin{equation}\label{eq:IBL_approx}
        \big| v - L_{1/2}(P_{XY}) \big| \le \frac{1}{54\,|\cX|^{3/2}}.
    \end{equation}
\end{theorem}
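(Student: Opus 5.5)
The plan is to reduce from \clique{} using the construction of Section~\ref{sec:construction}. We take $X$ uniform on $\cX=A(\G)$, and given $X=(u,v)$ we let $Y$ be $u$ or $v$ with probability $1/2$ each, so $W$ maps a distribution $Q$ on oriented edges to the vertex distribution $WQ$. Since $P_X=\sum_t P_T(t)P_{X\mid T}(\cdot\mid t)$ and, by Lemma~\ref{lemma:posterior_Y_given_T}, $P_{Y\mid T}(\cdot\mid t)=WP_{X\mid T}(\cdot\mid t)$, we can write
\begin{equation*}
I(Y;T)-\tfrac12 I(X;T)=H(Y)-\tfrac12 H(X)-\sum_t P_T(t)\,F\big(P_{X\mid T}(\cdot\mid t)\big),
\qquad F(Q)=H(WQ)-\tfrac12 H(Q).
\end{equation*}
So $L_{1/2}(P_{XY})$ amounts to minimizing the average of $F$ over decompositions of $P_X$ into posteriors. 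For a $j$-clique $C$ we get $F(U_C)=\log j-\tfrac12\log(j(j-1))=\tfrac12\log\frac{j}{j-1}$, which decreases in $j$: larger cliques are better posteriors.

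\emph{Step 1 (lower bound on $F$).} The key lemma is that for every $Q$ supported on $A(\G)$ we have $F(Q)\ge \tfrac12\log\frac{\omega}{\omega-1}$ with $\omega=\omega(\G)$. To prove it, first replace $Q$ by its symmetrization $\tilde Q(u,v)=\tfrac12(Q(u,v)+Q(v,u))$. This leaves $WQ=:\pi$ unchanged and does not decrease $H(Q)$, by concavity of entropy. Both marginals of $\tilde Q$ equal $\pi$, so
\begin{equation*}
2H(\pi)-H(\tilde Q)=D(\tilde Q\,\|\,\pi\otimes\pi)\ge -\log\sum_{(u,v)\in A(\G)}\pi(u)\pi(v),
\end{equation*}
where the inequality is Jensen's, using that $\tilde Q$ lives on $A(\G)$. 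The Motzkin--Straus theorem gives $\sum_{(u,v)\in A(\G)}\pi(u)\pi(v)\le 1-1/\omega$, and hence $2F(Q)\ge\log\frac{\omega}{\omega-1}$. If $G$ has no $k$-clique, property~3 of Lemma~\ref{lemma:gadget} gives $\omega(\G)=m$. Then $T_0$ attains the bound (all its posteriors are $U_{C_e}$ by Lemma~\ref{lemma:T_0}), so
\begin{equation*}
L_{1/2}=L^\ast:=H(Y)-\tfrac12\log M-\tfrac12\log\tfrac{m}{m-1}.
\end{equation*}

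\emph{Step 2 (gain of $T_1$).} If $G$ has a $k$-clique $K$, we use $T_1$ and Lemma~\ref{lemma:T_1}. Compared with $T_0$, the objective increases by
\begin{equation*}
\lambda\cdot\tfrac12\Big[\log\tfrac{m}{m-1}-\log\tfrac{m+1}{m}\Big]-\tfrac{B(1-\varepsilon)}{N}\big(F(V_e)-F(U_{C_e})\big)+O(\text{mass-shift terms}).
\end{equation*}
More precisely, the accounting is done exactly, as in Lemma~\ref{lemma:T_1_rate}, by using the weights~\eqref{eq:T_1_weights}. The first bracket is of order $1/m^3$. For the perturbation term, $-\tfrac12H(V_e)$ contributes $\tfrac12D_V$, which is bounded by Lemma~\ref{lemma:perturbation}. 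The term $H(WV_e)$ deviates from $\log m$ by $D(WV_e\|U_m)$, which is also quadratic in $\varepsilon$ by Proposition~\ref{prop:KL_chi2} via a direct $\chi^2$ computation on $m$ vertices. So the loss is $O\big(\tfrac{B}{N}\tfrac{\varepsilon^2}{\mu}\big)$ against a gain of order $\tfrac{B}{N}\tfrac{\varepsilon}{m^3}$. Choosing $\varepsilon\asymp\mu/m^3$ (i.e.\ a suitable rational $p$) yields a gap of order $B/(N m^{8})\cdot m^{2}$-type. We then check, using $M=Ns$, $N\ge B=\binom{k}{2}$ and $k\le n$, that it is at least $2\cdot\frac{1}{54M^{3/2}}$. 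I expect this constant bookkeeping, i.e.\ getting exactly $1/54$ and the exponent $3/2$ via $N\ge B\approx m^2/2$, to be the main obstacle, together with making the second-order bound on $H(WV_e)$ explicit.

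\emph{Step 3 (conclusion).} An algorithm approximating $L_{1/2}$ within $1/(54|\cX|^{3/2})$ would decide \clique{} by comparing its output $v$ with $L^\ast$ plus half the gap. Here $L^\ast$ is an explicit expression in logarithms, computable to polynomially many bits, so the comparison can be done in polynomial time with rational arithmetic. For NP-hardness of \IBL{} itself, we pick a rational $\tau$ strictly between $L^\ast$ and $L^\ast+\text{gap}$, obtained by rounding $L^\ast$ plus half the gap to polynomially many bits. All of $P_{XY}$, $\tau$ and the construction of $\G$ are polynomial in $n$ by Lemma~\ref{lemma:gadget}.
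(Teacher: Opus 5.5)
Your architecture is the paper's: the same instance on $A(\G)$ with the endpoint channel, the same posterior-cost reformulation, the entropic Tur\'an bound via $D(Q\,\|\,\pi\otimes\pi)$, Jensen and Motzkin--Straus, $T_0$ optimal in the no-case and $T_1$ in the yes-case. Your symmetrization in Step~1 is harmless but unnecessary, since $Q_1+Q_2=2\pi$ already gives $D(Q\,\|\,\pi\otimes\pi)=2H(\pi)-H(Q)$. The NP-hardness of \IBL{} follows from any inverse-polynomial gap, so that part is fine.

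The genuine problem is the quantitative core of Step~2, on which the $1/(54|\cX|^{3/2})$ claim rests. The bracket is $c_m-c_{m+1}=-\tfrac12\log(1-1/m^2)\ge \frac{1}{2m^2\ln 2}$, which is of order $1/m^2$, not $1/m^3$. Your choice $\varepsilon\asymp\mu/m^3$ is balanced against the wrong order. With it, the certified net gain of $T_1$ over $T_0$ is only $\Theta\big(\frac{B\mu}{Nm^5}\big)=\Theta\big(\frac{1}{Mm^3}\big)$. Under your own $1/m^3$ estimate it is $\Theta(B/(Nm^8))$, and the extra factor $m^2$ in your stated order has no justification. Within this construction, the only relation between $m$ and $M$ is $N\ge B$, i.e.\ $m^4\lesssim M$, and this is tight for dense instances such as $k=\Theta(n)$. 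So your certified gap can be as small as order $M^{-7/4}$, and the check that it is at least $2/(54M^{3/2})$ fails.

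The repair is to balance against the correct order: take $p=1/(4m^2)$, i.e.\ $\varepsilon=\mu/(2m^2)$.
\begin{itemize}
\item By Lemma~\ref{lemma:c_gap}, the gain is at least $\lambda/(2m^2\ln 2)$.
\item The loss is at most $\frac{1-\varepsilon}{N}\sum_{e\in E(K)}\tfrac12 D_V\le\frac{\lambda}{m^2\ln 2}\cdot\frac{1}{8(1-\varepsilon)}$.
\item You need no second-order bound on $H(WV_e)$. Since $WV_e$ is supported on the $m$ vertices of $C_e$, $H(WV_e)\le\log m$, so that term has the favourable sign.
\item The net gain then exceeds $\lambda/(2m^2)\ge\delta=\frac{1}{8Nm^4}=\frac{1-1/m}{8Mm^2}$, and $m^2\le\frac32\sqrt M$ gives $\delta\ge\frac{1}{18M^{3/2}}$.
\end{itemize}
Finally, in Step~3, a gap of exactly twice the error leaves no room for the rational approximation of the irrational midpoint. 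The paper instead rules out error $\delta/3$ and uses a rational threshold within $\delta/12$ of $\Pi_0+\delta/2$. This leaves a margin $5\delta/12>\delta/3$ on both sides.
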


\begin{theorem}\label{thm:IBT_hard}
    \IBT{} is NP-hard. Moreover, unless $\mathrm{P} = \mathrm{NP}$, there
    is no polynomial-time algorithm that, given a rational joint
    distribution $P_{XY}$ on a finite alphabet $\cX \times \cY$ and a
    non-negative rational number $R$, outputs a value $v$ such that
    \begin{equation}\label{eq:IBT_approx}
        \big| v - \IB(P_{XY},R) \big| \le \frac{1}{144\,|\cX|^{3/2}}.
    \end{equation}
\end{theorem}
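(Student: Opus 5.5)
The plan is to reuse verbatim the reduction behind Theorem~\ref{thm:IBL_hard}, with the same $\G$, the same uniform $X$ on $\cX=A(\G)$, the same channel ($Y$ a uniformly chosen endpoint of $X$), the same encoders $T_0,T_1$ and the same choice of $p$. We then convert the Lagrangian gap into a gap for the constrained problem by choosing the budget $R$ equal, up to a tiny rational correction, to the rate of $T_1$. I will use the following facts established in the proof of Theorem~\ref{thm:IBL_hard}.
\begin{itemize}
\item If $\omega(G)<k$, then $L_{1/2}(P_{XY})=I(Y;T_0)-\tfrac12\log N$.
\item If $\omega(G)\ge k$, then $I(Y;T_1)-\tfrac12 I(X;T_1)\ge I(Y;T_0)-\tfrac12\log N+\Delta$, with an explicit gap $\Delta$ of order $|\cX|^{-3/2}$. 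The constant $1/54$ in \eqref{eq:IBL_approx} suggests $\Delta\ge 1/(27|\cX|^{3/2})$.
\end{itemize}

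\emph{Soundness (no $k$-clique).} Let $T$ be any encoder with $I(X;T)\le R$. By the definition of $L_{1/2}$,
\[
I(Y;T)\le L_{1/2}(P_{XY})+\tfrac12 I(X;T)\le I(Y;T_0)-\tfrac12(\log N-R).
\]

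\emph{Completeness ($k$-clique $K$).} Take $R\ge I(X;T_1)=\log N-\eta$ by Lemma~\ref{lemma:T_1_rate}. Then $T_1$ is feasible, and the Lagrangian gap gives
\[
I(Y;T_1)\ge I(Y;T_0)-\tfrac12\eta+\Delta.
\]

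\emph{Choice of $R$.} If $R$ could equal $\log N-\eta$ exactly, the two cases would be separated by $\Delta$. However, $\eta$ involves logarithms and $R$ must be rational. I will therefore compute in polynomial time a rational $R$ with $\log N-\eta\le R\le \log N-\eta+\delta$, taking $\delta$ much smaller than $\Delta$, say $\delta\le \Delta/2$. This needs $\eta$ to polynomially many bits. Since $\eta$ is a finite expression in logarithms of rationals, with $D_V$ an explicit finite sum, standard series evaluation gives it to any polynomial precision. The soundness bound then becomes $I(Y;T_0)-\tfrac12\eta+\tfrac12\delta$, and the gap between the two cases is at least $\Delta-\delta/2$.

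\emph{Threshold and approximation.} Set $\vartheta$ to a rational approximation of $I(Y;T_0)-\tfrac12\eta$ plus half the gap. This gives NP-hardness of \IBT{}. For the approximation claim, the gap $\Delta-\delta/2\ge \tfrac34\Delta$ has to exceed twice $1/(144|\cX|^{3/2})$, that is $1/(72|\cX|^{3/2})$. This holds as long as $\Delta\ge 1/(54|\cX|^{3/2})$, which is weaker than what Theorem~\ref{thm:IBL_hard} provides. An algorithm achieving \eqref{eq:IBT_approx} would then decide \clique{} by comparing $v$ with the midpoint of the gap.

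The main obstacle is bookkeeping rather than ideas. I must check that the quantities $I(Y;T_0)$, $\eta$ and the midpoint can be approximated in polynomial time to additive accuracy $o(|\cX|^{-3/2})$, and that the constants from the Lagrangian gap survive the rounding losses to give exactly $1/144$. If the Lagrangian proof only yields the gap $\Delta$ up to a rounding loss of its own, I would instead take $\delta\le\Delta/4$ and rerun the constants. The constant $144$ leaves visible slack relative to $54$, which is consistent with this kind of loss.
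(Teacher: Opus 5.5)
Your proposal is correct and takes essentially the same route as the paper. It uses the same instance, the no-case bound $\IB(P_{XY},R)\le R/2+\Pi_0$ from Lemma~\ref{lemma:IB_no}, and feasibility of $T_1$ in the yes-case. The budget is placed just above $I(X;T_1)=\log N-\eta$; the paper takes $R^*\le R\le R^*+\delta/2$, where $\delta$ is the Lagrangian gap of Lemma~\ref{lemma:IB_yes}, i.e.\ your $\Delta$. A rational threshold then goes in the middle of the gap, and the true gap $\delta\ge 1/(18|\cX|^{3/2})$ leaves more slack than you assumed.

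The one point you leave implicit is isolated by the paper as Lemma~\ref{lemma:R_star}: $\log N-\eta$ depends only on $N$ and $k$, since $D_V$ is determined by $s$, $a_0$, $a_1$. This is what lets the reduction compute $R$ without knowing whether a $k$-clique exists.
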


The rest of this section is organized as follows. In Section~\ref{subsec:IB_costs}, given a channel $W$ from $\cX$ to $\cY$, we define the cost
of a probability distribution $Q$ as $g_W(Q) = H(WQ) - \frac12 H(Q)$, and
we show that the objective of the Lagrangian depends on an encoder only
through the average cost of its posteriors. In
Section~\ref{subsec:IB_turan}, we introduce the channel that maps each
oriented edge of a graph to one of its endpoints, and we provide, for
this channel, a lower bound on the cost of every distribution on the
oriented edges of the graph in terms of its clique number. In Section~\ref{subsec:IB_reduction}, we define
the joint distribution of the reduction, and we show that, if $G$ has
no $k$-clique, the encoder $T_0$ is optimal, while, if $G$ contains a
$k$-clique, the encoder $T_1$ strictly improves on it. Successively, we use these
results to prove Theorem~\ref{thm:IBL_hard} in
Section~\ref{subsec:IBL_proof}. In Section~\ref{subsec:other_beta}, we
extend Theorem~\ref{thm:IBL_hard} to every fixed multiplier
$\beta \in (0,1)$, and, in Section~\ref{subsec:IBT_proof}, we prove
Theorem~\ref{thm:IBT_hard}. Finally, in Section~\ref{subsec:padding}, we
improve the exponent $3/2$ in Theorems~\ref{thm:IBL_hard}
and~\ref{thm:IBT_hard} and in Proposition~\ref{prop:other_beta} to
$1 + \alpha$, for every fixed $\alpha > 0$ (Theorem~\ref{thm:padding}). We
made no attempt to optimize the constants in these results.

\subsection{The Lagrangian in terms of the posteriors}\label{subsec:IB_costs}

We first show that the objective of $L_{1/2}(P_{XY})$ depends on an
encoder only through its posteriors. Given a channel $W$ from $\cX$ to
$\cY$, we define the \emph{cost} of a probability distribution $Q$ on
$\cX$ as
\begin{equation}\label{eq:cost_def}
    g_W(Q) = H(WQ) - \frac12 H(Q).
\end{equation}
Using~\eqref{eq:cost_def}, we can express the objective of the
Lagrangian in terms of the posteriors of the encoder $T$, as shown in
the following lemma.
\begin{lemma}\label{lemma:lagrangian_costs}
    Let $W$ be the channel from $\cX$ to $\cY$ induced by the joint distribution $P_{XY}$, and let $T$ be an encoder
    of $X$. Then, it holds that
    \begin{equation}\label{eq:lagrangian_costs}
        I(Y;T) - \frac12 I(X;T) = g_W(P_X)
        - \sum_{t \in \cT} P_T(t)\, g_W\big(P_{X\mid T}(\cdot \mid t)\big).
    \end{equation}
    In particular,
    $L_{1/2}(P_{XY}) = g_W(P_X) - \inf_T \sum_{t \in \cT} P_T(t)\, g_W\big(P_{X\mid T}(\cdot \mid t)\big)$.
\end{lemma}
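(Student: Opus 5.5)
The plan is to write both mutual informations as differences of an unconditional entropy and an average of posterior entropies, and then collect terms. Since $T$ is an encoder of $X$, we have $I(X;T) = H(X) - H(X\mid T)$. Using~\eqref{eq:conditional_entropy} this becomes
\begin{equation*}
I(X;T) = H(P_X) - \sum_{t\in\cT} P_T(t)\, H\big(P_{X\mid T}(\cdot\mid t)\big).
\end{equation*}
Similarly, $I(Y;T) = H(Y) - H(Y\mid T)$. Here $H(Y) = H(WP_X)$, because $WP_X = P_Y$, and
\begin{equation*}
H(Y\mid T) = \sum_{t} P_T(t)\, H\big(P_{Y\mid T}(\cdot\mid t)\big) = \sum_t P_T(t)\, H\big(W P_{X\mid T}(\cdot\mid t)\big),
\end{equation*}
where the last equality is exactly Lemma~\ref{lemma:posterior_Y_given_T}. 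This is the only place where the Markov structure $T-X-Y$ enters.

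Next, subtract one half of the first identity from the second. The unconditional terms combine into $H(WP_X) - \frac12 H(P_X) = g_W(P_X)$. The conditional terms combine, posterior by posterior, into
\begin{equation*}
-\sum_t P_T(t)\Big[H\big(WP_{X\mid T}(\cdot\mid t)\big) - \tfrac12 H\big(P_{X\mid T}(\cdot\mid t)\big)\Big] = -\sum_t P_T(t)\, g_W\big(P_{X\mid T}(\cdot\mid t)\big).
\end{equation*}
Together these give~\eqref{eq:lagrangian_costs}. All sums are finite, and $P_T(t)>0$ by the standing convention, so the posteriors are well defined.

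For the ``in particular'' statement, note that $g_W(P_X)$ does not depend on $T$. Taking the supremum over encoders of the left-hand side of~\eqref{eq:lagrangian_costs} therefore equals $g_W(P_X)$ minus the infimum over encoders of the average posterior cost. This yields the stated formula for $L_{1/2}(P_{XY})$ directly from definition~\eqref{eq:L_def}.

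There is no real obstacle here. The only point needing care is the identification $P_{Y\mid T}(\cdot\mid t) = WP_{X\mid T}(\cdot\mid t)$, which is Lemma~\ref{lemma:posterior_Y_given_T}. The rest is bookkeeping of the averages with respect to the same weights $P_T(t)$.
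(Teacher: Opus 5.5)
Your proposal is correct and follows essentially the same route as the paper. You expand both mutual informations as an unconditional entropy minus the $P_T$-average of posterior entropies, use Lemma~\ref{lemma:posterior_Y_given_T} to replace $P_{Y\mid T}(\cdot\mid t)$ by $WP_{X\mid T}(\cdot\mid t)$, group the terms into $g_W$, and then take the supremum using that $g_W(P_X)$ does not depend on $T$.
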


\begin{proof}
    By~\eqref{eq:MI} and~\eqref{eq:conditional_entropy}, we have that
    \begin{equation}\label{eq:I_X_T_post}
        I(X;T) = H(P_X) - \sum_{t \in \cT} P_T(t)\, H\big(P_{X\mid T}(\cdot \mid t)\big)
    \end{equation}
    and, similarly,
    \begin{equation}\label{eq:I_Y_T_post}
        I(Y;T) = H(P_Y) - \sum_{t \in \cT} P_T(t)\, H\big(P_{Y\mid T}(\cdot \mid t)\big).
    \end{equation}
We first rewrite~\eqref{eq:I_Y_T_post} in terms of the
channel $W$. Since $P_Y=WP_X$ and, by
Lemma~\ref{lemma:posterior_Y_given_T},
$P_{Y\mid T}(\cdot\mid t)=WP_{X\mid T}(\cdot\mid t)$,
we have
\[
I(Y;T)
=
H(WP_X)
-
\sum_{t\in\cT}P_T(t)\,
H\bigl(WP_{X\mid T}(\cdot\mid t)\bigr).
\]
Subtracting one half of~\eqref{eq:I_X_T_post} and grouping
the unconditional and conditional entropy terms separately,
we obtain
\begin{align*}
I(Y;T)-\frac12 I(X;T)
&=
\left[H(WP_X)-\frac12 H(P_X)\right]
\\
&\quad-
\sum_{t\in\cT}P_T(t)
\left[
H\bigl(WP_{X\mid T}(\cdot\mid t)\bigr)
-\frac12 H\bigl(P_{X\mid T}(\cdot\mid t)\bigr)
\right]
\\
&=
g_W(P_X)
-
\sum_{t\in\cT}P_T(t)\,
g_W\bigl(P_{X\mid T}(\cdot\mid t)\bigr),
\end{align*}
where the last equality uses the definition
of $g_W$ in~\eqref{eq:cost_def}.
This proves~\eqref{eq:lagrangian_costs}.

Finally, the marginal $P_X$ and the channel $W$ are fixed,
so $g_W(P_X)$ does not depend on the encoder.
Thus, maximizing the Lagrangian objective is equivalent
to minimizing the average cost of the posterior distributions.
Taking the supremum over all encoders in
\eqref{eq:lagrangian_costs} therefore gives
\[
L_{1/2}(P_{XY})
=
g_W(P_X)
-
\inf_T
\sum_{t\in\cT}P_T(t)\,
g_W\bigl(P_{X\mid T}(\cdot\mid t)\bigr),
\]
which proves the last claim.
\end{proof}
By Lemma~\ref{lemma:lagrangian_costs}, the objective of the Lagrangian
depends on the encoder $T$ only through the average cost of its
posteriors. Moreover,
Lemma~\ref{lemma:lagrangian_costs} also provides a natural strategy to lower
and upper bound $L_{1/2}(P_{XY})$. Indeed, if $g_W(Q) \ge c$ for every
probability distribution $Q$ on $\cX$, then, since each posterior
$P_{X\mid T}(\cdot \mid t)$ is a probability distribution on $\cX$, the
infimum in Lemma~\ref{lemma:lagrangian_costs} is at least $c$, and hence
\begin{equation}\label{eq:L_upper_c}
    L_{1/2}(P_{XY}) \le g_W(P_X) - c.
\end{equation}
On the other hand, for every fixed encoder $T$ of $X$,
Lemma~\ref{lemma:lagrangian_costs} yields the following lower bound:
\begin{equation}\label{eq:L_lower_T}
    L_{1/2}(P_{XY}) \ge g_W(P_X) - \sum_{t \in \cT} P_T(t)\, g_W\big(P_{X\mid T}(\cdot \mid t)\big).
\end{equation}

\subsection{The endpoint channel and an entropic Tur\'an bound}\label{subsec:IB_turan}

Let $G$ be a graph with at least one edge. We consider the channel $W_G$
from $A(G)$ to $V(G)$ that maps each oriented edge of $G$ to one of its two
endpoints, chosen uniformly at random, that is,
\begin{equation}\label{eq:W_G}
    W_G\big(w \mid (u,v)\big) =
    \begin{cases}
        1/2, & w \in \{u,v\},\\
        0, & \text{otherwise},
    \end{cases}
\end{equation}
for every $(u,v) \in A(G)$ and $w \in V(G)$. In what follows, we write $g_G$ for the cost $g_{W_G}$ associated with
the channel $W_G$, as defined in~\eqref{eq:cost_def}, that is, we set
$g_G(Q) = H(W_G Q) - \frac12 H(Q)$ for every probability distribution
$Q$ on $A(G)$. For every integer $j \ge 2$, let
\begin{equation}\label{eq:c_j}
    c_j = -\frac12 \log\Big(1 - \frac1j\Big),
\end{equation}
and note that $c_j$ is strictly decreasing in $j$.

The next lemma computes the cost of the uniform distribution on the
oriented edges of a clique, which, by Lemma~\ref{lemma:T_0}, is the form
of the posteriors of $T_0$.
\begin{lemma}\label{lemma:cost_clique}
    Let $C$ be a clique of $G$ of size $j \ge 2$. Then $W_G U_C$ is the
    uniform distribution on $C$, and $g_G(U_C) = c_j$.
\end{lemma}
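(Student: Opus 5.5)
The plan is a direct computation from the definitions~\eqref{eq:W_G} and~\eqref{eq:U_C}, organized in two steps: first identify $W_G U_C$, then evaluate the two entropies appearing in $g_G(U_C)$ as in~\eqref{eq:cost_def}.

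First, I will compute $(W_G U_C)(w)$ for every vertex $w \in V(G)$ using~\eqref{eq:WQ}. Since $U_C$ is supported on $A(C)$ with mass $1/(j(j-1))$ on each oriented edge, we get
\[
(W_G U_C)(w) = \frac{1}{j(j-1)} \sum_{(u,v)\in A(C)} W_G\big(w \mid (u,v)\big) = \frac{1}{2j(j-1)}\,\big|\{(u,v)\in A(C) : w \in \{u,v\}\}\big|.
\]
If $w \notin C$, no oriented edge of $C$ has $w$ as an endpoint, so the value is $0$. If $w \in C$, then $w$ is an endpoint of exactly $2(j-1)$ oriented edges of $C$: the $j-1$ pairs $(w,v)$ and the $j-1$ pairs $(u,w)$. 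The value is then $2(j-1)/(2j(j-1)) = 1/j$. Hence $W_G U_C$ is uniform on $C$.

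Second, I will evaluate the cost. From the first step, $H(W_G U_C) = \log j$. From the remark after~\eqref{eq:U_C}, $H(U_C) = \log(j(j-1))$. Therefore
\[
g_G(U_C) = \log j - \tfrac12\log\big(j(j-1)\big) = \tfrac12\log\frac{j}{j-1} = -\tfrac12\log\Big(1-\frac1j\Big) = c_j,
\]
which matches~\eqref{eq:c_j}.

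The argument is routine. The only step requiring any care is counting the oriented edges incident to a vertex of $C$, including both orientations, so that the factor $1/2$ in $W_G$ cancels correctly.
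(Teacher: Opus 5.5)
Your proposal is correct and follows the paper's proof essentially verbatim. You count the $2(j-1)$ oriented edges of $A(C)$ incident to each $w \in C$ to get $(W_G U_C)(w) = 1/j$, deduce that $W_G U_C$ is uniform on $C$, and conclude $g_G(U_C) = \log j - \tfrac12\log\big(j(j-1)\big) = c_j$.
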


\begin{proof}
    We first compute $W_G U_C$. For every $v \in V(G)$,
    by~\eqref{eq:U_C} and~\eqref{eq:W_G}  we have that
    \begin{equation*}
        (W_G U_C)(v) = \frac{1}{j(j-1)} \sum_{(w,u) \in A(C)} W_G\big(v \mid (w,u)\big),
    \end{equation*}
    where each term of the sum is $1/2$ if $v \in \{w,u\}$, and $0$
    otherwise. Therefore, if $v \in C$, exactly $2(j-1)$ oriented edges of $A(C)$
    have $v$ as an endpoint, so that
    $(W_G U_C)(v) = \frac{1}{j(j-1)} \cdot 2(j-1) \cdot \frac12 = \frac1j$;
    if $v \notin C$, no oriented edge of $A(C)$ has $v$ as an endpoint, so
    that $(W_G U_C)(v) = 0$. Hence, $W_G U_C$ is uniform on $C$, and
    $H(W_G U_C) = \log j$. Moreover, since $H(U_C) = \log\big(j(j-1)\big)$, we
    finally get that
    \begin{equation*}
        g_G(U_C) = \log j - \frac12 \log\big(j(j-1)\big)
        = -\frac12 \log\Big(1 - \frac1j\Big) = c_j,
    \end{equation*}
    which concludes the proof.
\end{proof}

We now show that no probability distribution $Q$ on $A(G)$ has a cost $g_G(Q)$ smaller
than $c_{\omega(G)}$,  
where $\omega(G)$ is the clique number of $G$ and $c_j$ is as in \eqref{eq:c_j}. 
To this end, we first recall the following result of
Motzkin and Straus \cite{MS65}.

\begin{lemma}[\cite{MS65}]\label{lemma:motzkin_straus}
    For every probability distribution $P$ on $V(G)$,
    \begin{equation}\label{eq:motzkin_straus}
        \sum_{(u,v) \in A(G)} P(u)\, P(v) \le 1 - \frac{1}{\omega(G)}.
    \end{equation}
\end{lemma}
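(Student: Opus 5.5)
We would follow the classical Motzkin--Straus argument: a compactness step, then a support-reduction step, then a Cauchy--Schwarz bound on a clique. Let $f(P)=\sum_{(u,v)\in A(G)}P(u)P(v)$. Since each unordered edge contributes two orientations, $f(P)=2\sum_{\{u,v\}\in E(G)}P(u)P(v)$, which is a quadratic form on the probability simplex $\Delta$ over $V(G)$. Because $\Delta$ is compact and $f$ is continuous, $f$ attains its maximum on $\Delta$. Among all maximizers we pick one, $P^\ast$, whose support $S=\supp(P^\ast)$ has minimum cardinality. It then suffices to show that $S$ is a clique of $G$ and that $f(P^\ast)\le 1-1/|S|$, since $|S|\le\omega(G)$.

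\emph{Step 1: the support is a clique.} Suppose, for contradiction, that $u,w\in S$ are distinct and $\{u,w\}\notin E(G)$. For $t\in[-P^\ast(u),P^\ast(w)]$, define $P_t$ by moving mass $t$ from $w$ to $u$: set $P_t(u)=P^\ast(u)+t$, $P_t(w)=P^\ast(w)-t$, and $P_t=P^\ast$ elsewhere. In $f(P_t)$ the only terms that could be quadratic in $t$ are those of the form $P_t(u)P_t(w)$, and these are absent because $u$ and $w$ are not adjacent (note also that $f$ has no diagonal terms, since $u\neq v$ for every $(u,v)\in A(G)$). Hence $t\mapsto f(P_t)$ is affine on the interval. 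An affine function attaining its maximum at the interior point $t=0$ must be constant on the interval. Evaluating at an endpoint therefore gives another maximizer whose support is $S\setminus\{u\}$ or $S\setminus\{w\}$, contradicting the minimality of $|S|$.

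\emph{Step 2: the bound on a clique.} Let $j=|S|$. Since $S$ is a clique, every ordered pair of distinct vertices of $S$ lies in $A(G)$, so
\[
f(P^\ast)=\sum_{u\ne v,\ u,v\in S}P^\ast(u)P^\ast(v)=\Big(\sum_{u\in S}P^\ast(u)\Big)^2-\sum_{u\in S}P^\ast(u)^2=1-\sum_{u\in S}P^\ast(u)^2.
\]
By Cauchy--Schwarz, $\sum_{u\in S}P^\ast(u)^2\ge 1/j$. Thus $f(P)\le f(P^\ast)\le 1-1/j\le 1-1/\omega(G)$ for every $P\in\Delta$, which is~\eqref{eq:motzkin_straus}.

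The main obstacle is Step~1, and specifically making sure that the perturbation is affine. One must check that no $t^2$ term arises; this holds because the only product involving both perturbed coordinates is $P(u)P(w)$, which appears only if $\{u,w\}$ is an edge, and the form has no diagonal terms. Everything else is routine.
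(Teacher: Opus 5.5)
Your proof is correct. The paper gives no proof of this lemma at all: it cites it as the classical Motzkin--Straus theorem and uses only the inequality, in the Jensen step of Lemma~\ref{lemma:entropic_turan}. So there is no in-paper argument to compare against. What you wrote is the standard self-contained proof, the support-reduction argument commonly used to derive Tur\'an's theorem from Motzkin--Straus.

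The delicate points all hold:
\begin{itemize}
\item $t=0$ is interior to $[-P^\ast(u),P^\ast(w)]$ because both masses are positive. Hence the affine map $t\mapsto f(P_t)$ is constant, and the endpoint $t=P^\ast(w)$ gives a maximizer with support $S\setminus\{w\}$.
\item $A(G)$ contains no pairs $(u,u)$, so no $t^2$ term can arise.
\item Once $S$ is a clique, $f(P^\ast)=1-\sum_{u\in S}P^\ast(u)^2\le 1-1/|S|\le 1-1/\omega(G)$.
\item The degenerate case $|S|=1$ is covered automatically, since a single vertex is a clique under the paper's definition.
\end{itemize}

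Your argument is fully adequate as a replacement for the bare citation.
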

Lemma~\ref{lemma:motzkin_straus} allows us to prove the following useful result.

\begin{lemma}\label{lemma:entropic_turan}
    For every probability distribution $Q$ on $A(G)$, it holds that
    \begin{equation}\label{eq:entropic_turan}
        g_G(Q) \ge c_{\omega(G)},
    \end{equation}
    with equality if $Q = U_C$ for a clique $C$ of size $\omega(G)$.
\end{lemma}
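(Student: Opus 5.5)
Let $P = W_G Q$, which is a distribution on $V(G)$. I want to show $H(P) - \tfrac12 H(Q) \ge -\tfrac12\log(1-1/\omega(G))$, which is equivalent to $2H(P) - H(Q) \ge -\log(1 - 1/\omega(G))$. The idea is to compare $Q$ with the product-type distribution $R$ on $A(G)$ defined by $R(u,v) = P(u)P(v)/Z$, where $Z = \sum_{(u,v)\in A(G)} P(u)P(v)$. By Lemma~\ref{lemma:motzkin_straus}, $Z \le 1 - 1/\omega(G)$. If $Z = 0$, then $P$ is concentrated on a single vertex, which is impossible since every oriented edge spreads mass over two distinct vertices. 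So $R$ is a well-defined probability distribution on $A(G)$.

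Next I use Gibbs' inequality $D(Q\,\|\,R) \ge 0$. The support is not a problem here: if $Q(u,v) > 0$, then $P(u) \ge Q(u,v)/2 > 0$ and likewise $P(v) > 0$, so $\supp(Q) \subseteq \supp(R)$. Writing out the divergence gives
\[
0 \le D(Q\,\|\,R) = -H(Q) - \sum_{(u,v)} Q(u,v)\bigl[\log P(u) + \log P(v)\bigr] + \log Z .
\]
The key computation is the middle sum. Since $P(w) = \tfrac12 \sum_{(u,v)\ni w} Q(u,v)$, each vertex $w$ receives total $Q$-weight $2P(w)$ from the oriented edges incident to it. Therefore
\[
\sum_{(u,v)} Q(u,v)\bigl[\log P(u) + \log P(v)\bigr] = \sum_w 2P(w)\log P(w) = -2H(P).
\]
Substituting this yields $2H(P) - H(Q) \ge -\log Z \ge -\log(1 - 1/\omega(G))$. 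Dividing by $2$ gives $g_G(Q) \ge c_{\omega(G)}$.

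For the equality case, Lemma~\ref{lemma:cost_clique} gives $g_G(U_C) = c_j$ with $j = |C| = \omega(G)$, so equality holds directly. This is also consistent with the inequality chain: for $U_C$, the distribution $P$ is uniform on $C$, $R = U_C$, and $Z = 1 - 1/j$, so both inequalities are tight.

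The main step is the identity that turns the cross term into exactly $-2H(P)$; this is where the factor $\tfrac12$ in the channel $W_G$ and the coefficient $\tfrac12$ in the cost have to match. I would also check carefully that $Z > 0$ and that $\supp(Q) \subseteq \supp(R)$, so that the divergence is finite.
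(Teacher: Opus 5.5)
Your argument is correct and is essentially the paper's proof. The paper shows $2g_G(Q) = D(Q\,\|\,\pi\otimes\pi)$ with $\pi = W_GQ$ and bounds this from below by Jensen's inequality plus Motzkin--Straus, which is the same as your Gibbs step since $D(Q\,\|\,\pi\otimes\pi) = D(Q\,\|\,R) - \log Z$. One small correction: $Z=0$ would not force $P$ to be a point mass (only $\supp(P)$ to be an independent set). However, $Z>0$ follows at once from your own support observation, since any $(u,v)$ with $Q(u,v)>0$ gives $Z \ge P(u)P(v) > 0$.
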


\begin{proof}
    The idea of the proof is to express $g_G(Q)$ as a relative entropy,
    and then to bound it from below by Jensen's inequality and
    Lemma~\ref{lemma:motzkin_straus}.

    Let $\pi = W_G Q$, and let $Q_1$ and $Q_2$ be the distributions on
    $V(G)$ defined by
    \begin{equation*}
        Q_1(v) = \sum_{u : (v,u) \in A(G)} Q(v,u)
        \qquad\text{and}\qquad
        Q_2(v) = \sum_{u : (u,v) \in A(G)} Q(u,v).
    \end{equation*}
    Since, by~\eqref{eq:W_G}, an oriented edge is mapped to each of its
    two endpoints with probability $1/2$, we have that
    \begin{align}\label{eq:pi_Q1_Q2}
        \pi(v) &=\sum_{(w,u)\in A(G)}W_G(v\mid (w, u))Q(w,u) \nonumber\\&=\frac12 \big( Q_1(v) + Q_2(v) \big),
        \qquad \forall v \in V(G).
    \end{align}
    We extend $Q$ to a distribution on $V(G) \times V(G)$ by setting it to
    zero outside $A(G)$. We notice that if $Q(u,v) > 0$, then
    $\pi(u) \ge \frac12 Q_1(u) > 0$ and $\pi(v) \ge \frac12 Q_2(v) > 0$.
    Hence $\supp(Q) \subseteq \supp(\pi \otimes \pi)$, and we can write
    \begin{align}
        D(Q \,\|\, \pi \otimes \pi)
        &= -H(Q) - \sum_{(u,v) \in A(G)} Q(u,v) \log \pi(u)
        - \sum_{(u,v) \in A(G)} Q(u,v) \log \pi(v) \nonumber\\
        &= -H(Q) - \sum_{v \in V(G)} \big( Q_1(v) + Q_2(v) \big) \log \pi(v) \nonumber\\
        &= -H(Q) - 2 \sum_{v \in V(G)} \pi(v) \log \pi(v)
        = 2H(\pi) - H(Q) = 2 g_G(Q), \label{eq:D_equals_2g}
    \end{align}
    where the second equality follows from the definitions of $Q_1$ and
    $Q_2$, and the third one from~\eqref{eq:pi_Q1_Q2}.

    We now bound $D(Q \,\|\, \pi \otimes \pi)$ from below. By Jensen's
    inequality applied to the concave function $\log$, and by
    Lemma~\ref{lemma:motzkin_straus}, we have that
    \begin{align}
        D(Q \,\|\, \pi \otimes \pi)
        &= -\sum_{(u,v) \in \supp(Q)} Q(u,v) \log\frac{\pi(u)\pi(v)}{Q(u,v)}
        \ge -\log\Big( \sum_{(u,v) \in A(G)} \pi(u)\pi(v) \Big)\nonumber\\
        &\ge -\log\Big(1 - \frac{1}{\omega(G)}\Big) = 2c_{\omega(G)}.\label{eq:last_step_turan}
    \end{align}
    Combining~\eqref{eq:last_step_turan} with~\eqref{eq:D_equals_2g}
    proves~\eqref{eq:entropic_turan}. The equality case, instead, follows from
    Lemma~\ref{lemma:cost_clique}, since for a clique $C$ of size $\omega(G)$ we have that $g_G(U_C)=c_{\omega(G)}$.
\end{proof}

The bound of Lemma~\ref{lemma:entropic_turan} can be seen as an entropic
form of Tur\'an's theorem. For symmetric distributions, that is, when
$Q(u,v) = Q(v,u)$ for all $(u,v) \in A(G)$, it was obtained by Chao and
Yu~\cite{CY26}. Finally, we bound the decrease of $c_j$ when $j$
increases by one.

\begin{lemma}\label{lemma:c_gap}
    For every integer $m \ge 2$,
    $c_m - c_{m+1} \ge \frac{1}{2m^2 \ln 2}$.
\end{lemma}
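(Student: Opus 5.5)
We argue directly from the definition~\eqref{eq:c_j}. Writing everything with the natural logarithm, we have
\begin{equation*}
    c_m - c_{m+1} = \frac{1}{2\ln 2}\Big[\ln\Big(1-\frac{1}{m+1}\Big) - \ln\Big(1-\frac1m\Big)\Big]
    = \frac{1}{2\ln 2}\,\ln\frac{m/(m+1)}{(m-1)/m}
    = \frac{1}{2\ln 2}\,\ln\frac{m^2}{m^2-1}.
\end{equation*}
The claim therefore reduces to the elementary inequality $\ln\frac{m^2}{m^2-1} \ge \frac{1}{m^2}$ for every integer $m \ge 2$.

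To prove it, write $\frac{m^2}{m^2-1} = \frac{1}{1-1/m^2}$, so that $\ln\frac{m^2}{m^2-1} = -\ln(1-x)$ with $x = 1/m^2 \in (0,1)$. The bound $-\ln(1-x) \ge x$ holds for $x \in [0,1)$. One way to see this is to apply $\ln y \le y - 1$ with $y = 1-x$. Another is to expand $-\ln(1-x) = \sum_{i\ge1} x^i/i$, whose terms are all non-negative. Substituting $x = 1/m^2$ gives $\ln\frac{m^2}{m^2-1} \ge \frac{1}{m^2}$. Multiplying by $\frac{1}{2\ln 2}$ then yields $c_m - c_{m+1} \ge \frac{1}{2m^2\ln 2}$.

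There is no real obstacle here. The only step requiring care is the algebra that turns the difference of the two logarithms into the single ratio $m^2/(m^2-1)$, together with the change of base from $\log$ to $\ln$. The restriction $m \ge 2$ is used only to guarantee that $1 - 1/m > 0$, so that $c_m$ is finite, and that $x = 1/m^2 < 1$.
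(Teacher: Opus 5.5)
Your proposal is correct and follows the paper's proof: both rewrite $c_m - c_{m+1}$ as $-\frac12\log\bigl(1-\frac{1}{m^2}\bigr)$ and then apply $-\ln(1-z)\ge z$ with $z = 1/m^2$. Your algebra and change of base check out.
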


\begin{proof}
    By~\eqref{eq:c_j},
    $c_m - c_{m+1} = \frac12 \log\big(\frac{m}{m-1} \cdot \frac{m}{m+1}\big)
    = -\frac12 \log\big(1 - \frac{1}{m^2}\big) \ge \frac{1}{2m^2 \ln 2}$,
    where the inequality follows from $-\ln(1-z) \ge z$ for
    $0 \le z < 1$.
\end{proof}

\subsection{The reduction}\label{subsec:IB_reduction}

Let $(G,k)$ be an instance of \clique{} satisfying~\eqref{eq:assumption_on_G},
and let $\G$ be the extended graph of Section~\ref{subsec:extended_graph}.
We let $X$ be uniformly distributed on $\cX = A(\G)$, as
in~\eqref{eq:P_X}, and, given that $X = (u,v)$, we let the value of the random variable $Y$ be one of the two
endpoints of $(u,v)$, chosen uniformly at random. Hence $\cY = V(\G)$,
the channel from $X$ to $Y$ is $W_{\G}$, defined in~\eqref{eq:W_G}, and the joint distribution of $(X,Y)$ is
\begin{equation}\label{eq:IB_joint}
    P_{XY}\big((u,v),w\big) =
    \begin{cases}
        \dfrac{1}{2M}, & \text{if }w \in \{u,v\},\\[6pt]
        0, & \text{otherwise},
    \end{cases}
    \qquad \forall (u,v) \in A(\G),\ w \in V(\G).
\end{equation}
Let
\begin{equation}\label{eq:Lambda_0}
    \Pi_0 = g_{\G}(P_X) - c_m.
\end{equation}

The next lemma shows that the encoder $T_0$ achieves exactly the value $\Pi_0$ defined in~\eqref{eq:Lambda_0}.
\begin{lemma}\label{lemma:IB_T0}
    The encoder $T_0$ defined in~\eqref{eq:T_0_def} satisfies
    $I(Y;T_0) - \frac12 I(X;T_0) = \Pi_0$. In particular,
    $L_{1/2}(P_{XY}) \ge \Pi_0$.
\end{lemma}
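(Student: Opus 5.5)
The plan is to apply Lemma~\ref{lemma:lagrangian_costs} with $W = W_{\G}$, the channel from $X$ to $Y$ in the reduction, so that $g_W = g_{\G}$. This gives
\begin{equation*}
    I(Y;T_0) - \frac12 I(X;T_0) = g_{\G}(P_X) - \sum_{e \in E(G)} P_{T_0}(e)\, g_{\G}\big(P_{X\mid T_0}(\cdot \mid e)\big).
\end{equation*}
Thus it suffices to show that the average cost of the posteriors of $T_0$ equals $c_m$.

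By Lemma~\ref{lemma:T_0}, we have $P_{T_0}(e) = 1/N$ and $P_{X\mid T_0}(\cdot\mid e) = U_{C_e}$ for every $e \in E(G)$. Each $C_e$ is a clique of $\G$ of size $m \ge 3 \ge 2$ by construction. Lemma~\ref{lemma:cost_clique}, applied to the graph $\G$ and the clique $C_e$, therefore gives $g_{\G}(U_{C_e}) = c_m$ for every $e$. Here $U_{C_e}$ is regarded as a distribution on $A(\G)$, which is consistent with~\eqref{eq:U_C}.

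The average cost is then $\sum_{e} \frac1N c_m = c_m$. Substituting this into the display above gives exactly $g_{\G}(P_X) - c_m = \Pi_0$.

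The final claim $L_{1/2}(P_{XY}) \ge \Pi_0$ follows from the definition~\eqref{eq:L_def} of $L_{1/2}$ as a supremum over encoders, or equivalently from~\eqref{eq:L_lower_T} with $T = T_0$.

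There is no real obstacle here. The only point to check is that Lemma~\ref{lemma:cost_clique} applies to the extended graph $\G$ rather than to $G$, and that each $C_e$ has exactly $m$ vertices, which holds by the construction in Section~\ref{subsec:extended_graph}.
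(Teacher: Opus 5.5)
Your proof is correct and follows the paper's argument essentially verbatim. You apply Lemma~\ref{lemma:lagrangian_costs} with $W = W_{\G}$, take the posteriors $U_{C_e}$ with weights $1/N$ from Lemma~\ref{lemma:T_0}, use Lemma~\ref{lemma:cost_clique} on the $m$-cliques $C_e$ of $\G$ to get average cost $c_m$, and conclude $L_{1/2}(P_{XY}) \ge \Pi_0$ because $L_{1/2}$ is a supremum over encoders.
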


\begin{proof}
    By Lemma~\ref{lemma:T_0}, the posteriors of $T_0$ are the
    distributions $U_{C_e}$, with probability $1/N$ each. Hence, by
    Lemma~\ref{lemma:lagrangian_costs}, we have that
    \begin{equation}\label{eq:IB_T0_costs}
        I(Y;T_0) - \frac12 I(X;T_0)
        = g_{\G}(P_X) - \frac1N \sum_{e \in E(G)} g_{\G}(U_{C_e}).
    \end{equation}
    Since each $C_e$ is a clique of $\G$ of size $m$,
    Lemma~\ref{lemma:cost_clique} gives $g_{\G}(U_{C_e}) = c_m$, and
    hence~\eqref{eq:IB_T0_costs} equals $g_{\G}(P_X) - c_m = \Pi_0$.
    Finally, $L_{1/2}(P_{XY}) \ge \Pi_0$ follows from the
    definition~\eqref{eq:L_def} of $L_{1/2}(P_{XY})$ as a supremum over all
    encoders.
\end{proof}

We now show that, if $G$ has no $k$-clique, then the encoder $T_0$ is optimal, 
that is, $L_{1/2}(P_{XY})=\Pi_0$. While,
if $G$ contains a $k$-clique, then the encoder $T_1$ strictly improves on $T_0$, that is, $L_{1/2}(P_{XY})>\Pi_0$.
This is the conceptual step to solve the \clique\ problem through the estimation  of $L_{1/2}(P_{XY})$, and it will 
prove the hardness of \IBL{}.

\subsubsection{No-instances: $G$ does not contain a $k$-clique}
We demonstrate that, if $G$ does not contain a $k$-clique, then no encoder of $X$ improves on
$T_0$.
\begin{lemma}\label{lemma:IB_no}
    Assume that $\omega(G) < k$. Then every encoder $T$ of $X$ satisfies
    \begin{equation}\label{eq:IB_no}
        I(Y;T) - \frac12 I(X;T) \le \Pi_0.
    \end{equation}
    Consequently, $L_{1/2}(P_{XY}) = \Pi_0$, and
    $\IB(P_{XY},R) \le \frac{R}{2} + \Pi_0$ for every $R \ge 0$.
\end{lemma}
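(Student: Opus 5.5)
The plan is to combine Lemma~\ref{lemma:lagrangian_costs} with the entropic Tur\'an bound of Lemma~\ref{lemma:entropic_turan}, applied to the extended graph $\G$ rather than to $G$. First, determine the clique number of $\G$. By property~3 of Lemma~\ref{lemma:gadget}, $\omega(\G) = \max\{\omega(G), m\}$. Under the assumption $\omega(G) < k$ we have $\omega(G) \le k-1 = m$, so $\omega(\G) = m$.

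Next, bound the costs. The channel from $X$ to $Y$ is $W_{\G}$, so Lemma~\ref{lemma:entropic_turan} gives $g_{\G}(Q) \ge c_m$ for every probability distribution $Q$ on $A(\G)$. Every posterior $P_{X\mid T}(\cdot\mid t)$ is such a distribution, so its cost is at least $c_m$, and hence the $P_T$-weighted average of the posterior costs is at least $c_m$. Lemma~\ref{lemma:lagrangian_costs}, in the form~\eqref{eq:L_upper_c}, then yields
\[
I(Y;T) - \tfrac12 I(X;T) \le g_{\G}(P_X) - c_m = \Pi_0
\]
for every encoder $T$. This is~\eqref{eq:IB_no}.

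Taking the supremum over $T$ gives $L_{1/2}(P_{XY}) \le \Pi_0$. Lemma~\ref{lemma:IB_T0} supplies the reverse inequality, so $L_{1/2}(P_{XY}) = \Pi_0$.

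For the constrained bound, fix $R \ge 0$ and any encoder $T$ with $I(X;T) \le R$. By~\eqref{eq:IB_no},
\[
I(Y;T) \le \Pi_0 + \tfrac12 I(X;T) \le \Pi_0 + \tfrac{R}{2}.
\]
Taking the supremum over all admissible $T$ gives $\IB(P_{XY},R) \le R/2 + \Pi_0$.

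There is no real obstacle here. The substantive work, namely the Motzkin--Straus-based lower bound on the cost, is already contained in Lemma~\ref{lemma:entropic_turan}. The only point requiring care is checking that the gadget does not raise the clique number above $m$, and property~3 of Lemma~\ref{lemma:gadget} gives exactly that.
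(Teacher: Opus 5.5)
Your proof is correct and follows the paper's argument essentially step for step: property~3 of Lemma~\ref{lemma:gadget} gives $\omega(\G)=m$, Lemma~\ref{lemma:entropic_turan} bounds every posterior cost below by $c_m$, Lemma~\ref{lemma:lagrangian_costs} turns this into~\eqref{eq:IB_no}, and Lemma~\ref{lemma:IB_T0} together with the direct rate-constraint step gives the remaining claims. One small point: \eqref{eq:L_upper_c} is stated for $L_{1/2}$ rather than for a single encoder, so the per-encoder bound you need follows directly from~\eqref{eq:lagrangian_costs}, which is what the paper cites.
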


\begin{proof}
    Since $\omega(G) \le k - 1 = m$, property~3 of Lemma~\ref{lemma:gadget}
    gives $\omega(\G) = m$. Hence, by Lemma~\ref{lemma:entropic_turan}
    applied to $\G$, we have that $g_{\G}(Q) \ge c_m$ for every probability
    distribution $Q$ on $A(\G)$, and in particular this holds for every posterior of
    an encoder $T$ of $X$. Thus, by Lemma~\ref{lemma:lagrangian_costs},
    every encoder $T$ of $X$ satisfies
    \begin{equation*}
        I(Y;T) - \frac12 I(X;T)
        = g_{\G}(P_X) - \sum_{t \in \cT} P_T(t)\, g_{\G}\big(P_{X\mid T}(\cdot \mid t)\big)
        \le g_{\G}(P_X) - c_m = \Pi_0,
    \end{equation*}
    which proves~\eqref{eq:IB_no}. Taking the supremum over all encoders
    gives $L_{1/2}(P_{XY}) \le \Pi_0$, and, together with
    Lemma~\ref{lemma:IB_T0}, $L_{1/2}(P_{XY}) = \Pi_0$. Finally, if
    $I(X;T) \le R$, then~\eqref{eq:IB_no} gives
    $I(Y;T) \le \frac12 I(X;T) + \Pi_0 \le \frac{R}{2} + \Pi_0$,
    and taking the supremum over all such encoders yields
    $\IB(P_{XY},R) \le \frac{R}{2} + \Pi_0$.
\end{proof}

\subsubsection{Yes-instances: $G$ does contain a $k$-clique}

We now assume that $G$ contains a $k$-clique $K$, and we show that the
encoder $T_1$ of Section~\ref{subsec:T_1} strictly improves on $T_0$. To this end, we choose
\begin{equation}\label{eq:IB_p}
    p = \frac{1}{4m^2},
    \qquad\text{so that}\qquad
    \varepsilon = 2\mu p = \frac{\mu}{2m^2}
    \quad\text{and}\quad
    \lambda = \frac{B\varepsilon}{N} = \frac{B\mu}{2Nm^2}.
\end{equation}

By Lemma~\ref{lemma:T_1}, the encoder $T_1$ replaces part of the
posteriors $U_{C_e}$ of $T_0$, whose cost is $c_m$, with the posterior
$U_K$, whose cost is $c_k < c_m$. In doing so, however, it perturbs
$U_{C_e}$ into $V_e$ for every $e \in E(K)$. The next lemma shows that
this perturbation increases the cost by at most a quantity quadratic in
$\varepsilon$.

\begin{lemma}\label{lemma:IB_Ve_cost}
    For every $e \in E(K)$,
    \begin{equation}\label{eq:IB_Ve_cost}
        g_{\bar G}(V_e)-c_m
\le
\frac12 D(V_e\|U_{C_e})
\le
\frac{\varepsilon^2}{4\mu(1-\varepsilon)^2\ln 2}.
    \end{equation}
\end{lemma}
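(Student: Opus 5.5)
We need $g(V_e) - c_m \le \tfrac12 D(V_e\|U_{C_e})$. The second inequality is immediate from Lemma~\ref{lemma:perturbation}, since $\tfrac12$ of the bound in~\eqref{eq:D_V_bound} is exactly $\varepsilon^2/(4\mu(1-\varepsilon)^2\ln 2)$. So the whole task is the first inequality.

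For the first inequality, I would expand the cost directly:
\[
g_{\G}(V_e) = H(W_{\G}V_e) - \tfrac12 H(V_e).
\]
By Lemma~\ref{lemma:perturbation}, $H(V_e) = \log s - D(V_e\|U_{C_e})$, so
\[
g_{\G}(V_e) = H(W_{\G}V_e) - \tfrac12\log s + \tfrac12 D(V_e\|U_{C_e}).
\]
Since $\supp(V_e)\subseteq A(C_e)$, the output distribution $W_{\G}V_e$ is supported on the $m$ vertices of $C_e$. Hence $H(W_{\G}V_e)\le \log m$.

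Combining these gives
\[
g_{\G}(V_e) \le \log m - \tfrac12\log\bigl(m(m-1)\bigr) + \tfrac12 D(V_e\|U_{C_e}) = c_m + \tfrac12 D(V_e\|U_{C_e}),
\]
where the last step uses the computation $\log m - \tfrac12\log(m(m-1)) = c_m$ from Lemma~\ref{lemma:cost_clique}. This is exactly the first inequality.

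No step here looks like a real obstacle. The only point to check is that $W_{\G}V_e$ is indeed supported inside $C_e$. This holds because every oriented edge in $A(C_e)$ has both endpoints in $C_e$, so the crude bound $H(W_{\G}V_e)\le\log|C_e|$ suffices and no computation of $W_{\G}V_e$ itself is needed.
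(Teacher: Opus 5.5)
Your proposal is correct and follows essentially the same route as the paper: it rewrites $c_m = \log m - \tfrac12\log s$, uses $H(V_e) = \log s - D(V_e\|U_{C_e})$ from Lemma~\ref{lemma:perturbation} to split $g_{\G}(V_e) - c_m$ into $\bigl(H(W_{\G}V_e) - \log m\bigr) + \tfrac12 D(V_e\|U_{C_e})$, and kills the first term by the support bound $H(W_{\G}V_e)\le\log m$. It then halves~\eqref{eq:D_V_bound} for the second inequality, exactly as you do.
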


\begin{proof}

The distribution $W_{\bar G}U_{C_e}$ is uniform on $C_e$,
whereas $W_{\bar G}V_e$ is supported on the same set.
Thus, replacing $U_{C_e}$ by $V_e$ cannot increase the
output entropy. Any increase in cost is therefore bounded
by one half of the decrease in input entropy, which,
by Lemma~\ref{lemma:perturbation}, equals $D(V_e\|U_{C_e})$.
We now make this comparison explicit.
    By~\eqref{eq:c_j}, we can write
    \begin{equation}\label{eq:c_m_mu}
        c_m = \frac12 \log\frac{m}{m-1} = \log m - \frac12 \log s.
    \end{equation}
    Combining~\eqref{eq:c_m_mu} with the definition~\eqref{eq:cost_def}
    of $g_{\G}$, and using $H(V_e) = \log s - D(V_e\|U_{C_e})$ from
    Lemma~\ref{lemma:perturbation}, we obtain that
    \begin{align}\label{eq:g_Ve_split}
        g_{\G}(V_e) - c_m &= \big( H(W_{\G} V_e) - \log m \big)
        + \frac12 \big( \log s - H(V_e) \big)\nonumber\\
        &= \big( H(W_{\G} V_e) - \log m \big) + \frac12 D(V_e\|U_{C_e}).
    \end{align}
    Since $\supp(V_e) \subseteq A(C_e)$, by~\eqref{eq:W_G} the
    distribution $W_{\G} V_e$ is supported on the $m$ vertices of $C_e$. This implies that $H(W_{\G} V_e) \le \log m$. Hence, the first term
    in~\eqref{eq:g_Ve_split} is non-positive, which proves the first
    inequality in~\eqref{eq:IB_Ve_cost}. The second one immediately follows
    from~\eqref{eq:D_V_bound} of Lemma~\ref{lemma:perturbation}.
\end{proof}

We now show that the encoder $T_1$ strictly improves on $T_0$.
\begin{lemma}\label{lemma:IB_yes}
    Assume that $\omega(G) \ge k$, let $T_1$ be the encoder defined
    in~\eqref{eq:T_1_def} with $p$ as in~\eqref{eq:IB_p}, and let
    \begin{equation}\label{eq:delta_def}
        \delta = \frac{1}{8Nm^4}.
    \end{equation}
    Then $I(Y;T_1) - \frac12 I(X;T_1) > \Pi_0 + \delta$. Consequently,
    $L_{1/2}(P_{XY}) > \Pi_0 + \delta$.
\end{lemma}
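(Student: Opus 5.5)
By Lemma~\ref{lemma:lagrangian_costs}, comparing $T_1$ with $T_0$ reduces to comparing the average costs of their posteriors. Lemma~\ref{lemma:T_1} lists the posteriors of $T_1$: $U_K$ with weight $\lambda$, $V_e$ with weight $(1-\varepsilon)/N$ for each of the $B$ edges $e\in E(K)$, and $U_{C_e}$ with weight $1/N$ for the remaining $N-B$ edges. The costs are $g_{\G}(U_K)=c_k=c_{m+1}$ and $g_{\G}(U_{C_e})=c_m$ by Lemma~\ref{lemma:cost_clique}, since $K$ is a clique of $\G$ of size $k=m+1$. For the perturbed posteriors, Lemma~\ref{lemma:IB_Ve_cost} gives $g_{\G}(V_e)\le c_m+\Delta$ with $\Delta=\varepsilon^2/(4\mu(1-\varepsilon)^2\ln 2)$. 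The weights sum to $1$ by~\eqref{eq:T_1_weights}, so subtracting $c_m$ times this identity from the average cost of $T_1$ leaves
\[
\sum_t P_{T_1}(t)\,g_{\G}\big(P_{X\mid T_1}(\cdot\mid t)\big) - c_m \le -\lambda(c_m-c_{m+1}) + \frac{B(1-\varepsilon)}{N}\,\Delta .
\]
Hence
\[
I(Y;T_1)-\tfrac12 I(X;T_1) - \Pi_0 \ge \lambda(c_m-c_{m+1}) - \frac{B(1-\varepsilon)}{N}\Delta .
\]

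Next I plug in the parameters. Lemma~\ref{lemma:c_gap} gives $c_m-c_{m+1}\ge 1/(2m^2\ln 2)$. With $\lambda=B\varepsilon/N$, the gain term is at least $B\varepsilon/(2Nm^2\ln 2)$. The loss term is at most
\[
\frac{B}{N}\cdot\frac{\varepsilon^2}{4\mu(1-\varepsilon)\ln 2}.
\]
Dividing the loss by the gain gives the ratio $\varepsilon m^2/(2\mu(1-\varepsilon))$. Substituting $\varepsilon=\mu/(2m^2)$ from~\eqref{eq:IB_p}, this ratio equals $1/(4(1-\varepsilon))$, which is less than $1/3$ because $\varepsilon\le \mu\le 1/6$. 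So the net improvement is at least two thirds of the gain:
\[
\frac{2}{3}\cdot\frac{B\varepsilon}{2Nm^2\ln 2}=\frac{B\mu}{6Nm^4\ln 2}.
\]

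It remains to compare this with $\delta=1/(8Nm^4)$. Since $B\mu=\frac{k(k-1)}{2}\cdot\frac{1}{m(m-1)}=\frac{(m+1)}{2(m-1)}>\frac12$, the bound exceeds $\frac{1}{12 N m^4\ln 2}$. Because $12\ln 2\approx 8.3>8$, this falls just short of $\delta$. So I need a slightly finer estimate: use the exact value $B\mu=(m+1)/(2(m-1))$, which for $m\ge 3$ is at most $1$ but at least $1/2$, and tighten the loss ratio to $1/(4(1-\varepsilon))\le 1/(4\cdot 35/36)=9/35$. The net improvement is then at least
\[
\frac{26}{35}\cdot\frac{B\mu}{2Nm^4\ln 2}\ge\frac{26}{35}\cdot\frac{1}{4Nm^4\ln 2}\approx\frac{0.268}{Nm^4},
\]
which is strictly larger than $\delta=0.125/(Nm^4)$.

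The only delicate step is this constant bookkeeping, and I would track $1-\varepsilon$ and $\ln 2$ carefully there. The conclusion $L_{1/2}(P_{XY})>\Pi_0+\delta$ then follows from the definition of $L_{1/2}$ as a supremum.
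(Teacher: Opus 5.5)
Your argument is correct and is essentially the paper's proof: the same decomposition $\Psi-c_m=-\lambda(c_m-c_k)+\frac{1-\varepsilon}{N}\sum_{e\in E(K)}\big(g_{\G}(V_e)-c_m\big)$, the same bounds from Lemmas~\ref{lemma:c_gap} and~\ref{lemma:IB_Ve_cost}, the same loss-to-gain ratio $1/(4(1-\varepsilon))$, and the same closing use of $B\mu\ge\frac12$; only the constant bookkeeping differs (the paper uses $\varepsilon\le\frac{1}{108}$). One slip, which turns out to be harmless: in your last display the gain should be $\frac{B\mu}{4Nm^4\ln 2}$, not $\frac{B\mu}{2Nm^4\ln 2}$, so the net improvement is at least $\frac{26}{35}\cdot\frac{1}{8Nm^4\ln 2}\approx\frac{0.134}{Nm^4}$ rather than $\frac{0.268}{Nm^4}$, which still strictly exceeds $\delta=\frac{1}{8Nm^4}$ because $\frac{26}{35}>\ln 2$ (you should also justify $\varepsilon\le\frac{1}{36}$, e.g.\ from $\varepsilon=\mu/(2m^2)\le\frac{1}{108}$).
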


\begin{proof}
    The idea of the proof is to compare the average cost of the posteriors
    of $T_1$ with the cost $c_m$ of the posteriors of $T_0$. 
    The encoder $T_0$ has average posterior cost $c_m$.
By Lemma~\ref{lemma:lagrangian_costs}, to show that $T_1$ improves the Lagrangian
objective over $T_0$ by more than $\delta$, it suffices
to prove that its average posterior cost is less than
$c_m-\delta$.
Let
    \begin{equation*}
        \Psi = \sum_{t} P_{T_1}(t)\, g_{\G}\big(P_{X\mid T_1}(\cdot \mid t)\big)
    \end{equation*}
    be the average cost of the posteriors of $T_1$. By
    Lemma~\ref{lemma:T_1}, and since $g_{\G}(U_K) = c_k$ and
    $g_{\G}(U_{C_e}) = c_m$ by Lemma~\ref{lemma:cost_clique}, we have that
    \begin{equation*}
        \Psi = \lambda c_k + \frac{1-\varepsilon}{N} \sum_{e \in E(K)} g_{\G}(V_e)
        + \frac{N-B}{N}\, c_m.
    \end{equation*}
    Since the probabilities in~\eqref{eq:T_1_distribution} sum to $1$, we
    can write $c_m$ as
    \begin{equation}\label{eq:c_m_weights}
        c_m = \lambda c_m + \frac{1-\varepsilon}{N} \sum_{e \in E(K)} c_m
        + \frac{N-B}{N}\, c_m.
    \end{equation}
    Subtracting~\eqref{eq:c_m_weights} from $\Psi$, the terms
    corresponding to $e \in E(G) \setminus E(K)$ cancel out, and we obtain
    that
    \begin{equation}\label{eq:Psi_minus_cm}
        \Psi - c_m = -\lambda (c_m - c_k)
        + \frac{1-\varepsilon}{N} \sum_{e \in E(K)} \big( g_{\G}(V_e) - c_m \big).
    \end{equation}
The negative term $-\lambda(c_m-c_k)$ represents the
decrease in average posterior cost contributed by $U_K$,
relative to the baseline $c_m$.
The second term is the total weighted change in cost
caused by perturbing $U_{C_e}$ into $V_e$.
We bound $\lambda(c_m-c_k)$ from below and the second
term from above.

We first bound \(\lambda(c_m-c_k)\) from below.
Since $k = m+1$, Lemma~\ref{lemma:c_gap} gives
    \begin{equation}\label{eq:IB_gain}
        \lambda(c_m - c_k) \ge \frac{\lambda}{2m^2 \ln 2}.
    \end{equation}
    We now bound the perturbation term from above. We first note that $\varepsilon$ is small. Indeed,
    since $m \ge 3$, we have that $\mu \le \frac16$ and $m^2 \ge 9$, so
    that $\varepsilon = \frac{\mu}{2m^2} \le \frac{1}{108}$, and hence
    $\frac{1}{1-\varepsilon} \le \frac{108}{107}$. Therefore, since
    $|E(K)| = B$, by Lemma~\ref{lemma:IB_Ve_cost} and using
    $\lambda = \frac{B\varepsilon}{N}$ and
    $\frac{\varepsilon}{\mu} = \frac{1}{2m^2}$, we get that
    \begin{equation}\label{eq:IB_loss}
        \frac{1-\varepsilon}{N} \sum_{e \in E(K)} \big( g_{\G}(V_e) - c_m \big)
        \le \frac{B\varepsilon^2}{4N\mu(1-\varepsilon)\ln 2}
        = \frac{\lambda}{m^2 \ln 2} \cdot \frac{1}{8(1-\varepsilon)}
        \le \frac{\lambda}{m^2 \ln 2} \cdot \frac{27}{214}.
    \end{equation}
    Substituting~\eqref{eq:IB_gain} and~\eqref{eq:IB_loss}
    into~\eqref{eq:Psi_minus_cm}, we get that
    \begin{equation*}
        \Psi - c_m \le -\frac{\lambda}{m^2 \ln 2}\Big( \frac12 - \frac{27}{214} \Big)
        = -\frac{40}{107} \cdot \frac{\lambda}{m^2 \ln 2}
        < -\frac{\lambda}{2m^2},
    \end{equation*}
    where the last inequality follows from $\ln 2 < \frac{7}{10}$, which
    gives $\frac{40}{107 \ln 2} > \frac{400}{749} > \frac12$. Since
    $\lambda = \frac{B\mu}{2Nm^2}$ by~\eqref{eq:IB_p} and
    $B\mu = \frac{k(k-1)}{2} \cdot \frac{1}{(k-1)(k-2)} = \frac{k}{2(k-2)} \ge \frac12$,
    we obtain that
    \begin{equation}\label{eq:Psi_bound}
        \Psi - c_m < -\frac{B\mu}{4Nm^4} \le -\frac{1}{8Nm^4} = -\delta.
    \end{equation}
    Finally, by Lemma~\ref{lemma:lagrangian_costs}
    and~\eqref{eq:Psi_bound},
    $I(Y;T_1) - \frac12 I(X;T_1) = g_{\G}(P_X) - \Psi > g_{\G}(P_X) - c_m + \delta = \Pi_0 + \delta$. The last statement $L_{1/2}(P_{XY})> \Pi_0 + \delta$ follows from the definition of $L_{1/2}(P_{XY})$ as a supremum over all encoders.
\end{proof}

\subsection{Proof of Theorem~\ref{thm:IBL_hard}}\label{subsec:IBL_proof}
We now combine the results of the previous sections to prove Theorem~\ref{thm:IBL_hard}. Let $(G,k)$ be an
instance of \clique{} satisfying~\eqref{eq:assumption_on_G}, and let
$P_{XY}$ be the joint distribution defined in~\eqref{eq:IB_joint}.
By Lemmas~\ref{lemma:IB_no} and~\ref{lemma:IB_yes}, we have that
$L_{1/2}(P_{XY}) = \Pi_0$ if $\omega(G) < k$, and
$L_{1/2}(P_{XY}) > \Pi_0 + \delta$ if $\omega(G) \ge k$. Hence, the
two cases are separated by a gap of size $\delta$, and we place the
threshold in the middle of this gap. However, since $\Pi_0$ involves
logarithms, and is therefore in general irrational, we choose the
threshold $\tau$ as a rational number such that
\begin{equation}\label{eq:tau_def}
    \Big| \tau - \Big( \Pi_0 + \frac{\delta}{2} \Big) \Big| \le \frac{\delta}{12}.
\end{equation}
Therefore, the instance of \IBL{} associated with $(G,k)$ is the pair $(P_{XY},\tau)$, where $P_{XY}$ is defined in~\eqref{eq:IB_joint} and $\tau$ in~\eqref{eq:tau_def}.
Before presenting the proof, since the inapproximability results are stated in terms of the size of
the input, the next lemma relates $\delta$ defined in~\eqref{eq:delta_def} to the size $M = |\cX|$ of
the alphabet of $X$.

\begin{lemma}\label{lemma:delta_bound}
    Let $\delta$ be as in~\eqref{eq:delta_def}. Then
    $\delta \ge \frac{1}{18M^{3/2}}$.
\end{lemma}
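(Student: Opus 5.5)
Since $\delta = 1/(8Nm^4)$ and $M = Ns$ with $s = m(m-1)$ by property~2 of Lemma~\ref{lemma:gadget}, the claim $\delta \ge 1/(18M^{3/2})$ is equivalent to
\begin{equation*}
    8Nm^4 \le 18\,(Ns)^{3/2} = 18\,N^{3/2} m^{3/2}(m-1)^{3/2}.
\end{equation*}
Dividing by $N$, I need $8m^4 \le 18\,N^{1/2}\, m^{3/2}(m-1)^{3/2}$. The natural plan is to use the assumption $N \ge B$ from~\eqref{eq:assumption_on_G}, since $N$ must be large enough to absorb the surplus powers of $m$, and $B$ is quadratic in $m$.

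The steps are as follows. First, write $B = k(k-1)/2 = (m+1)m/2 \ge m^2/2$, so that $N^{1/2} \ge B^{1/2} \ge m/\sqrt2$. Second, bound $(m-1)^{3/2}$ from below in terms of $m^{3/2}$ using $m \ge 3$: we have $m-1 \ge \tfrac23 m$, hence $(m-1)^{3/2} \ge (2/3)^{3/2} m^{3/2}$. Combining these, the right-hand side is at least
\begin{equation*}
    18 \cdot \frac{m}{\sqrt2} \cdot m^{3/2} \cdot \Big(\frac23\Big)^{3/2} m^{3/2} = \frac{18}{\sqrt2}\Big(\frac23\Big)^{3/2} m^4 .
\end{equation*}
The third step is the numerical check that this constant is at least $8$. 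Indeed, $(2/3)^{3/2} \approx 0.544$ and $18/\sqrt2 \approx 12.73$, so the product is about $6.93$.

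That falls short of $8$, so the main obstacle is getting the constants right. I would first tighten the first step by keeping $B = m(m+1)/2$ exactly, which gives $N^{1/2} \ge \sqrt{m(m+1)/2}$. I would then compare $\sqrt{m(m+1)}\,(m-1)^{3/2}$ with $m^3$ directly. The quantity $(m+1)(m-1)^3/m^4 = (1-1/m^2)(1-1/m)^2$ is increasing in $m$, so its minimum over $m \ge 3$ is at $m=3$, where it equals $(8/9)(4/9) = 32/81$. This yields the constant
\begin{equation*}
    \frac{18}{\sqrt2}\sqrt{\frac{32}{81}} = \frac{18}{\sqrt2}\cdot\frac{4\sqrt2}{9} = 8,
\end{equation*}
which is exactly what is needed, with equality at $m=3$. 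So the proof reduces to this monotonicity check plus the substitution $N \ge B$, and I expect the statement holds with equality in the extreme case $m=3$, $N=B$.
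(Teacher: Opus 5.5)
Your proof is correct and is essentially the paper's argument. The paper uses the same ingredients: $M = Ns$ and $N \ge B = m(m+1)/2$, giving $M \ge m^2(m^2-1)/2$, together with the bounds $1-1/m^2 \ge 8/9$ and $1-1/m \ge 2/3$ for $m \ge 3$. These are exactly the factors of your product $(1-1/m^2)(1-1/m)^2 \ge 32/81$, and both arguments are tight at $m=3$, $N=B$.

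One small slip: the quantity to compare with $\sqrt{m(m+1)}\,(m-1)^{3/2}$ is $m^{5/2}$, not $m^3$. The ratio $(m+1)(m-1)^3/m^4$ that you then compute is nonetheless the correct one.
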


\begin{proof}
    We first express $\delta$ in terms of $M$ and $m$. By property~2 of
    Lemma~\ref{lemma:gadget}, $N = M/(m(m-1))$, and substituting this
    into~\eqref{eq:delta_def} we obtain that
    \begin{equation}\label{eq:delta_M}
        \delta = \frac{m(m-1)}{8Mm^4} = \frac{1 - 1/m}{8Mm^2}.
    \end{equation}
    We now bound $m$ from above in terms of $M$. Since
    $B = \frac{m(m+1)}{2}$ and $N \ge B$ by~\eqref{eq:assumption_on_G},
    we have that $M = Nm(m-1) \ge \frac{m^2(m^2-1)}{2}$. Since $m \ge 3$,
    we have that $m^2 - 1 \ge \frac89 m^2$, and hence
    $M \ge \frac49 m^4$, that is,
    \begin{equation}\label{eq:m_upper}
        m^2 \le \frac32 \sqrt{M}.
    \end{equation}
    Finally, using~\eqref{eq:m_upper} and $1 - \frac1m \ge \frac23$
    in~\eqref{eq:delta_M}, we conclude that
    $\delta \ge \frac{2/3}{8M \cdot \frac32\sqrt{M}} = \frac{1}{18M^{3/2}}$.
\end{proof}

\begin{proof}[Proof of Theorem~\ref{thm:IBL_hard}]
     We start by showing that \IBL{} is NP-hard. To this end, we show that the construction of the instance $(P_{XY},\tau)$
    described above is a polynomial-time reduction from \clique{} to
    \IBL{}. Given an instance $(G,k)$ of \clique{}
    satisfying~\eqref{eq:assumption_on_G}, the reduction constructs the
    extended graph $\G$, the joint distribution $P_{XY}$ defined
    in~\eqref{eq:IB_joint}, and a rational threshold $\tau$
    satisfying~\eqref{eq:tau_def}, and outputs the instance $(P_{XY},\tau)$ of \IBL{}.

    We first show that the reduction runs in polynomial time. Since $G$ is
    represented by its adjacency matrix, the size of $(G,k)$ is at least
    $n$, so it suffices to show that the construction runs in time
    polynomial in $n$. By property~4 of Lemma~\ref{lemma:gadget}, $\G$ can
    be constructed in polynomial time, with $|V(\G)| \le n + n^3$ and
    $M \le n^4$. Hence, $P_{XY}$ is a table with
    $|A(\G)| \cdot |V(\G)| \le n^4(n + n^3)$ entries, each equal to $0$ or
    to $1/(2M)$ by~\eqref{eq:IB_joint}, which are rational numbers with
    $O(\log n)$ bits. It remains to compute $\tau$. Note that this does
    not require computing $L_{1/2}(P_{XY})$, but only approximating
    $\Pi_0 + \delta/2$. The value $\delta$ in~\eqref{eq:delta_def} is
    a rational number with $O(\log n)$ bits, while $\Pi_0$ is an
    explicit expression involving polynomially many logarithms of rational
    numbers. Indeed, $H(P_X) = \log M$, $c_m = \frac12 \log\frac{m}{m-1}$,
    and, denoting by $d_v$ the degree of a vertex $v$ in $\G$, each vertex
    $v$ is an endpoint of $2d_v$ oriented edges, each of probability
    $1/M$ and mapped to $v$ with probability $1/2$, so that
    $(W_{\G} P_X)(v) = \frac{d_v}{M}$ and
    \begin{equation*}
        H(W_{\G} P_X) = \sum_{v \in V(\G) : d_v > 0} \frac{d_v}{M} \log\frac{M}{d_v}.
    \end{equation*}
    Hence, by Lemma~\ref{lem:log} in Appendix~\ref{app:log} with additive
    error $\delta/12$, a rational $\tau$ satisfying~\eqref{eq:tau_def} can
    be computed in polynomial time. Hence, the whole reduction runs in polynomial time.

    We now show that the reduction is correct, that is, $(P_{XY},\tau)$ is
    a yes-instance of \IBL{} \textit{if and only if} $(G,k)$ is a yes-instance of \clique{}. Recall that $(P_{XY}, \tau)$ is a yes-instance of \IBL{} if
    there exists an encoder $T$ of $X$ such that
    $I(Y;T) - \frac12 I(X;T) \ge \tau$, and that $(G,k)$ is a yes-instance
    of \clique{} if $\omega(G) \ge k$, that is, $G$ contains a clique of size $k$. 
    By~\eqref{eq:tau_def}, the threshold $\tau$ satisfies
    \begin{equation}\label{eq:tau_window}
        \Pi_0 + \frac{5\delta}{12} \le \tau \le \Pi_0 + \frac{7\delta}{12},
    \end{equation}
    that is, $\tau$ lies strictly between $\Pi_0$ and
    $\Pi_0 + \delta$. Assume first that $(G,k)$ is a yes-instance, that is, $\omega(G) \ge k$. By
    Lemma~\ref{lemma:IB_yes} and the right inequality
    in~\eqref{eq:tau_window}, the encoder $T_1$ satisfies
    \begin{equation*}
        I(Y;T_1) - \frac12 I(X;T_1) > \Pi_0 + \delta
        \ge \tau - \frac{7\delta}{12} + \delta = \tau + \frac{5\delta}{12}.
    \end{equation*}
    Hence, $(P_{XY},\tau)$ is a yes-instance of \IBL{}, and
    $L_{1/2}(P_{XY}) > \tau + \frac{5\delta}{12}$. Assume now that $(G,k)$ is a no-instance, that is,
    $\omega(G) < k$. By Lemma~\ref{lemma:IB_no} and the left inequality
    in~\eqref{eq:tau_window}, every encoder $T$ of $X$ satisfies
    $I(Y;T) - \frac12 I(X;T) \le \Pi_0 \le \tau - \frac{5\delta}{12} < \tau$.
    Hence, $(P_{XY},\tau)$ is a no-instance of \IBL{}, and
    $L_{1/2}(P_{XY}) \le \tau - \frac{5\delta}{12}$. This proves that the
    reduction is correct, and hence that \IBL{} is NP-hard.

    It remains to prove the inapproximability result. By the two cases
    above, $L_{1/2}(P_{XY})$ lies above $\tau + \frac{5\delta}{12}$ if
    $\omega(G) \ge k$, and below $\tau - \frac{5\delta}{12}$ if $\omega(G)<k$.
    Suppose that a polynomial-time algorithm ${\mathcal Alg}$ outputs, given $P_{XY}$, a
    value $v$ with $|v - L_{1/2}(P_{XY})| \le \frac{1}{54M^{3/2}}$. By
    Lemma~\ref{lemma:delta_bound}, this error is at most
    $\frac{\delta}{3} < \frac{5\delta}{12}$. Hence, if $\omega(G) \ge k$,
    then $v \ge L_{1/2}(P_{XY}) - \frac{\delta}{3} > \tau$, while, if
    $\omega(G) < k$, then $v \le L_{1/2}(P_{XY}) + \frac{\delta}{3} < \tau$.
    Thus, running the reduction, then the algorithm ${\mathcal Alg}$, and comparing $v$
    with $\tau$ would decide \clique{} in polynomial time, which is
    impossible unless $\mathrm{P} = \mathrm{NP}$.
\end{proof}

\subsection{General fixed Lagrange multipliers}\label{subsec:other_beta}

The hardness result of Theorem~\ref{thm:IBL_hard} can be extended to any
other fixed multiplier $\beta \in (0,1)$. Note that, for $\beta \ge 1$,
the data-processing inequality gives $L_\beta(P_{XY}) = 0$.

\begin{proposition}\label{prop:other_beta}
    Let $\beta \in \mathbb{Q} \cap (0,1)$ be fixed, and let
    $\kappa_\beta = 2\min\{\beta, 1-\beta\}$. Given a rational joint
    distribution $P_{XY}$, one can construct in polynomial time a rational
    joint distribution $\tilde P_{XZ}$ with the same marginal $P_X$ such
    that
    \begin{equation}\label{eq:L_beta_scaling}
        L_\beta(\tilde P_{XZ}) = \kappa_\beta\, L_{1/2}(P_{XY}).
    \end{equation}
    Consequently, \IBL$_\beta$ is NP-hard, and, unless
    $\mathrm{P} = \mathrm{NP}$, no polynomial-time algorithm approximates
    $L_\beta(P_{XY})$ within an additive error of
    $\kappa_\beta/(54|\cX|^{3/2})$.
\end{proposition}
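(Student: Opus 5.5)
The plan is to build $Z$ from $(X,Y)$ by an auxiliary channel with independent randomness. Then every encoder $T$ of $X$ satisfies the Markov chain $T-X-Z$, and $I(Z;T)$ becomes an explicit affine combination of $I(X;T)$ and $I(Y;T)$. We pick the mixing weights so that $I(Z;T)-\beta I(X;T)$ is exactly $\kappa_\beta\,[I(Y;T)-\tfrac12 I(X;T)]$ for every encoder $T$. Taking the supremum over $T$, which commutes with multiplication by the positive constant $\kappa_\beta$, then gives~\eqref{eq:L_beta_scaling}. The marginal of $X$ is untouched in both constructions below.

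\emph{Case $\beta\le 1/2$ (erasure).} Let $q=2\beta\in(0,1]$, a rational. Let $J$ be a Bernoulli$(q)$ flag independent of $(X,Y)$, and set $Z=Y$ if $J=1$ and $Z=\bot$ (a new symbol) otherwise. Since $J$ is a function of $Z$ and independent of $(X,Y,T)$, the chain rule as in the proof of Lemma~\ref{lemma:mixture} gives $I(Z;T)=I(J;T)+I(Z;T\mid J)=q\,I(Y;T)$. Here one uses that the pair $(Y,T)$ given $J=1$ has its original law, which holds because $J$ is independent of $(X,Y,T)$. Hence $I(Z;T)-\beta I(X;T)=2\beta\,[I(Y;T)-\tfrac12 I(X;T)]$.

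\emph{Case $\beta>1/2$ (partial revelation of $X$).} Let $q=2\beta-1\in(0,1)$. With an independent flag $J\sim$ Bernoulli$(q)$, set $Z=(1,X)$ if $J=1$ and $Z=(0,Y)$ otherwise. The same argument gives $I(Z;T)=q\,I(X;T)+(1-q)\,I(Y;T)$. Therefore
\[
I(Z;T)-\beta I(X;T)=(1-q)\Big[I(Y;T)-\tfrac{\beta-q}{1-q}I(X;T)\Big]=2(1-\beta)\Big[I(Y;T)-\tfrac12 I(X;T)\Big],
\]
since $\tfrac{\beta-q}{1-q}=\tfrac{1-\beta}{2(1-\beta)}=\tfrac12$. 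In both cases the joint table $\tilde P_{XZ}$ has rational entries, namely products of entries of $P_{XY}$ or $P_X$ with the fixed rationals $q$ and $1-q$. It has size linear in that of $P_{XY}$, so it is computable in polynomial time.

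Hardness then follows by composing with the reduction of Theorem~\ref{thm:IBL_hard}. Map $(G,k)$ to $(\tilde P_{XZ},\kappa_\beta\tau)$, which is a yes-instance of \IBL$_\beta$ iff $(P_{XY},\tau)$ is a yes-instance of \IBL{}. For the inapproximability claim, note that $\tilde P_{XZ}$ has the same alphabet $\cX$. An algorithm approximating $L_\beta(\tilde P_{XZ})$ within $\kappa_\beta/(54|\cX|^{3/2})$ would, after dividing its output by $\kappa_\beta$, approximate $L_{1/2}(P_{XY})$ within $1/(54|\cX|^{3/2})$, which contradicts Theorem~\ref{thm:IBL_hard} unless $\mathrm{P}=\mathrm{NP}$.

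The only delicate step is the identity $I(Z;T)=\sum_j P(J=j)\,I(Z;T\mid J=j)$. It needs $J$ to be a deterministic function of $Z$, so that $I(J;T)=0$ and $I(Z;T\mid J)=I(Z;T)$. This is why the flag is appended to $Z$, and why in the erasure case the symbol $\bot$ is taken disjoint from $\cY$.
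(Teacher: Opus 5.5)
Your proof is correct and is essentially the paper's. The paper uses a single flag $J$ with three values, of probabilities $\kappa_\beta$, $\rho_\beta=\beta-\kappa_\beta/2$ and $e_\beta=1-\kappa_\beta-\rho_\beta$, revealing $(0,Y)$, $(1,X)$ or $\bot$; this specializes exactly to your erasure construction for $\beta\le 1/2$ and to your partial-revelation-of-$X$ construction for $\beta>1/2$, and the transfer of NP-hardness and of the $\kappa_\beta/(54|\cX|^{3/2})$ inapproximability is the same (one harmless slip: for $\beta>1/2$ the new table has $|\cX|(|\cX|+|\cY|)$ entries, so its size is polynomial, at most quadratic, rather than linear in that of $P_{XY}$).
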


\begin{proof}
    We first prove~\eqref{eq:L_beta_scaling}. Let
    $\rho_\beta = \beta - \frac{\kappa_\beta}{2}$ and
    $e_\beta = 1 - \kappa_\beta - \rho_\beta$. These three numbers form a
    probability distribution. Indeed, they sum to $1$ by definition, and,
    if $\beta \le \frac12$, then $\kappa_\beta = 2\beta$, $\rho_\beta = 0$
    and $e_\beta = 1 - 2\beta \ge 0$, while, if $\beta \ge \frac12$, then
    $\kappa_\beta = 2(1-\beta)$, $\rho_\beta = 2\beta - 1 \ge 0$ and
    $e_\beta = 0$. Let $J$ be a random variable independent of $(X,Y)$,
    taking values $0$, $1$ and $2$ with probabilities $\kappa_\beta$,
    $\rho_\beta$ and $e_\beta$, respectively, and let $Z$ be the random
    variable on $\cZ = (\{0\} \times \cY) \cup (\{1\} \times \cX) \cup \{\bot\}$,
    where $\bot$ is a new symbol, defined by
    \begin{equation}\label{eq:Z_def}
        Z =
        \begin{cases}
            (0,Y), & \text{if } J = 0,\\
            (1,X), & \text{if } J = 1,\\
            \bot, & \text{if } J = 2.
        \end{cases}
    \end{equation}
    In other words, $Z$ reveals $Y$ with probability $\kappa_\beta$,
    reveals $X$ with probability $\rho_\beta$, and reveals nothing with
    probability $e_\beta$. Let $\tilde P_{XZ}$ be the joint distribution of
    $(X,Z)$. Its marginal on $X$ is $P_X$, so that the encoders of $X$ for
    $P_{XY}$ and for $\tilde P_{XZ}$ coincide.

    Let $T$ be an encoder of $X$. Since $J$ is a function of $Z$ and is
    independent of $(X,Y,T)$, the chain rule gives
    \begin{equation}\label{eq:I_Z_T}
        I(Z;T) = I(Z,J;T) = I(J;T) + I(Z;T \mid J)
        = \kappa_\beta\, I(Y;T) + \rho_\beta\, I(X;T),
    \end{equation}
    since $I(J;T) = 0$, and, given $J$, the random variable $Z$ is
    equivalent to $Y$, to $X$, or constant. Subtracting $\beta I(X;T)$
    from both sides of~\eqref{eq:I_Z_T}, and using
    $\beta - \rho_\beta = \kappa_\beta/2$, we get that
    \begin{equation}\label{eq:L_beta_identity}
        I(Z;T) - \beta I(X;T) = \kappa_\beta \Big( I(Y;T) - \frac12 I(X;T) \Big).
    \end{equation}
    Since~\eqref{eq:L_beta_identity} holds for every encoder, taking the
    supremum over all encoders proves~\eqref{eq:L_beta_scaling}.

    We now show that \IBL{}$_\beta$ is NP-hard by reducing \IBL{} to it. Given an instance
    $(P_{XY},\tau)$ of \IBL{}, consider the instance
    $(\tilde P_{XZ}, \kappa_\beta\tau)$ of \IBL$_\beta$.
    By~\eqref{eq:L_beta_identity}, an encoder satisfies
    $I(Y;T) - \frac12 I(X;T) \ge \tau$ if and only if it satisfies
    $I(Z;T) - \beta I(X;T) \ge \kappa_\beta\tau$, so the two instances have
    the same answer. Moreover, $|\cZ| = |\cX| + |\cY| + 1$, and, since
    $\kappa_\beta$, $\rho_\beta$ and $e_\beta$ are fixed rational numbers,
    by~\eqref{eq:Z_def} each entry of $\tilde P_{XZ}$ is a fixed rational
    multiple of an entry of $P_{XY}$ or of $P_X$. Hence, the instance
    $(\tilde P_{XZ}, \kappa_\beta\tau)$ can be computed in polynomial time,
    and, by Theorem~\ref{thm:IBL_hard}, \IBL$_\beta$ is NP-hard.

    Finally, suppose that a polynomial-time algorithm approximates
    $L_\beta$ within an additive error of $\kappa_\beta/(54|\cX|^{3/2})$.
    Running it on $\tilde P_{XZ}$, which has the same input alphabet $\cX$
    as $P_{XY}$, and dividing its output by $\kappa_\beta$, we obtain,
    by~\eqref{eq:L_beta_scaling}, an approximation of $L_{1/2}(P_{XY})$
    within $1/(54|\cX|^{3/2})$, which, by Theorem~\ref{thm:IBL_hard}, is
    impossible unless $\mathrm{P} = \mathrm{NP}$.
\end{proof}

\begin{remark}\label{rem:real_beta}
    If $\beta$ is irrational, then $\kappa_\beta$, $\rho_\beta$ and
    $e_\beta$ may be irrational, and  the joint distribution $\tilde P_{XZ}$, constructed in the proof of Proposition~\ref{prop:other_beta}, is not a rational
    distribution. Nevertheless, we can still extend Proposition~\ref{prop:other_beta},
    with a different constant, to every fixed real $\beta \in (0,1)$ that
    is computable in polynomial time, that is, such that a rational number
    within distance $2^{-r}$ of $\beta$ can be computed in time polynomial
    in $r$. The idea is to apply the construction~\eqref{eq:Z_def} with a
    rational multiplier $b$ close to $\beta$, chosen depending on the
    instance. Replacing $\beta$ with $b$ changes the objective of every
    encoder by $|\beta - b|\, I(X;T) \le |\beta - b| \log|\cX|$, so that
    $|L_\beta(\tilde P_{XZ}) - L_b(\tilde P_{XZ})| \le |\beta - b| \log|\cX|$.
    The instances of Theorem~\ref{thm:IBL_hard} are separated from the
    threshold by a margin $\gamma = \frac{5\delta}{12}$, which becomes
    $\kappa_b \gamma$ after the construction~\eqref{eq:Z_def}. Hence, if
    $|\beta - b| \log|\cX|$ is a small fraction of $\kappa_b \gamma$, the
    threshold $\kappa_b \tau$ still separates yes- and no-instances. By
    Lemma~\ref{lemma:delta_bound}, $\gamma$ is of order $|\cX|^{-3/2}$, so
    the required precision is of order
    $(|\cX|^{3/2} \log|\cX|)^{-1}$, which corresponds to $O(\log|\cX|)$
    bits. Hence, $b$ can be computed in polynomial time, and the arguments
    in the proof of Proposition~\ref{prop:other_beta} show that, for every
    fixed polynomial-time computable real $\beta \in (0,1)$,
    \IBL$_\beta$ is NP-hard, and that $L_\beta$ cannot be approximated in
    polynomial time within an additive error of order $|\cX|^{-3/2}$,
    unless $\mathrm{P} = \mathrm{NP}$.
\end{remark}

\subsection{Proof of Theorem~\ref{thm:IBT_hard}}\label{subsec:IBT_proof}

We now prove Theorem~\ref{thm:IBT_hard}, using the same joint distribution
$P_{XY}$ defined in~\eqref{eq:IB_joint}. However, in this case, an instance of \IBT{} is a
triple $(P_{XY},R,\vartheta)$. Thus, we need to define the budget $R$ and
the threshold $\vartheta$. By Lemma~\ref{lemma:IB_no}, if
$\omega(G) < k$, then $\IB(P_{XY},R) \le \frac{R}{2} + \Pi_0$ for
every $R \ge 0$. On the other hand, if $\omega(G) \ge k$, then, by
Lemma~\ref{lemma:IB_yes},
$I(Y;T_1) > \frac12 I(X;T_1) + \Pi_0 + \delta$. 

We must choose the budget $R$ to satisfy two requirements.
Whenever $G$ contains a $k$-clique, we need
$R\ge I(X;T_1)$ so that $T_1$ satisfies the rate constraint.
We also need $R$ to be sufficiently close to $I(X;T_1)$
so that the upper bound $\frac{R}{2}+\Pi_0$ for the
no-case given by Lemma~\ref{lemma:IB_no} remains below $I(Y;T_1)$ by a gap of order $\delta$.

The difficulty is that the reduction must choose $R$
without knowing whether $G$ contains a $k$-clique,
while $T_1$ is defined only when such a clique exists.
The next lemma addresses this difficulty by giving an
explicit expression, depending only on $N$ and $k$,
that is defined for every instance and equals $I(X;T_1)$
whenever $G$ contains a $k$-clique.
The reduction can therefore use this expression to choose
$R$ from $(G,k)$ alone, without knowing whether such a
clique exists.
\begin{lemma}\label{lemma:R_star}
    Let $p$, $\varepsilon$ and $\lambda$ be as in~\eqref{eq:IB_p}, let
    $a_0$ and $a_1$ be as in~\eqref{eq:V_e_def}, and let
    \begin{equation}\label{eq:R_star}
        R^* = \log N - \lambda \log\frac{m+1}{m-1}
        + \frac{B(1-\varepsilon)}{N}
        \Big( \log s - (s-2)\, a_0 \log\frac{1}{a_0} - 2a_1 \log\frac{1}{a_1} \Big).
    \end{equation}
    Then $R^*$ depends only on $N$ and $k$, and, if $\omega(G) \ge k$, the
    encoder $T_1$ defined in~\eqref{eq:T_1_def}, with $p$ as
    in~\eqref{eq:IB_p}, satisfies $I(X;T_1) = R^*$.
\end{lemma}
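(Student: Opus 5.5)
The statement has two parts: the formula \eqref{eq:R_star} uses only $N$ and $k$, and it coincides with $I(X;T_1)$ whenever $K$ exists. The plan is to check the first part by tracing every ingredient back to $N$ and $k$, and the second by direct substitution into Lemma~\ref{lemma:T_1_rate}.

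For the first claim, I will note that $m=k-1$, $s=m(m-1)$, $\mu=1/s$ and $B=\binom{k}{2}$ are functions of $k$ alone. By \eqref{eq:IB_p}, $p=1/(4m^2)$ and $\varepsilon=\mu/(2m^2)$ depend only on $k$, while $\lambda=B\varepsilon/N$ depends on $N$ and $k$. By \eqref{eq:V_e_def}, $a_0=\mu/(1-\varepsilon)$ and $a_1=\mu(1-p)/(1-\varepsilon)$ also depend only on $k$. Every term of \eqref{eq:R_star} is therefore determined by $(N,k)$. None of these definitions refers to $K$, so $R^*$ is well defined even when $G$ has no $k$-clique.

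For the second claim, I assume $\omega(G)\ge k$ and fix a $k$-clique $K$, so that $T_1$ is defined. Lemma~\ref{lemma:T_1_rate} gives
\[
I(X;T_1)=\log N-\lambda\log\frac{m+1}{m-1}+\frac{B(1-\varepsilon)}{N}D_V ,
\]
so it remains to identify $D_V$ with the bracket in \eqref{eq:R_star}. Lemma~\ref{lemma:perturbation} gives $D_V=\log s-H(V_e)$ for any $e\in E(K)$. From \eqref{eq:V_e_def}, $V_e$ takes the value $a_0$ on the $s-2$ elements of $A(C_e)\setminus A(e)$, the value $a_1$ on the two elements of $A(e)$, and $0$ elsewhere. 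Hence
\[
H(V_e)=(s-2)\,a_0\log\frac{1}{a_0}+2a_1\log\frac{1}{a_1}.
\]
Substituting this into the expression for $I(X;T_1)$ yields exactly $R^*$.

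There is no genuine obstacle here. The only points needing care are that the sign of the $D_V$ term in Lemma~\ref{lemma:T_1_rate} matches \eqref{eq:R_star}, which it does since $-\eta$ contributes $+\frac{B(1-\varepsilon)}{N}D_V$, and that $a_0,a_1>0$, so no $0\log 0$ convention is needed. The latter holds because $p<1$ and $\varepsilon<1$.
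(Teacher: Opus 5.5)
Your proposal is correct and follows the paper's proof essentially step for step. You trace $p,\varepsilon,\lambda,a_0,a_1$ back to $(N,k)$, then substitute $D_V=\log s-H(V_e)$ (from Lemma~\ref{lemma:perturbation}) and the explicit entropy of the two-valued $V_e$ into Lemma~\ref{lemma:T_1_rate}, and your remarks on the sign of the $D_V$ term and on $a_0,a_1>0$ match the paper's own checks.
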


\begin{proof}
    We first show that $R^*$ depends only on $N=|E(G)|$ and $k$. Recall that
    $m = k-1$, $B = k(k-1)/2$, $s = m(m-1)$ and $\mu = 1/s$. Hence,
    by~\eqref{eq:V_e_def} and~\eqref{eq:IB_p}, the quantities $p$,
    $\varepsilon$, $\lambda$, $a_0$ and $a_1$, and thus all the quantities
    in~\eqref{eq:R_star}, depend only on $N$ and $k$. Moreover,
    $a_0, a_1 > 0$, so the logarithms are well defined.

Now assume that $\omega(G)\ge k$, and fix a clique $K$
of size $k$ in $G$. Let $T_1$ be the encoder defined
in~\eqref{eq:T_1_def} using this clique, with $p$ chosen as in~\eqref{eq:IB_p}.
We show that $I(X;T_1)=R^*$.
    By
    Lemma~\ref{lemma:T_1_rate}, we have that
    \begin{equation}\label{eq:R_star_proof_1}
        I(X;T_1) = \log N - \lambda \log\frac{m+1}{m-1} + \frac{B(1-\varepsilon)}{N}\, D_V,
    \end{equation}
    and, by Lemma~\ref{lemma:perturbation}, $D_V = \log s - H(V_e)$. Since,
    by~\eqref{eq:V_e_def}, $V_e$ takes the value $a_0$ on $s-2$ oriented
    edges and the value $a_1$ on $2$ oriented edges, we have that
    $H(V_e) = (s-2)\, a_0 \log\frac{1}{a_0} + 2a_1 \log\frac{1}{a_1}$.
    Substituting this into~\eqref{eq:R_star_proof_1} gives $I(X;T_1) = R^*$.
\end{proof}

We can now define the budget $R$ and the threshold $\vartheta$. Since
$R^*$ in~\eqref{eq:R_star} is in general irrational, we take a rational number $\tilde R$ such
that $|\tilde R - R^*| \le \frac{\delta}{4}$, and we set
\begin{equation}\label{eq:R_def}
    R = \max\Big\{ \tilde R + \frac{\delta}{4},\ 0 \Big\}.
\end{equation}
The additive term $\frac{\delta}{4}$ ensures that $R \ge R^*$, so that
the encoder $T_1$ satisfies the rate constraint whenever $G$ contains a $k$-clique,
while the maximum ensures that $R \ge 0$. To choose the threshold
$\vartheta$, we compare the two cases. If $\omega(G) \ge k$, then, by
Lemma~\ref{lemma:R_star}, $R^* = I(X;T_1) \ge 0$, so that the maximum
in~\eqref{eq:R_def} equals $\tilde R + \frac{\delta}{4}$, and, since
$|\tilde R - R^*| \le \frac{\delta}{4}$, we have that
\begin{equation}\label{eq:R_window}
    R^* \le R \le R^* + \frac{\delta}{2}.
\end{equation}
Hence, by Lemma~\ref{lemma:IB_yes} and $R^*\geq R-\delta/2$ from the rightmost inequality
in~\eqref{eq:R_window}, we obtain that
\begin{equation}\label{eq:IBT_T1_value}
    I(Y;T_1) > \frac{R^*}{2} + \Pi_0 + \delta
    \ge \frac12\Big( R - \frac{\delta}{2} \Big) + \Pi_0 + \delta
    = \frac{R}{2} + \Pi_0 + \frac{3\delta}{4}.
\end{equation}
On the other hand, if $\omega(G) < k$, then
$\IB(P_{XY},R) \le \frac{R}{2} + \Pi_0$ by Lemma~\ref{lemma:IB_no}.
Thus, as in the proof of Theorem~\ref{thm:IBL_hard}, we place the threshold in
the middle of this gap, and, since this value is in general irrational,
we take $\vartheta$ as a rational number such that
\begin{equation}\label{eq:theta_def}
    \Big| \vartheta - \Big( \frac{R}{2} + \Pi_0 + \frac{3\delta}{8} \Big) \Big|
    \le \frac{\delta}{8}.
\end{equation}
Therefore, the instance of \IBT{} associated with $(G,k)$ is $(P_{XY},R,\vartheta)$, where $P_{XY}$ is defined in~\eqref{eq:IB_joint}, $R$ in~\eqref{eq:R_def} and $\vartheta$ in~\eqref{eq:theta_def}.

\begin{proof}[Proof of Theorem~\ref{thm:IBT_hard}]
     We start by showing that \IBT{} is NP-hard. To this end, we show that the construction of $(P_{XY},R,\vartheta)$ is a
    polynomial-time reduction from \clique{} to \IBT{}.

    We first show that the reduction runs in polynomial time. As shown in
    the proof of Theorem~\ref{thm:IBL_hard}, $\G$ and $P_{XY}$ can be
    constructed in polynomial time, so it remains to show that $R$ and
    $\vartheta$ can be computed in polynomial time. By~\eqref{eq:R_star}, $R^*$ is a linear combination, with
    rational coefficients, of the logarithms of $N$, $m+1$, $m-1$, $s$,
    $1/a_0$ and $1/a_1$, which are rational numbers with $O(\log n)$ bits.
    Hence, by Lemma~\ref{lem:log} in Appendix~\ref{app:log}, $\tilde R$ can be computed in
    polynomial time, and so can $R$ by~\eqref{eq:R_def}. Similarly,
    $\frac{R}{2} + \Pi_0 + \frac{3\delta}{8}$ is the sum of a rational
    number and of $\Pi_0$, which, as shown in the proof of
    Theorem~\ref{thm:IBL_hard}, is an explicit expression involving
    polynomially many logarithms. Hence, again by Lemma~\ref{lem:log}, a
    rational $\vartheta$ satisfying~\eqref{eq:theta_def} can be computed
    in polynomial time. Therefore, the whole reduction runs in polynomial
    time.

    We now show that the reduction is correct, that is,
    $(P_{XY},R,\vartheta)$ is a yes-instance of \IBT{} if and only if $(G,k)$ is a yes-instance of \clique{}.  Recall that
    $(P_{XY}, R, \vartheta)$ is a yes-instance of \IBT{} if there exists an
    encoder $T$ of $X$ such that $I(X;T) \le R$ and
    $I(Y;T) \ge \vartheta$, and that $(G,k)$ is a yes-instance of \clique{} if $\omega(G)\geq k$. We notice that, by~\eqref{eq:theta_def}, the threshold $\vartheta$ satisfies
    \begin{equation}\label{eq:theta_window}
        \frac{R}{2} + \Pi_0 + \frac{\delta}{4} \le \vartheta
        \le \frac{R}{2} + \Pi_0 + \frac{\delta}{2}.
    \end{equation}
    Assume first that $(G,k)$ is a yes-instance, that is, $\omega(G) \ge k$. We show that the encoder $T_1$ satisfies both $I(X;T_1)\leq R$ and $I(Y;T_1)\geq \vartheta$. By Lemma~\ref{lemma:R_star} and the leftmost
    inequality in~\eqref{eq:R_window}, we have that $I(X;T_1) = R^* \le R$, and therefore the encoder $T_1$
    satisfies the rate constraint.
    Moreover, by~\eqref{eq:IBT_T1_value} and the rightmost inequality
    in~\eqref{eq:theta_window}, we get that
    $I(Y;T_1) > \frac{R}{2} + \Pi_0 + \frac{3\delta}{4} \ge \vartheta + \frac{\delta}{4}$.
   Hence, $T_1$ satisfies both $I(X;T_1) \le R$ and
    $I(Y;T_1) \ge \vartheta$, and $(P_{XY}, R, \vartheta)$ is a yes-instance
    of \IBT{}. Moreover,
    $\IB(P_{XY},R) \geq I(Y;T_1)>\vartheta + \frac{\delta}{4}$.

    Assume now that $(G,k)$ is a no-instance, that is, $\omega(G) < k$. By Lemma~\ref{lemma:IB_no} and the
    leftmost inequality in~\eqref{eq:theta_window}, we get that
    $\IB(P_{XY},R) \le \frac{R}{2} + \Pi_0 \le \vartheta - \frac{\delta}{4}$.
    Hence, every encoder with $I(X;T) \le R$ satisfies
    $I(Y;T) < \vartheta$, and $(P_{XY},R,\vartheta)$ is a no-instance of \IBT{}.
    This proves that the reduction is correct, and hence that \IBT{} is
    NP-hard.

    It remains to prove the inapproximability result. By the two cases
    above, $\IB(P_{XY},R)$ lies above $\vartheta + \frac{\delta}{4}$ if
    $\omega(G) \ge k$, and below $\vartheta - \frac{\delta}{4}$ otherwise.
    Suppose that a polynomial-time algorithm outputs $v$ with
    $|v - \IB(P_{XY},R)| \le \frac{1}{144M^{3/2}}$. By
    Lemma~\ref{lemma:delta_bound}, this error is at most
    $\frac{\delta}{8} < \frac{\delta}{4}$. Hence, if $\omega(G) \ge k$,
    then $v \ge \IB(P_{XY},R) - \frac{\delta}{8} > \vartheta$, while, if
    $\omega(G) < k$, then $v \le \IB(P_{XY},R) + \frac{\delta}{8} < \vartheta$.
    Thus,  running the
    reduction, then the algorithm, and comparing $v$ with $\vartheta$ would decide \clique{} in
    polynomial time, which is impossible unless
    $\mathrm{P} = \mathrm{NP}$.
\end{proof}

\subsection{A better exponent}\label{subsec:padding}

In this section, we show that the exponent $3/2$ in Theorems~\ref{thm:IBL_hard}
and~\ref{thm:IBT_hard} and Proposition~\ref{prop:other_beta} can be lowered to $1 + \alpha$, for every fixed
$\alpha > 0$. Recall that a smaller exponent means a larger additive error,
and hence a stronger inapproximability result.

To explain where the exponent $3/2$ comes from, recall
from~\eqref{eq:delta_M} that the gap of the reduction is
$\delta = \frac{1 - 1/m}{8Mm^2}$. The factor $\frac1M$ is of the desired
order, while the factor $\frac{1}{m^2}$ has to be bounded in terms of $M$.
In the proof of Lemma~\ref{lemma:delta_bound}, the only relation between
$m$ and $M$ comes from the assumption $N \ge B \approx m^2/2$
in~\eqref{eq:assumption_on_G}, which gives $m^2 \le \frac32 \sqrt{M}$,
and hence $\delta \ge \frac{1}{18 M^{3/2}}$. The idea to obtain a better result is therefore to replace the instance $(G,k)$ with an equivalent
instance $(G',k)$, that is, one with the same answer to \clique{}, in
which $G'$ has many more edges than $G$. In this way, $m^2$ becomes small
with respect to $M=|A(\G')|$. This is formalized in the following theorem.

\begin{theorem}\label{thm:padding}
    Let $\alpha > 0$ be fixed. Unless $\mathrm{P} = \mathrm{NP}$, there is
    no polynomial-time algorithm that, given a rational joint distribution
    $P_{XY}$ on $\cX \times \cY$, approximates $L_{1/2}(P_{XY})$ within an
    additive error of $\frac{1}{36|\cX|^{1+\alpha}}$, nor one that, given
    also a non-negative rational number $R$, approximates
    $\IB(P_{XY},R)$ within $\frac{1}{96|\cX|^{1+\alpha}}$. Moreover, for
    every fixed $\beta \in \mathbb{Q} \cap (0,1)$, there is no
    polynomial-time algorithm that approximates $L_\beta(P_{XY})$ within
    $\frac{\kappa_\beta}{36|\cX|^{1+\alpha}}$.
\end{theorem}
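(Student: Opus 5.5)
The plan is to pad the \clique{} instance so that it has many more edges while keeping its answer, and then rerun the reductions of Theorems~\ref{thm:IBL_hard} and~\ref{thm:IBT_hard} on the padded instance. Fix $\alpha>0$ and an instance $(G,k)$ satisfying~\eqref{eq:assumption_on_G}. Let $G'$ be the disjoint union of $G$ with $q$ vertex-disjoint copies of a graph $H$ whose clique number is less than $k$. A natural choice for $H$ is the complete bipartite graph $K_{r,r}$ (clique number $2<k$), or the Tur\'an graph $T(r,k-1)$ (clique number $k-1$). Since cliques live inside connected components, $\omega(G')=\max\{\omega(G),\omega(H)\}$, so $\omega(G')\ge k$ if and only if $\omega(G)\ge k$. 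Thus $(G',k)$ is equivalent to $(G,k)$, still satisfies~\eqref{eq:assumption_on_G}, and $N'=|E(G')|\ge N$.

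We choose the amount of padding so that $N'\ge m^{2/\alpha}$, say $N'\ge m^{\lceil 2/\alpha\rceil}$. Since $m<n$ and $\alpha$ is fixed, $N'$ is polynomial in $n$, so $G'$ can be built in polynomial time. For instance, one copy of $K_{r,r}$ with $r=\lceil m^{1/\alpha}\rceil$ already suffices.

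Now apply the construction of Section~\ref{subsec:IB_reduction} to $(G',k)$. Lemmas~\ref{lemma:IB_no}, \ref{lemma:IB_yes} and~\ref{lemma:R_star} hold verbatim for any instance satisfying~\eqref{eq:assumption_on_G}. The gap is therefore $\delta=\frac{1-1/m}{8M m^2}$ with $M=N's$, as in~\eqref{eq:delta_M}. The one step that changes is the analogue of Lemma~\ref{lemma:delta_bound}. From $M=N'm(m-1)\ge N'\ge m^{2/\alpha}$ we get $m^2\le M^{\alpha}$, hence
\[
\delta\ge \frac{2/3}{8M^{1+\alpha}}=\frac{1}{12M^{1+\alpha}}.
\]
The rest follows as before. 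In the proof of Theorem~\ref{thm:IBL_hard}, an algorithm with error $\frac{1}{36M^{1+\alpha}}\le\delta/3<5\delta/12$ would separate the two cases around $\tau$. In the proof of Theorem~\ref{thm:IBT_hard}, the margin is $\delta/4$, and error $\frac{1}{96M^{1+\alpha}}\le \delta/8$ suffices. For rational $\beta$, apply the scaling of Proposition~\ref{prop:other_beta}, which keeps $\cX$ unchanged and multiplies both the objective and the margin by $\kappa_\beta$; this gives the bound $\kappa_\beta/(36|\cX|^{1+\alpha})$.

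I expect the main obstacle to be bookkeeping rather than conceptual. The padding does not need $N'\ge B$ beyond what $G$ already provides, and the constants must come out as claimed. The padding size must also be polynomial: it is $n^{O(1/\alpha)}$, which is acceptable only because $\alpha$ is fixed. Finally, the thresholds $\tau,R,\vartheta$ are again computed via Lemma~\ref{lem:log}, with precision of order $\delta$, which takes $O(\log n/\alpha)$ bits.
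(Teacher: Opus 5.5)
Your proposal is correct and follows essentially the same route as the paper. Both pad $(G,k)$ with a disjoint component of clique number less than $k$ so that $M$ dominates a large power of $m$, which gives $\delta\ge 1/(12M^{1+\alpha})$ and then the same $\delta/3$, $\delta/8$ and $\kappa_\beta$ arguments. Using $K_{r,r}$ or a Tur\'an graph instead of the paper's $n^{2c}$ isolated edges (with $c=\lceil 1/\alpha\rceil$), and sizing the padding by $m$ rather than $n$, are cosmetic changes. One small adjustment: take $r=m^{\lceil 1/\alpha\rceil}$ rather than $\lceil m^{1/\alpha}\rceil$, so that the reduction stays computable when $\alpha$ is irrational.
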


\begin{proof}
    We first construct the new instance of \clique{}. Let
    $c = \lceil 1/\alpha \rceil$, so that $c$ is an integer and
    $1/c \le \alpha$, and let $(G,k)$ be an instance of \clique{}
    satisfying~\eqref{eq:assumption_on_G}, with $n = |V(G)|$. Let
    $N' = n^{2c}$, and let $u_1, v_1, \dots, u_{N'}, v_{N'}$ be $2N'$ new
    vertices, not in $V(G)$. We define the graph $G'$ by
    \begin{equation*}
        V(G') = V(G) \cup \{u_1, v_1, \dots, u_{N'}, v_{N'}\}
        \quad\text{and}\quad
        E(G') = E(G) \cup \big\{ \{u_i, v_i\} : 1 \le i \le N' \big\},
    \end{equation*}
    that is, $G'$ is the disjoint union of $G$ and of $N'$ isolated edges.
    We claim that $G'$ contains a $k$-clique if and only if $G$ does.
    Indeed, every clique of $G$ is a clique of $G'$. Conversely, let $C$
    be a clique of $G'$ that contains a new vertex, say $u_i$. Since the
    only neighbour of $u_i$ in $G'$ is $v_i$, we have that
    $C \subseteq \{u_i, v_i\}$, and hence $|C| \le 2 < k$. The same holds
    for $v_i$. Therefore, every $k$-clique of $G'$ is contained in $V(G)$,
    and it is a $k$-clique of $G$. Moreover, $(G',k)$
    satisfies~\eqref{eq:assumption_on_G}, since
    $|E(G')| = N + N' \ge N \ge B$. Finally, $G'$ has $n + 2n^{2c}$
    vertices, and, since $c$ is fixed, its adjacency matrix can be
    constructed in time polynomial in $n$.

    We now apply the reductions of Theorems~\ref{thm:IBL_hard}
    and~\ref{thm:IBT_hard} to $(G',k)$ instead of $(G,k)$. All the results
    of Sections~\ref{sec:construction} and~\ref{sec:IB} hold for every
    instance satisfying~\eqref{eq:assumption_on_G}, and hence for $(G',k)$ as well.
    Moreover, since the size of $G'$ is polynomial in $n$, both reductions
    still run in time polynomial in $n$. The only change concerns the lower
    bound on $\delta$. Since $m = k - 1 < n$, the extended graph of $G'$
    has
    \begin{equation*}
        M = |E(G')|\, m(m-1) \ge N' = n^{2c} > m^{2c}
    \end{equation*}
    oriented edges, that is, $m^2 \le M^{1/c}$. Using this bound
    in~\eqref{eq:delta_M} in place of~\eqref{eq:m_upper}, together with
    $1 - \frac1m \ge \frac23$ and $1/c \le \alpha$, we get that
    \begin{equation}\label{eq:delta_padding}
        \delta \ge \frac{2/3}{8\, M \cdot M^{1/c}}
        = \frac{1}{12\, M^{1+1/c}} \ge \frac{1}{12\, M^{1+\alpha}}.
    \end{equation}
    The proof of Theorem~\ref{thm:IBL_hard} rules out polynomial-time
    approximation of $L_{1/2}(P_{XY})$ within $\frac{\delta}{3}$, and the
    proof of Theorem~\ref{thm:IBT_hard} rules out polynomial-time
    approximation of $\IB(P_{XY},R)$ within $\frac{\delta}{8}$. Since
    $M = |\cX|$, by~\eqref{eq:delta_padding} we have that
    $\frac{\delta}{3} \ge \frac{1}{36|\cX|^{1+\alpha}}$ and
    $\frac{\delta}{8} \ge \frac{1}{96|\cX|^{1+\alpha}}$, which proves the
    first two claims. The last claim follows from the first one exactly as
    in the proof of Proposition~\ref{prop:other_beta}, since the
    construction~\eqref{eq:Z_def} does not change the alphabet $\cX$ and
    multiplies the Lagrangian by $\kappa_\beta$.
\end{proof}

Note that Theorem~\ref{thm:padding} improves on
Theorems~\ref{thm:IBL_hard} and~\ref{thm:IBT_hard} only for
$0<\alpha < 1/2$, that is, for $c \ge 3$, and that it becomes stronger as
$\alpha$ decreases, at the price of a larger graph $G'$. The exponent cannot
be lowered to $1$ within this construction. Indeed, the improvement $\delta$ of $T_1$ over $T_0$ is of order
$\frac{1}{Nm^4} = \Theta\big(\frac{1}{Mm^2}\big)$, and $m$ cannot be
bounded, since \clique{} can be solved in polynomial time for every
fixed $k$. 

\section{Hardness of the privacy funnel}\label{sec:PF}

In this section, we prove our results for the privacy funnel. For the
constrained formulation~\eqref{eq:PF_def}, we prove the following theorem.

\begin{theorem}\label{thm:PF_hard}
    \PFT{} is NP-hard. Moreover, unless $\mathrm{P} = \mathrm{NP}$, there
    is no polynomial-time algorithm that, given a rational joint
    distribution $P_{XY}$ on $\cX \times \cY$ and a rational
    number $0\leq r\leq H(X)$, outputs a value $v$ such that
    \begin{equation}\label{eq:PF_approx}
        \big| v - \PF(P_{XY},r) \big| \le \frac{1}{4|\cX|^2}.
    \end{equation}
\end{theorem}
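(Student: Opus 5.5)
The plan is a reduction from \clique{} that reuses $\G$, $X$ uniform on $\cX=A(\G)$, and the encoders $T_0,T_1$ of Section~\ref{sec:construction}, but with a new channel from $X$ to $Y$. It relies on the equality case of Lemma~\ref{lemma:PF_lower}. Call two oriented edges $x\neq x'$ of $\G$ \emph{compatible} if they lie in $A(C)$ for a common clique $C$ of $\G$, and \emph{confusable} otherwise. The channel is: with probability $1/2$, $Y=X$; with probability $1/2$, $Y$ is the unordered pair $\{X,X'\}$, where $X'$ is drawn uniformly among the oriented edges confusable with $X$ (a fixed rational rule; if the rule is non-uniform, the weights enter only as constants).

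With this channel, $H(X\mid Y)$ is an explicit sum of polynomially many logarithms of rationals, so it can be approximated by Lemma~\ref{lem:log}. Moreover, $H(X\mid Y,T)=0$ holds if and only if every posterior $P_{X\mid T}(\cdot\mid t)$ is supported on a set of pairwise compatible oriented edges. Pairwise compatible sets of oriented edges are contained in $A(C)$ for a clique $C$ of $\G$: the endpoints form a clique, since two disjoint compatible edges force all cross edges. Hence zero-leakage posteriors are supported inside $A(C)$ for cliques $C$, and have entropy at most $\log(\omega(\G)(\omega(\G)-1))$.

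\emph{Yes-instances.} Take $T_1$ with $p=1$. By Lemma~\ref{lemma:T_1}, its posteriors are $U_K$ and $U_{C_e}$ restricted to $A(C_e)\setminus A(e)$ (or $U_{C_e}$ itself), all supported inside cliques. Also, $T_0$ has posteriors $U_{C_e}$. So both encoders attain equality in \eqref{eq:PF_lower}, and $I(X;T_1)=\log N-\eta_1$ with $\eta_1>0$ explicitly computable from Lemma~\ref{lemma:T_1_rate}. By Lemma~\ref{lemma:mixture}, for every $r\in[\log N-\eta_1,\log N]$ a mixture $T_\theta$ of $T_0$ and $T_1$ still has only clique-supported posteriors, so $\PF(P_{XY},r)=r-H(X\mid Y)$. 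I will take the rate $r$ a rational strictly inside this interval, say within $\eta_1/2$ of $\log N-\eta_1/2$. The quantity $\eta_1$ is of order $B/(Nm^2)\ge 1/M$, and $r$ depends only on $N$ and $k$, as in Lemma~\ref{lemma:R_star}.

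\emph{No-instances (main obstacle).} Here $\omega(\G)=m$, so any clique-supported posterior has entropy at most $\log s$, and an encoder achieving exact equality must have $I(X;T)\ge\log N>r$. What is needed is a quantitative version: whenever $I(X;T)\ge r$, then $H(X\mid Y,T)\ge \gamma$ with $\gamma\ge 1/(2|\cX|^2)$, so that $\PF\ge r-H(X\mid Y)+\gamma$. The threshold $\vartheta$ is then placed at $r-H(X\mid Y)+\gamma/2$, and an approximation within $\gamma/2\ge 1/(4|\cX|^2)$ would decide \clique{}.

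I would prove a per-posterior inequality, $H(Q)-\log s\le C\cdot \mathrm{poly}(M)\cdot h_Q$, where $h_Q$ is the conditional entropy of $X$ given $Y$ under posterior $Q$. Then averaging over $t$ gives
\[
\log N-r\;\ge\;\log N-I(X;T)\;\ge\;\ldots,
\]
that is, $H(X\mid T)-\log s\le \mathrm{poly}(M)\,H(X\mid Y,T)$. The intended route is as follows. Write $Q$ via a best clique $C$, and let $q$ be the mass of $Q$ outside any maximal compatible set. Then the entropy excess $H(Q)-\log s$ is at most of order $q\log M$, because a compatible set has at most $s$ elements and the rest carries mass $q$. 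On the other side, each unit of mass on a confusable pair $(x,x')$ is observed through $Y=\{x,x'\}$ with probability about $1/(2M)$ per pair, producing conditional entropy at least of order $q/M$ times a binary-entropy term. The delicate point is that $h_2$ is not linear near $0$; I would handle it using $h_2(z)\ge 2\min(z,1-z)$, which gives a linear lower bound in the smaller mass. This bookkeeping, giving a loss factor around $M$ and hence a gap of order $\eta_1/M\gtrsim 1/M^2$, is the step I expect to require the most care. Polynomial-time computability of $r$ and $\vartheta$ then follows exactly as in the proof of Theorem~\ref{thm:IBT_hard}.
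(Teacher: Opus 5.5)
\textbf{Gap 1: the yes-case fails at $p=1$.} Your architecture matches the paper's: zero leakage exactly when every posterior lies in some $A(C)$; a mixture of $T_0$ and $T_1$ via Lemma~\ref{lemma:mixture} to hit rate $r$; and a no-case inequality $H(X\mid T)-\log s\le D\,H(X\mid Y,T)$ giving a gap $(\log N-r)/D$. With $p=1$, all of $A(K)$ (mass $2B/M$) goes to $\star$. For $e\in E(K)$ the posterior is uniform on the $s-2$ elements of $A(C_e)\setminus A(e)$. Hence
\[
\eta_1=H(X\mid T_1)-\log s=\frac{B}{M}\Big[2\log\frac{m+1}{m-1}-(s-2)\log\frac{s}{s-2}\Big].
\]
For $m=3$ the bracket is $2-4\log\frac32<0$. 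For $m\ge 4$ it is at most $\frac{4}{3\ln 2}-\frac{5}{3\ln 2}<0$. So $\eta_1<0$ on every instance, i.e.\ $I(X;T_1)>\log N$.

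Consequently your interval $[\log N-\eta_1,\log N]$ is empty. Mixtures of $T_0$ and $T_1$ only reach rates at or above $\log N$. There, no-instances attain equality in \eqref{eq:PF_lower} as well (mix $T_0$ with $T=X$), so nothing is separated. The reason is that removing two of the $s$ atoms from each of the $B$ blocks costs about $2B/(M\ln 2)$ in conditional entropy, while merging $A(K)$ gains only about $4B/(mM\ln 2)$.

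This is why the paper keeps $p$ small, namely $p=\frac{1}{2(m+1)}$. Then $V_e$ is still supported in $A(C_e)$, so zero leakage is preserved. The loss $\frac{B(1-\varepsilon)}{N}D_V$ becomes quadratic in $\varepsilon$ (Lemma~\ref{lemma:perturbation}), while the gain $\lambda\log\frac{m+1}{m-1}$ is linear, which yields $\eta>\frac{3}{4M}$. You also need $r$ at distance $\Omega(1/M)$ below $\log N$. Your phrase ``within $\eta_1/2$ of the midpoint'' permits $r=\log N$, where the gap vanishes.

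\textbf{Gap 2: the no-case bookkeeping cannot give the target inequality.} The inequality you aim for is the right one: Lemma~\ref{lemma:PF_no} is exactly $H(X\mid T)-\log s\le M\,H(X\mid Y,T)$. But grouping only gives $H(Q)-\log s\le h_2(q)+q\log\frac{M}{s}$, and $h_2(q)\approx q\log\frac1q$ is not $O(q\log M)$ for tiny $q$. For example, take $Q=(1-q)U_C+q\,\delta_z$ with $z$ confusable with all of $A(C)$. Its excess is about $q\log\frac{1}{qs}$. The leakage of this $Q$ carries a matching $\log\frac1q$ factor, since the relevant $h_2$'s are evaluated near $qs$. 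Your linearization $h_2(z)\ge 2\min(z,1-z)$ throws that factor away and leaves only $h_Q\gtrsim q/M$. The ratio then blows up as $q\to 0$, and no polynomial $D$ comes out.

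The paper avoids this by padding with $M-d(x)$ private symbols. Then each incompatible pair is seen with probability exactly $\frac1M$ from either endpoint, so $H(X\mid Y)$ under input $Q$ equals $\frac{H(Q)-\Phi(Q)}{M}$. It then proves $\Phi(Q)\le\log s$ by convexity of $\Phi$ along each incompatible pair. The pair's own term cancels the entropy's dependence on the moved mass, and the other pair terms are concave. So mass can be pushed to one endpoint until the support is compatible. A similar convexity argument works for your channel with $D=2\max_x d(x)$ in place of $M$. Some argument of this kind, one that keeps the logarithmic factor, is what your proposal is missing.
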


Theorem~\ref{thm:PF_hard} is proved by a reduction from \clique{}, which
reuses the extended graph $\G$, the random variable $X$, and the encoders
$T_0$ and $T_1$ of Section~\ref{sec:construction}. Only the channel from
$X$ to $Y$, and hence the joint distribution $P_{XY}$, changes. A key
tool for the reduction is the lower bound of Lemma~\ref{lemma:PF_lower}, namely
$I(Y;T) \ge I(X;T) - H(X \mid Y)$, which holds for every encoder $T$ of
$X$.

The rest of this section is organized as follows. In
Section~\ref{subsec:compatible}, we introduce compatible sets of
oriented edges, that is, sets of oriented edges of a common clique, and
we define a function $\Phi$ of the distributions on the oriented edges,
which plays a similar role to that of the cost $g_W$ of Section~\ref{sec:IB}. We then
bound $\Phi$ in terms of the clique number of $\G$. In
Section~\ref{subsec:PF_channel}, we define the joint distribution of the
reduction, and we show that an encoder attains the lower bound
of Lemma~\ref{lemma:PF_lower} if its posteriors are supported on
compatible sets. In Section~\ref{subsec:PF_reduction}, we show that, if
$G$ has no $k$-clique, the lower bound of Lemma~\ref{lemma:PF_lower} cannot be attained at any rate
smaller than $\log N$, while, if $G$ contains a $k$-clique, the encoder
$T_1$ attains the lower bound at a strictly smaller rate. Finally, we use these
results to prove Theorem~\ref{thm:PF_hard} in
Section~\ref{subsec:PF_proof}, and in
Section~\ref{subsec:PF_lagrangian}, we extend our results to the
Lagrangian formulation~\eqref{eq:Lambda_def}
(Proposition~\ref{prop:PF_lagrangian}). As for the information
bottleneck, we made no attempt to optimize the constants in these
results.

\subsection{Compatible sets and an entropic bound}\label{subsec:compatible}

Let $(G,k)$ be an instance of \clique{} satisfying~\eqref{eq:assumption_on_G},
and let $\G$ be the extended graph of Section~\ref{subsec:extended_graph}.
We say that two distinct oriented edges $x, x' \in A(\G)$ are
\emph{compatible} if the set of their endpoints is a clique of $\G$, that
is, if $x$ and $x'$ can be oriented edges of the same clique; otherwise,
they are \emph{incompatible}. A set $S \subseteq A(\G)$ is
\emph{compatible} if any two of its elements are compatible. We denote by
$\cI(\G)$ the set of unordered pairs $\{x,x'\}$ of incompatible oriented
edges of $\G$.

The next lemma shows that a set of oriented edges is compatible if and
only if it is contained in a clique of $\G$. Consequently, the size of
a compatible set is bounded in terms of the clique number of $\G$.
\begin{lemma}\label{lemma:PF_compatible_sets}
    A set $S \subseteq A(\G)$ is compatible if and only if
    $S \subseteq A(C)$ for some clique $C$ of $\G$. Moreover, every
    compatible set $S$ satisfies $|S| \le \omega(\G)(\omega(\G)-1)$.
\end{lemma}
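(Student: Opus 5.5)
The plan is to prove the two directions of the equivalence separately and then read off the size bound.

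The easy direction is "if". Suppose $S \subseteq A(C)$ for some clique $C$ of $\G$. Take two distinct $x,x' \in S$. The set of their endpoints is a subset of $C$, and every subset of a clique is a clique. Hence $x$ and $x'$ are compatible, and so $S$ is compatible.

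For the "only if" direction, let $S$ be compatible and let $C$ be the union of the endpoints of all elements of $S$. We must show that $C$ is a clique of $\G$; then $S \subseteq A(C)$ is immediate.
\begin{itemize}
\item If $S$ is empty, the claim is trivial: use any clique, for instance $C_e$.
\item If $|S| = 1$, say $S=\{(u,v)\}$, then $C=\{u,v\}$ is a clique because $\{u,v\}$ is an edge.
\item Otherwise, take distinct $a,b \in C$. The vertex $a$ is an endpoint of some $x \in S$, and $b$ is an endpoint of some $x' \in S$.
 \begin{itemize}
 \item If $x = x'$, then $\{a,b\}$ is the edge underlying $x$.
 \item If $x \neq x'$, compatibility says the endpoint set of $x$ and $x'$ is a clique. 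This set contains both $a$ and $b$, so $\{a,b\} \in E(\G)$.
 \end{itemize}
\end{itemize}
In every case all pairs of distinct vertices of $C$ are adjacent, so $C$ is a clique. The only delicate point is the case $x = x'$, which compatibility (defined only for distinct oriented edges) does not cover. I handle it by noting that the two endpoints of a single oriented edge are adjacent by definition of $A(\G)$.

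For the size bound, apply the characterization: $S \subseteq A(C)$ for a clique $C$ with $|C| = j \le \omega(\G)$. Then $|S| \le |A(C)| = j(j-1)$. Since $j(j-1)$ is increasing in $j$ for $j \ge 1$, this gives $|S| \le \omega(\G)(\omega(\G)-1)$. If $C$ has size at most $1$, then $A(C)$ is empty and the bound holds trivially. The argument involves no real obstacle beyond this bookkeeping of degenerate cases.
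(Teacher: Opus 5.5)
Your proposal is correct and follows the paper's proof essentially step for step: the same ``if'' direction via subsets of cliques, the same choice of $C$ as the set of endpoints of elements of $S$ with the same split into the cases $x = x'$ and $x \neq x'$, and the same bound $|S| \le |A(C)| = |C|(|C|-1) \le \omega(\G)(\omega(\G)-1)$. Your separate treatment of the cases $|S| \le 1$ is harmless but not needed, since the general argument already covers them.
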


\begin{proof}
    We first prove that, if $S \subseteq A(C)$ for some clique $C$ of
    $\G$, then $S$ is compatible. Let $x, x' \in S$ be distinct. Since
    $x, x' \in A(C)$, all their endpoints belong to $C$. Moreover, since
    $C$ is a clique and every subset of a clique is still a clique, it
    follows that the set of the endpoints of $x$ and $x'$ is a clique of
    $\G$, which means that $x$ and $x'$ are compatible. Hence, $S$ is
    compatible.

    Conversely, assume that $S$ is compatible. We show that
    $S \subseteq A(C)$ for some clique $C$ of $\G$. Let
    \begin{equation}\label{eq:C_from_S}
        C = \{ u \in V(\G) : u \text{ is an endpoint of some } x \in S \}.
    \end{equation}
    By definition of $C$, we have that $S \subseteq A(C)$, so it suffices
    to show that $C$ is a clique of $\G$. Let $u, v \in C$ be distinct, and
    let $x, x' \in S$ be oriented edges such that $u$ is an endpoint of $x$
    and $v$ is an endpoint of $x'$. If $x = x'$, then $u$ and $v$ are the
    two endpoints of $x$, that is, $x = (u,v)$ or $x = (v,u)$. Since
    $x \in A(\G)$, we have that $\{u,v\} \in E(\G)$, so $u$ and $v$ are
    adjacent. If $x \ne x'$, then $x$ and $x'$ are compatible, since $S$ is
    compatible, and the set of their endpoints, which contains $u$ and
    $v$, is a clique of $\G$. Hence, $u$ and $v$ are adjacent also in this
    case. This implies that $C$ is a clique of $\G$. Hence, we proved that a set $S \subseteq A(\G)$ is compatible if and only if
    $S \subseteq A(C)$ for some clique $C$ of $\G$.

    Moreover, if $S$ is compatible, then $S \subseteq A(C)$ for some clique
    $C$ of $\G$, and hence it follows that
    \begin{equation*}
        |S| \le |A(C)| = |C|(|C|-1) \le \omega(\G)(\omega(\G)-1)
    \end{equation*}
    by definition of $\omega$ in~\eqref{eq:clique_number_def}.
\end{proof}

For a probability distribution $Q$ on $A(\G)$, let
\begin{equation}\label{eq:Phi_def}
    \Phi(Q) = H(Q) - \sum_{\{x,x'\} \in \cI(\G)}
    \big( Q(x) + Q(x') \big)\, h_2\Big( \frac{Q(x)}{Q(x) + Q(x')} \Big),
\end{equation}
where we assume that each term of the sum with $Q(x) + Q(x') = 0$ is
equal to zero. In words, $\Phi(Q)$ is the entropy of $Q$ minus a penalty
for each pair of incompatible oriented edges that both have positive
probability under $Q$. In particular, if $\supp(Q)$ is compatible, then
all the penalties vanish and $\Phi(Q) = H(Q)$. In the next lemma, we
bound $\Phi$ in terms of the clique number of $\G$. This bound plays, for
the privacy funnel, the same role that the bound on $g_{\G}$ in Lemma~\ref{lemma:entropic_turan}
plays for the information bottleneck.

\begin{lemma}\label{lemma:Phi_max}
    For every probability distribution $Q$ on $A(\G)$, it holds that
    \begin{equation}\label{eq:Phi_max}
        \Phi(Q) \le \log\big( \omega(\G)(\omega(\G)-1) \big).
    \end{equation}
\end{lemma}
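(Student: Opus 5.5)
The plan is to rewrite $\Phi(Q)$ as an average over $x \in \supp(Q)$ of a per-element quantity. I will bound each such quantity by a ``closed-neighbourhood'' term in the incompatibility graph, and then finish with Jensen's inequality and a weighted Caro--Wei bound. Lemma~\ref{lemma:PF_compatible_sets} enters only at the last step, to bound the largest compatible set by $\omega(\G)(\omega(\G)-1)$.

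\emph{Step 1 (per-element rewriting).} Let $H$ be the \emph{incompatibility graph}: its vertex set is $\supp(Q)$, and $xx'$ is an edge iff $\{x,x'\}\in\cI(\G)$. Write $I(x)$ for the neighbours of $x$ in $H$ and $F(x)=\{x\}\cup I(x)$. For $q=Q(x)$ and $q'=Q(x')$ positive,
\[
(q+q')\,h_2\!\Big(\frac{q}{q+q'}\Big)=q\log\frac{q+q'}{q}+q'\log\frac{q+q'}{q'} .
\]
Splitting each pair's penalty between its two members in this way gives
\[
\Phi(Q)=\sum_{x\in\supp(Q)}Q(x)\Big[\log\frac{1}{Q(x)}-\sum_{x'\in I(x)}\log\Big(1+\frac{Q(x')}{Q(x)}\Big)\Big].
\]
Pairs with one zero-probability element contribute nothing, so they can be ignored.

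\emph{Step 2 (neighbourhood bound and Jensen).} For nonnegative $a_i$ we have $\prod_i(1+a_i)\ge 1+\sum_i a_i$. Hence the bracket is at most
\[
\log\frac{1}{Q(x)}-\log\Big(1+\sum_{x'\in I(x)}\frac{Q(x')}{Q(x)}\Big)=\log\frac{1}{Q(F(x))}.
\]
Therefore $\Phi(Q)\le\sum_x Q(x)\log\frac{1}{Q(F(x))}$. By Jensen's inequality for the concave function $\log$, this is at most
\[
\log\sum_{x\in\supp(Q)}\frac{Q(x)}{Q(F(x))}.
\]

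\emph{Step 3 (weighted Caro--Wei; the main obstacle).} The remaining claim is
\[
\sum_{x\in\supp(Q)}\frac{Q(x)}{Q(F(x))}\le \alpha(H),
\]
where $\alpha(H)$ is the size of a largest independent set of $H$. An independent set of $H$ is exactly a compatible set, so Lemma~\ref{lemma:PF_compatible_sets} gives $\alpha(H)\le\omega(\G)(\omega(\G)-1)$. Combining with Step 2 then yields~\eqref{eq:Phi_max}.

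I would prove the weighted Caro--Wei inequality by induction on $|V(H)|$, with the empty graph as base case. Pick $v$ minimizing $Q(F(v))$.
\begin{itemize}
\item For $u\in F(v)$ we have $Q(F(u))\ge Q(F(v))$. So the terms indexed by $F(v)$ sum to at most $Q(F(v))/Q(F(v))=1$.
\item Delete $F(v)$ and apply the induction hypothesis to the induced subgraph $H'$ with the restricted weights; the bound does not need the weights to be normalized. Neighbourhoods in $H'$ are subsets of those in $H$, so each remaining term in $H$ is at most its counterpart in $H'$. The sum over $H'$ is at most $\alpha(H')$.
\item Adding $v$ to a maximum independent set of $H'$ gives an independent set of $H$, so $\alpha(H')+1\le\alpha(H)$.
\end{itemize}
This closes the induction. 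The points to check carefully are the zero-probability conventions, which are handled by working only on $\supp(Q)$, and the fact that the induction is stated for general nonnegative weights.
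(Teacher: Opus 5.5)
Your proof is correct, and it takes a genuinely different route from the paper's.

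\textbf{The paper's route.} The paper argues by local mass-shifting. For an incompatible pair $\{x,x'\}$ with both elements in the support, it redistributes the mass $c=Q(x)+Q(x')$ as $a$ and $c-a$ and shows that $a\mapsto\Phi(Q_a)$ is convex on $[0,c]$. The reason is that the entropy terms and the $\{x,x'\}$ penalty cancel up to the constant $-c\log c$, while every other penalty involving $x$ or $x'$ has the concave form $(b+w)\,h_2\big(b/(b+w)\big)$. Hence one endpoint is at least as good as $Q$. Iterating yields a $Q'$ with compatible support and $\Phi(Q)\le\Phi(Q')=H(Q')\le\log|\supp(Q')|$.

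\textbf{Your route.} You give a one-shot inequality chain instead. You split each penalty between its two endpoints, then collapse the neighbourhood penalties into $\log\frac{1}{Q(F(x))}$ via $\prod_i(1+a_i)\ge 1+\sum_i a_i$. You then apply Jensen and close with the weighted Caro--Wei bound $\sum_x Q(x)/Q(F(x))\le\alpha(H)$. Your greedy induction for that bound is sound: the hypothesis is stated for unnormalized positive weights, closed neighbourhoods only shrink in $H'$, and adding $v$ gives $\alpha(H')+1\le\alpha(H)$.

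\textbf{What each buys.} The paper's argument needs only a one-variable convexity computation and no graph-theoretic input. It also shows directly that $\Phi$ is maximized on distributions with compatible support. Your argument avoids the iterative reduction and the bookkeeping of which terms depend on $a$. It is also structurally parallel to the paper's proof of Lemma~\ref{lemma:entropic_turan} (Jensen followed by Motzkin--Straus), with weighted Caro--Wei playing the role that Motzkin--Straus plays there. Both arguments in fact establish the same slightly sharper bound $\Phi(Q)\le\log\alpha(H)$, where $H$ is the incompatibility graph on $\supp(Q)$.
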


\begin{proof}
    Let $Q$ be a probability distribution on $A(\G)$. We show that,
    whenever $\supp(Q)$ is not compatible, that is, it contains both
    elements of at least one pair $\{x,x'\}$ in $\cI(\G)$, we can move all
    the probability of $x$ to $x'$, or of $x'$ to $x$, without decreasing
    the value of $\Phi$. Reiterating this operation, we reach a probability
    distribution $Q'$ with compatible support and $\Phi(Q') \ge \Phi(Q)$.
    Then, the claim follows from Lemma~\ref{lemma:PF_compatible_sets}.

    Assume that $Q(x) > 0$ and $Q(x') > 0$ for some
    $\{x,x'\} \in \cI(\G)$, and let $c = Q(x) + Q(x')$. For $a \in [0,c]$,
    let $Q_a$ be the probability distribution that coincides with $Q$
    except that $Q_a(x) = a$ and $Q_a(x') = c - a$. Note that
    $Q = Q_{Q(x)}$, and that $Q_0$ and $Q_c$ are the probability
    distributions obtained from $Q$ by moving all the probability of $x$
    to $x'$, and of $x'$ to $x$, respectively. We show that the function
    $a \mapsto \Phi(Q_a)$ is convex on $[0,c]$. To this end, we recall
    that
    \begin{equation}\label{eq:phi_Q_a_recall}
        \Phi(Q_a) = H(Q_a) - \sum_{\{y,y'\} \in \cI(\G)}
        \big( Q_a(y) + Q_a(y') \big)\, h_2\Big( \frac{Q_a(y)}{Q_a(y) + Q_a(y')} \Big).
    \end{equation}
    We now split the terms of~\eqref{eq:phi_Q_a_recall} that depend on
    $a$ into three groups.

    First, since $Q_a$ coincides with $Q$ outside $\{x,x'\}$, the terms of
    $H(Q_a)$ that depend on $a$ are
    \begin{equation}\label{eq:first_term_H_Q_a}
        -a\log a - (c-a)\log(c-a).
    \end{equation}
    Second, the term of the sum in~\eqref{eq:phi_Q_a_recall} corresponding
    to the pair $\{y,y'\} = \{x,x'\}$ is equal to
    \begin{equation}\label{eq:second_term_sum}
        c\, h_2\Big(\frac{a}{c}\Big) = -a\log a - (c-a)\log(c-a) + c\log c.
    \end{equation}
    Hence, the difference between~\eqref{eq:first_term_H_Q_a}
    and~\eqref{eq:second_term_sum} gives the term $-c\log c$, which is
    constant and does not depend on $a$.

    Third, the only other terms of the sum in~\eqref{eq:phi_Q_a_recall}
    that depend on $a$ are those corresponding to pairs $\{x,z\}$ or
    $\{x',z\}$ in $\cI(\G)$, with $z \notin \{x,x'\}$. Each of these terms
    has the form
    \begin{equation*}
        f(b) = (b+w)\, h_2\Big(\frac{b}{b+w}\Big)
        = -b\log b - w\log w + (b+w)\log(b+w),
    \end{equation*}
    where $w = Q(z)$, which does not depend on $a$, and $b = a$ or
    $b = c - a$. Since
    $f''(b) = \frac{1}{\ln 2}\big(\frac{1}{b+w} - \frac1b\big) \le 0$ for
    $b > 0$, the function $f$ is concave, and consequently each of these
    terms is a concave function of $a$.

    Putting the three groups together, we get that, as a function of
    $a \in [0,c]$, $\Phi(Q_a)$ is equal to a constant minus a sum of
    concave functions of $a$. Hence, $a \mapsto \Phi(Q_a)$ is convex on
    $[0,c]$, and, since a convex function on an interval attains its
    maximum at an endpoint, we get that
    \begin{equation}\label{eq:Phi_endpoints}
        \Phi(Q) = \Phi(Q_{Q(x)}) \le \max\{ \Phi(Q_0), \Phi(Q_c) \}.
    \end{equation}
    Since both $Q_0$ and $Q_c$ have one element less in their support than
    $Q$, by repeating this operation, after at most $|\supp(Q)| - 1$ steps
    we obtain a probability distribution $Q'$ with $\Phi(Q') \ge \Phi(Q)$
    whose support is compatible. For such a distribution $Q'$, all the
    terms of the sum in~\eqref{eq:Phi_def} vanish, so that
    $\Phi(Q') = H(Q')$. Hence, since $\supp(Q')$ is a compatible set,  by Lemma~\ref{lemma:PF_compatible_sets} we
    finally obtain that
    \begin{equation*}
        \Phi(Q) \le \Phi(Q') = H(Q') \le \log|\supp(Q')|
        \le \log\big( \omega(\G)(\omega(\G)-1) \big),
    \end{equation*}
    which concludes the proof.
\end{proof}

\subsection{The joint distribution}\label{subsec:PF_channel}

We now describe the joint distribution of the instance of \PFT{}. As in
Section~\ref{sec:construction}, we let the random variable $X$ be uniformly distributed on
$\cX = A(\G)$, with $|A(\G)| = M = Ns$, where $N=|E(G)|$ and $s=m(m-1)$. Moreover, we let
\begin{equation}\label{eq:R_0}
    R_0 = \log N,
\end{equation}
which, by Lemma~\ref{lemma:T_0}, is the rate of the encoder $T_0$, that is, $I(X;T_0)=R_0$.

We now define the random variable $Y$. For each pair
$\{x,x'\} \in \cI(\G)$, we introduce a symbol $y_{\{x,x'\}}$. For each
$x \in A(\G)$, let $d(x)$ be the number of oriented edges incompatible
with $x$, and note that $d(x) \le M - 1$. We also introduce $M - d(x)$
symbols $(x,1), \dots, (x, M - d(x))$. The alphabet $\cY$ of $Y$ is the
set of all these symbols. For every $x \in A(\G)$, let
\begin{equation}\label{eq:PF_Y_x}
    \cY_x = \big\{ y_{\{x,z\}} : z \in A(\G),\ \{x,z\} \in \cI(\G) \big\}
    \cup \big\{ (x,1), \dots, (x, M - d(x)) \big\}
\end{equation}
be the set of values that $Y$ can take when $X = x$, and note that
$|\cY_x| = d(x) + (M - d(x)) = M$. Given $X = x$, we let $Y$ be uniformly
distributed on $\cY_x$, that is,
\begin{equation}\label{eq:PF_channel}
    W(y \mid x) = P_{Y\mid X}(y \mid x) =
    \begin{cases}
        1/M, & \text{if } y \in \cY_x,\\
        0, & \text{otherwise}.
    \end{cases}
\end{equation}
Consequently, since $P_X$ is uniform on $A(\G)$, the joint distribution $P_{XY}$
of $(X,Y)$ is
\begin{equation}\label{eq:PF_joint}
    P_{XY}(x,y) = P_X(x)\, W(y \mid x) =
    \begin{cases}
        \dfrac{1}{M^2}, & \text{if } y \in \cY_x,\\[6pt]
        0, & \text{otherwise},
    \end{cases}
    \qquad \forall (x,y) \in A(\G) \times \cY.
\end{equation}
By~\eqref{eq:PF_Y_x} and~\eqref{eq:PF_joint}, we notice that, after observing $Y$, we
can draw the following conclusions about $X$. If $Y = (x,i)$, then, since
the symbol $(x,i)$ belongs to $\cY_x$ only, $X = x$ with certainty. If
$Y = y_{\{x,x'\}}$, then, since the symbol $y_{\{x,x'\}}$ belongs to both
$\cY_x$ and $\cY_{x'}$, and to no other set $\cY_z$, $X$ is equal to $x$
or to $x'$, each with probability $1/2$. Consequently, $Y$ determines
$X$, unless $Y$ is a symbol $y_{\{x,x'\}}$, in which case $X$ is one of
the two incompatible oriented edges $x$ and $x'$. This property is
crucial in our reduction. Indeed, if we also know that $X$ belongs to a
compatible set $S$, then at most one of $x$ and $x'$ belongs to $S$, and
hence $X$ is determined by $Y$. Let
\begin{equation}\label{eq:PF_h0}
    h_0 = H(X \mid Y)
\end{equation}
be the uncertainty about $X$ that remains after observing $Y$. The next lemma provides some useful properties of the joint distribution $P_{XY}$ in~\eqref{eq:PF_joint} and of the channel $W$ in~\eqref{eq:PF_channel}.

\begin{lemma}\label{lemma:PF_channel}
    Let $P_{XY}$ be the joint distribution defined in~\eqref{eq:PF_joint},
    and let $W$ be the channel from $\cX$ to $\cY$ defined in~\eqref{eq:PF_channel}. Then, the
    following properties hold:
    \begin{enumerate}
        \item $h_0 = 2|\cI(\G)|/M^2$ and $H(Y) = 2\log M - h_0$;
        \item for every probability distribution $Q$ on $A(\G)$,
        \begin{equation}\label{eq:PF_entropy}
            H(WQ) = \log M + \Big(1 - \frac1M\Big) H(Q) + \frac{\Phi(Q)}{M};
        \end{equation}
        \item if every posterior of an encoder $T$ of $X$ is supported on a
        compatible set, then $I(Y;T) = I(X;T) - h_0$.
    \end{enumerate}
\end{lemma}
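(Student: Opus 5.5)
I will prove the three items by direct computation from the definition of $W$ in \eqref{eq:PF_channel}, using the structure of the output alphabet: the private symbols $(x,i)$ belong to $\cY_x$ only, and the shared symbols $y_{\{x,x'\}}$ with $\{x,x'\}\in\cI(\G)$ belong to exactly $\cY_x$ and $\cY_{x'}$.

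\textbf{Item 2.} This is the core computation, and I will do it first. For a distribution $Q$ on $A(\G)$, the output distribution $WQ$ gives mass $Q(x)/M$ to each of the $M-d(x)$ private symbols of $x$, and mass $(Q(x)+Q(x'))/M$ to each shared symbol $y_{\{x,x'\}}$. Then $H(WQ)=-\sum_y (WQ)(y)\log (WQ)(y)$, and every term carries a factor $1/M$ inside the logarithm, which produces $\log M$ because $WQ$ sums to one. The remaining part is
\[
\frac1M\Big[\sum_x (M-d(x))\,Q(x)\log\tfrac{1}{Q(x)}+\sum_{\{x,x'\}\in\cI(\G)}(Q(x)+Q(x'))\log\tfrac{1}{Q(x)+Q(x')}\Big].
\]
I will rewrite each pair term using the identity
\[
(Q(x)+Q(x'))\log\tfrac{1}{Q(x)+Q(x')}=Q(x)\log\tfrac1{Q(x)}+Q(x')\log\tfrac1{Q(x')}-(Q(x)+Q(x'))\,h_2\Big(\tfrac{Q(x)}{Q(x)+Q(x')}\Big).
\]
Summed over $\cI(\G)$, the first two pieces contribute $\sum_x d(x)\,Q(x)\log(1/Q(x))$, since each $x$ lies in exactly $d(x)$ incompatible pairs. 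This cancels the $-d(x)$ in the private part, so the bracket equals $M H(Q)$ minus the penalty sum, that is $MH(Q)-(H(Q)-\Phi(Q))$. This gives $H(WQ)=\log M+(1-\tfrac1M)H(Q)+\Phi(Q)/M$. The only care needed is the convention $0\log 0=0$, which handles pairs with zero mass.

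\textbf{Item 1.} Applying item 2 with $Q=P_X$ uniform gives $H(P_X)=\log M$. Also $\Phi(P_X)=\log M-\sum_{\cI}(2/M)\,h_2(1/2)=\log M-2|\cI(\G)|/M$. Substituting, $H(Y)=\log M+(1-\tfrac1M)\log M+\tfrac1M\log M-2|\cI(\G)|/M^2=2\log M-2|\cI(\G)|/M^2$.

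Next, $H(Y\mid X)=\log M$ because $Y$ is uniform on $\cY_x$, a set of size $M$. Hence $h_0=H(X\mid Y)=H(X)+H(Y\mid X)-H(Y)=2\log M-H(Y)$. With the value of $H(Y)$ just computed, this gives both claims at once. As a direct check, each shared symbol has probability $2/M^2$ and leaves one bit of uncertainty about $X$, which reproduces $h_0=2|\cI(\G)|/M^2$.

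\textbf{Item 3.} By Lemma~\ref{lemma:PF_lower}, it suffices to show $H(X\mid Y,T)=0$. Fix $t$, and let $S$ be the support of the posterior $P_{X\mid T}(\cdot\mid t)$, which is compatible by hypothesis. Given $T=t$, the variable $X$ lies in $S$, and $Y\in\cY_X$ because $Y$ is conditionally independent of $T$ given $X$. If $Y$ is a private symbol, it determines $X$. If $Y=y_{\{x,x'\}}$, then $X\in\{x,x'\}\cap S$, and this intersection has at most one element because $x$ and $x'$ are incompatible while $S$ is compatible. So $X$ is a function of $(Y,T)$ almost surely, and the equality case of Lemma~\ref{lemma:PF_lower} gives $I(Y;T)=I(X;T)-h_0$.

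The main obstacle is the bookkeeping in item 2: verifying that the $d(x)$ counts exactly cancel the pair contributions, so that the residual is precisely the penalty in \eqref{eq:Phi_def}.
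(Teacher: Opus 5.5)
Your proposal is correct and follows essentially the same route as the paper: item~2 uses the same bookkeeping, in which each $x$ lies in exactly $d(x)$ incompatible pairs, together with the identity $(a+b)\log\frac{1}{a+b}=a\log\frac{1}{a}+b\log\frac{1}{b}-(a+b)\,h_2\bigl(\frac{a}{a+b}\bigr)$, and item~3 uses the same compatibility argument to show $H(X\mid Y,T)=0$. The only difference is cosmetic and lies in item~1: you derive $H(Y)$ from item~2 and then $h_0$ by the chain rule, whereas the paper computes $h_0$ directly symbol by symbol (your ``direct check'') and then deduces $H(Y)$.
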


\begin{proof}
    We prove the three properties separately.

    \emph{Property 1.} We first compute $h_0$. By definition of the
    conditional entropy in~\eqref{eq:conditional_entropy}, we have that
    \begin{equation}\label{eq:h0_def_sum}
        h_0 = H(X \mid Y) = \sum_{y \in \cY} P_Y(y)\, H(X \mid Y = y).
    \end{equation}
    As observed before, if $y = (x,i)$, then, by~\eqref{eq:PF_joint},
    $P_{XY}(z,y) > 0$ only for $z = x$. Therefore, $X$ is determined by
    $Y$, and $H(X \mid Y = y) = 0$. If, instead, $y = y_{\{x,x'\}}$, then,
    by~\eqref{eq:PF_joint},
    $P_{XY}(x, y_{\{x,x'\}}) = P_{XY}(x', y_{\{x,x'\}}) = 1/M^2$ and
    $P_{XY}(z, y_{\{x,x'\}}) = 0$ for every other $z$. Hence,
    $P_Y(y_{\{x,x'\}}) = 2/M^2$, and, given $Y = y_{\{x,x'\}}$, $X$ is
    uniformly distributed on $\{x,x'\}$, so that
    $H(X \mid Y = y_{\{x,x'\}}) = 1$. Consequently, since there are
    $|\cI(\G)|$ symbols of the form $y_{\{x,x'\}}$, splitting the sum
    in~\eqref{eq:h0_def_sum} according to the two kinds of symbols in $\cY$, we
    obtain that
    \begin{align*}
        h_0 &= \sum_{x \in A(\G)} \sum_{i=1}^{M - d(x)} P_Y\big((x,i)\big)\,
               H\big(X \mid Y = (x,i)\big)
             + \sum_{\{x,x'\} \in \cI(\G)} P_Y\big(y_{\{x,x'\}}\big)\,
               H\big(X \mid Y = y_{\{x,x'\}}\big)\\
            &= \sum_{x \in A(\G)} \sum_{i=1}^{M - d(x)} P_Y\big((x,i)\big) \cdot 0
             + \sum_{\{x,x'\} \in \cI(\G)} \frac{2}{M^2} \cdot 1
             = |\cI(\G)| \cdot \frac{2}{M^2} = \frac{2|\cI(\G)|}{M^2}.
    \end{align*}

    We now compute $H(Y)$. By the chain rule~\eqref{eq:chain_rule},
    $H(X,Y) = H(X) + H(Y \mid X) = H(Y) + H(X \mid Y)$, and hence
    \begin{equation}\label{eq:H_Y_proof}
        H(Y) = H(X) + H(Y \mid X) - H(X \mid Y).
    \end{equation}
    Since $X$ is uniform on $A(\G)$, we have that $H(X) = \log M$, and,
    by~\eqref{eq:PF_channel}, given $X = x$, $Y$ is uniform on the $M$
    elements of $\cY_x$, so that $H(Y \mid X) = \log M$. Putting these results together and recalling that
    $H(X \mid Y) = h_0$, from~\eqref{eq:H_Y_proof} we obtain that $H(Y) = 2\log M - h_0$.

    \emph{Property 2.} Let $Q$ be a probability distribution on $A(\G)$.  We compute $H(WQ)$ explicitly. Let $a_x = Q(x)/M$
    for every $x \in A(\G)$. By the definition of the channel $W$ in~\eqref{eq:PF_channel}, each symbol
    $(x,i)$ has probability $(WQ)\big((x,i)\big) = a_x$, and each symbol
    $y_{\{x,x'\}}$ has probability $(WQ)\big(y_{\{x,x'\}}\big) = a_x + a_{x'}$.
    Since there are $M - d(x)$ symbols of the form $(x,i)$ for each
    $x \in A(\G)$, we obtain that
    \begin{equation}\label{eq:PF_HWQ_1}
        H(WQ) = -\sum_{x \in A(\G)} \big(M - d(x)\big)\, a_x \log a_x
        - \sum_{\{x,x'\} \in \cI(\G)} (a_x + a_{x'}) \log(a_x + a_{x'}).
    \end{equation}
    To rewrite the first sum in~\eqref{eq:PF_HWQ_1}, we observe that the term $a_x \log a_x$ of a
    fixed $x\in A(\G)$ appears once in the sum
    $\sum_{\{x,x'\} \in \cI(\G)} \big( a_x \log a_x + a_{x'} \log a_{x'} \big)$
    for each pair of $\cI(\G)$ containing $x$, that is, exactly $d(x)$
    times. Hence, we have that
    \begin{equation}\label{eq:first_step_sum_x_x'}
        \sum_{x \in A(\G)} d(x)\, a_x \log a_x
        = \sum_{\{x,x'\} \in \cI(\G)} \big( a_x \log a_x + a_{x'} \log a_{x'} \big).
    \end{equation}
    Substituting~\eqref{eq:first_step_sum_x_x'} into~\eqref{eq:PF_HWQ_1},
    we get that
    \begin{equation}\label{eq:PF_HWQ_2}
        H(WQ) = -\sum_{x \in A(\G)} M\, a_x \log a_x
        - \sum_{\{x,x'\} \in \cI(\G)}
        \Big( (a_x + a_{x'}) \log(a_x + a_{x'}) - a_x \log a_x - a_{x'} \log a_{x'} \Big).
    \end{equation}
    We now evaluate the two sums in~\eqref{eq:PF_HWQ_2} separately. For
    the first one, since $M a_x = Q(x)$, we get that
    \begin{equation}\label{eq:PF_first_sum}
        -\sum_{x \in A(\G)} M\, a_x \log a_x
        = -\sum_{x \in A(\G)} Q(x) \log\frac{Q(x)}{M} = H(Q) + \log M.
    \end{equation}
    For the second one, we use the identity
    $(a+b)\log(a+b) - a\log a - b\log b = (a+b)\, h_2\big(\frac{a}{a+b}\big)$,
    valid for $a, b \ge 0$ with $a + b > 0$. Since
    $a_x + a_{x'} = \frac{Q(x) + Q(x')}{M}$ and
    $\frac{a_x}{a_x + a_{x'}} = \frac{Q(x)}{Q(x) + Q(x')}$, each term of the
    second sum equals
    $\frac{Q(x) + Q(x')}{M}\, h_2\big( \frac{Q(x)}{Q(x) + Q(x')} \big)$,
    and, by the definition of $\Phi$ in~\eqref{eq:Phi_def}, the whole second sum equals
    $\frac1M \big( H(Q) - \Phi(Q) \big)$. Substituting this
    and~\eqref{eq:PF_first_sum} into~\eqref{eq:PF_HWQ_2}, we obtain that
    \begin{equation*}
        H(WQ) = \log M + H(Q) - \frac1M \big( H(Q) - \Phi(Q) \big)
        = \log M + \Big(1 - \frac1M\Big) H(Q) + \frac{\Phi(Q)}{M},
    \end{equation*}
    which proves~\eqref{eq:PF_entropy}.

    \emph{Property 3.} Let $T$ be an encoder of $X$ whose posteriors are
    all supported on compatible sets. As in the proof of
    Lemma~\ref{lemma:PF_lower}, we have that
    \begin{equation}\label{eq:PF_prop3_identity}
        I(Y;T) = I(X;T) - I(X;T \mid Y) = I(X;T) - h_0 + H(X \mid Y,T).
    \end{equation}
    Hence, it suffices to show that $H(X \mid Y,T) = 0$, that is, that $X$
    is determined by $Y$ and $T$. By definition of conditional entropy (see \eqref{eq:conditional_entropy}), it
    reduces to showing that $H(X \mid Y, T = t) = 0$ for every $t$ with
    $P_T(t) > 0$. Fix such a $t$, and let $S=\supp(P_{X\mid T}(\cdot \mid t))$ be the support of $P_{X\mid T}(\cdot \mid t)$, which, by assumption, is compatible. Since $Y$ and
    $T$ are conditionally independent given $X$, we have that
    \begin{equation*}
        P_{XY\mid T}(x,y \mid t) = P_{X\mid T}(x \mid t)\, W(y \mid x),
    \end{equation*}
    where the factor $ P_{X\mid T}(x \mid t)$ is positive only if $x \in S$, since $S$ contains the
    support of $P_{X\mid T}(\cdot \mid t)$, and $ W(y \mid x)$ is positive only if $y \in \cY_x$, by~\eqref{eq:PF_channel}. Hence,
    $P_{XY\mid T}(x,y \mid t) > 0$ if and only if $x \in S$ and $y \in \cY_x$. We now show that,
    for every $y \in \cY$, there is at most one $x \in S$ with
    $y \in \cY_x$, which implies that, given $T = t$, the value of $Y$
    determines the value of $X$.  We distinguish two cases. If $y = (x,i)$ for some
    $x \in A(\G)$ and $1 \le i \le M - d(x)$, then $y$ belongs only to
    $\cY_x$, and consequently $x$ is the only oriented edge with $y \in \cY_x$. If
    $y = y_{\{x,x'\}}$ for some $\{x,x'\} \in \mathcal{I}(\G)$, then $y$
    belongs only to $\cY_x$ and $\cY_{x'}$. Since $x$ and $x'$ are
    incompatible, they cannot both belong to the compatible set $S$. Thus, at
    most one of them belongs to $S$. In both cases, there is at most one
    $x \in S$ with $y \in \cY_x$. Hence, given $T = t$ and $Y = y$, the
    random variable $X$ can take only one value, and therefore
    $H(X \mid Y, T = t) = 0$.
    Thus, by~\eqref{eq:PF_prop3_identity}, we conclude that
    $I(Y;T) = I(X;T) - h_0$.
\end{proof}

\subsection{The reduction}\label{subsec:PF_reduction}

We now show that the yes- and no-instances of \clique{} can be distinguished by comparing $\PF(P_{XY},r)$ with the lower bound of Lemma~\ref{lemma:PF_lower}. Indeed,  by Lemma~\ref{lemma:PF_lower}, every encoder $T$ of $X$ with $I(X;T) \ge r$ satisfies $I(Y;T) \ge I(X;T) - h_0 \ge r - h_0$, so that $\PF(P_{XY},r) \ge r - h_0$. We show that, for a rate $r$ slightly smaller than $R_0=\log N$, this lower bound is not attained if $G$ does not contain a $k$-clique, while it is attained if $G$ contains a $k$-clique.

\medskip
\subsubsection{No-instances: $G$ does not contain a $k$-clique}

Assume that $G$ has no $k$-clique. Then $\omega(\G) = m$ by property~3
of Lemma~\ref{lemma:gadget}, and hence, by Lemma~\ref{lemma:Phi_max},
$\Phi(Q) \le \log s =\log(m(m-1))$ for every probability distribution $Q$ on $A(\G)$.
Using this, the next lemma shows that no encoder $T$ of $X$ with
$I(X;T) < R_0$ attains the lower bound of Lemma~\ref{lemma:PF_lower}.
\begin{lemma}\label{lemma:PF_no}
    Assume that $\omega(G) < k$. Then every encoder $T$ of $X$ satisfies
    \begin{equation}\label{eq:PF_no}
        I(Y;T) \ge I(X;T) - h_0 + \frac{R_0 - I(X;T)}{M}.
    \end{equation}
\end{lemma}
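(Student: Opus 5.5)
We will prove \eqref{eq:PF_no} by writing $I(Y;T)$ as an average over the posteriors of $T$, in the same way Lemma~\ref{lemma:lagrangian_costs} does for the information bottleneck. The ingredients are property~2 of Lemma~\ref{lemma:PF_channel} and the bound $\Phi(Q)\le\log s$, which holds in the no-case. Write $Q_t=P_{X\mid T}(\cdot\mid t)$. Lemma~\ref{lemma:posterior_Y_given_T} gives $P_{Y\mid T}(\cdot\mid t)=WQ_t$, so
\begin{equation*}
I(Y;T)=H(Y)-\sum_t P_T(t)\,H(WQ_t).
\end{equation*}

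First, substitute property~1 of Lemma~\ref{lemma:PF_channel}, namely $H(Y)=2\log M-h_0$, together with \eqref{eq:PF_entropy} for each $H(WQ_t)$. This yields
\begin{equation*}
I(Y;T)=\log M-h_0-\Big(1-\frac1M\Big)\sum_t P_T(t)H(Q_t)-\frac1M\sum_t P_T(t)\Phi(Q_t).
\end{equation*}
Next, use $\sum_t P_T(t)H(Q_t)=H(X\mid T)=\log M-I(X;T)$, which holds because $X$ is uniform on $A(\G)$. After collecting terms,
\begin{equation*}
I(Y;T)=I(X;T)-h_0+\frac{1}{M}\Big(\log M-I(X;T)-\sum_t P_T(t)\Phi(Q_t)\Big).
\end{equation*}
As a sanity check, the $\log M$ terms combine to $\log M-(1-\tfrac1M)\log M=\tfrac1M\log M$. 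This is a pure identity and holds for every encoder.

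Finally, we invoke the hypothesis $\omega(G)<k$. Property~3 of Lemma~\ref{lemma:gadget} gives $\omega(\G)=m$, so Lemma~\ref{lemma:Phi_max} gives $\Phi(Q_t)\le\log(m(m-1))=\log s$ for every $t$. Hence $\sum_t P_T(t)\Phi(Q_t)\le\log s$. Since $\log M-\log s=\log N=R_0$ by property~2 of Lemma~\ref{lemma:gadget}, the bracket is at least $R_0-I(X;T)$, and \eqref{eq:PF_no} follows.

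The only real work is the bookkeeping in the first step: getting the coefficients right when combining $H(Y)$ and the averaged $H(WQ_t)$. Everything else is a direct application of the stated lemmas. One point to check is that \eqref{eq:PF_entropy} applies to every posterior, including those with small support. It does, since property~2 was proved for arbitrary $Q$ and the convention that terms with zero mass vanish handles degenerate pairs.
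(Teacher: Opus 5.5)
Your proof is correct and follows essentially the same route as the paper: you combine \eqref{eq:PF_entropy} with Lemma~\ref{lemma:Phi_max} (using $\omega(\G)=m$), average over the posteriors, and rewrite the conditional entropies via $H(Y)=2\log M-h_0$ and $H(X)=\log M$. The only difference is the order of steps: you first derive the exact identity, whose slack is $\frac1M\bigl(\log s-\sum_t P_T(t)\Phi(Q_t)\bigr)$, and then apply $\Phi\le\log s$, whereas the paper applies this bound to $H(WQ)$ before averaging.
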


\begin{proof}
    The idea of the proof is to bound the entropy of $Y$ given each value
    of $T$, using Lemma~\ref{lemma:Phi_max}, and then to average over $T$
    and express the result in terms of mutual information.

    We first bound the output entropy of the channel $W$ defined
    in~\eqref{eq:PF_channel}. Since $\omega(G) < k$, we have that
    $\omega(G) \le m$, and by property~3 of Lemma~\ref{lemma:gadget} it
    follows that $\omega(\G) = m$. Hence,
    $\omega(\G)(\omega(\G)-1) = m(m-1) = s$, and Lemma~\ref{lemma:Phi_max}
    gives $\Phi(Q) \le \log s$ for every probability distribution $Q$ on
    $A(\G)$. Substituting this bound into~\eqref{eq:PF_entropy} of
    Lemma~\ref{lemma:PF_channel}, we obtain that
    \begin{equation}\label{eq:PF_no_Q}
        H(WQ) \le \log M + \Big(1 - \frac1M\Big) H(Q) + \frac{\log s}{M}
    \end{equation}
    for every probability distribution $Q$ on $A(\G)$.

    We now apply~\eqref{eq:PF_no_Q} to the posterior distributions of an
    encoder. Let $T$ be an encoder of $X$, and let $t$ be such that
    $P_T(t) > 0$. By Lemma~\ref{lemma:posterior_Y_given_T}, we have that
    $P_{Y\mid T}(\cdot \mid t) = W P_{X\mid T}(\cdot \mid t)$. Hence,
    applying~\eqref{eq:PF_no_Q} with $Q = P_{X\mid T}(\cdot \mid t)$, we
    get that
    \begin{equation*}
        H\big(P_{Y\mid T}(\cdot \mid t)\big)
        \le \log M + \Big(1 - \frac1M\Big) H\big(P_{X\mid T}(\cdot \mid t)\big)
        + \frac{\log s}{M}.
    \end{equation*}
    Multiplying by $P_T(t)$, summing over $t$, by the definition 
    of conditional entropy, we
    obtain that
    \begin{equation}\label{eq:PF_no_cond}
        H(Y \mid T) \le \log M + \Big(1 - \frac1M\Big) H(X \mid T) + \frac{\log s}{M}.
    \end{equation}

    We now rewrite~\eqref{eq:PF_no_cond} in terms of mutual information.
    By property~1 of Lemma~\ref{lemma:PF_channel}, we have that
    $H(Y \mid T) = H(Y) - I(Y;T) = 2\log M - h_0 - I(Y;T)$. Moreover,
    since $X$ is uniform on $A(\G)$, we have that
    $H(X \mid T) = H(X) - I(X;T) = \log M - I(X;T)$. Substituting these
    two expressions into the inequality~\eqref{eq:PF_no_cond}, we get that
    \begin{equation*}
        2\log M - h_0 - I(Y;T)
        \le \log M + \Big(1 - \frac1M\Big)\big(\log M - I(X;T)\big) + \frac{\log s}{M}.
    \end{equation*}
    Solving for $I(Y;T)$ and expanding the product, we obtain that
    \begin{align}
        I(Y;T)
        &\ge \log M - h_0 - \Big(1 - \frac1M\Big)\log M
        + \Big(1 - \frac1M\Big) I(X;T) - \frac{\log s}{M} \nonumber\\
        &= I(X;T) - h_0 + \frac{\log M - \log s - I(X;T)}{M}. \label{eq:last_step_PF}
    \end{align}
    Since $\log M - \log s = \log(M/s) = \log N = R_0$
    by~\eqref{eq:R_0}, one has that the inequality~\eqref{eq:last_step_PF} proves~\eqref{eq:PF_no}, that is, 
    the lemma.
\end{proof}

Lemma~\ref{lemma:PF_no} states that, if $G$ has no $k$-clique, every
encoder $T$ exceeds the lower bound~\eqref{eq:PF_lower} of Lemma~\ref{lemma:PF_lower} by at least
$\big(R_0 - I(X;T)\big)/M$, which is positive whenever $I(X;T) < R_0$.
Hence, without a $k$-clique, no encoder with rate smaller than $R_0$
attains equality in~\eqref{eq:PF_lower}. On the other hand, the rate
$R_0$ itself is attained with equality by the encoder $T_0$. Indeed, by
Lemma~\ref{lemma:T_0}, $I(X;T_0) = R_0$, and the posteriors of $T_0$ are
the distributions $U_{C_e}$, which are supported on the sets $A(C_e)$.
Since each $C_e$ is a clique of $\G$, these sets are compatible by
Lemma~\ref{lemma:PF_compatible_sets}, and, by property~3 of
Lemma~\ref{lemma:PF_channel}, it holds that
\begin{equation}\label{eq:PF_T0}
    I(X;T_0) = R_0
    \qquad\text{and}\qquad
    I(Y;T_0) = R_0 - h_0.
\end{equation}

\medskip
\subsubsection{Yes-instances: $G$ contains a $k$-clique}

We now assume that $G$ contains a $k$-clique $K$, and we show that the
encoder $T_1$ of Section~\ref{subsec:T_1}, with a suitable choice of $p$,
attains equality in~\eqref{eq:PF_lower} at a rate strictly smaller than
$R_0$. To this end, we
choose
\begin{equation}\label{eq:PF_p}
    p = \frac{1}{2(m+1)},
    \qquad\text{so that}\qquad
    \varepsilon = 2\mu p = \frac{1}{s(m+1)}
    \quad\text{and}\quad
    \lambda = \frac{B\varepsilon}{N} = \frac{m}{2M},
\end{equation}
where the last equality follows from $B = \frac{m(m+1)}{2}$ and
$M = Ns$.

The next lemma shows that, with this choice of $p$, the rate
of the encoder $T_1$ is smaller than $R_0$ by at least $\frac{3}{4M}$.
\begin{lemma}\label{lemma:PF_yes}
    Assume that $\omega(G) \ge k$, and let $T_1$ be the encoder defined
    in~\eqref{eq:T_1_def} with $p$ as in~\eqref{eq:PF_p}. Then, $T_1$ satisfies
    \begin{equation}\label{eq:PF_yes}
        I(Y;T_1) = I(X;T_1) - h_0
        \qquad\text{and}\qquad
        I(X;T_1) < R_0 - \frac{3}{4M}.
    \end{equation}
\end{lemma}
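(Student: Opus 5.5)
The plan splits into the equality claim and the rate bound.

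\emph{Equality.} By Lemma~\ref{lemma:T_1}, the posteriors of $T_1$ are $U_K$, $V_e$ for $e\in E(K)$, and $U_{C_e}$ for $e\in E(G)\setminus E(K)$. Their supports are $A(K)$, a subset of $A(C_e)$, and $A(C_e)$, respectively. Since $K$ and each $C_e$ are cliques of $\G$, Lemma~\ref{lemma:PF_compatible_sets} shows that all these supports are compatible sets. Property~3 of Lemma~\ref{lemma:PF_channel} then gives $I(Y;T_1)=I(X;T_1)-h_0$ directly.

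\emph{Rate bound.} By Lemma~\ref{lemma:T_1_rate}, $I(X;T_1)=R_0-\eta$ with
\[
\eta=\lambda\log\frac{m+1}{m-1}-\frac{B(1-\varepsilon)}{N}D_V .
\]
So it suffices to show $\eta>\frac{3}{4M}$. With $p$ as in \eqref{eq:PF_p} we have $\lambda=\frac{m}{2M}$, and the gain term is bounded below using $\ln\frac{m+1}{m-1}\ge \frac{2}{m}$:
\[
\lambda\log\frac{m+1}{m-1}\ge \frac{m}{2M}\cdot\frac{2}{m\ln 2}=\frac{1}{M\ln 2}\approx \frac{1.4427}{M}.
\]

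For the loss term, use Lemma~\ref{lemma:perturbation}, which gives $D_V\le \frac{\varepsilon^2}{2\mu(1-\varepsilon)^2\ln 2}$. Combined with $\frac{B}{N}=\frac{\lambda}{\varepsilon}$, this yields
\[
\frac{B(1-\varepsilon)}{N}D_V\le \frac{\lambda\varepsilon}{2\mu(1-\varepsilon)\ln 2}=\frac{\lambda p}{(1-\varepsilon)\ln 2}=\frac{m}{4(m+1)(1-\varepsilon)M\ln 2},
\]
using $\varepsilon/\mu=2p=\frac{1}{m+1}$. Since $\varepsilon=\frac{1}{s(m+1)}\le\frac1{24}$ for $m\ge3$, this is at most $\frac{1}{4(1-\varepsilon)M\ln 2}\le \frac{0.261}{M\ln 2}$.

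Hence $\eta\ge \frac{1-0.261}{M\ln 2}\approx\frac{1.066}{M}>\frac{3}{4M}$, which gives the strict inequality.

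\emph{Main obstacle.} The only delicate part is the numerical constants. They must show that the quadratic perturbation loss $D_V$ does not eat the gain $\lambda\log\frac{m+1}{m-1}$. The $\chi^2$ bound gives a factor $\frac14$ against $1$, which leaves ample room. If the crude bound $\log\frac{m+1}{m-1}\ge\frac{2}{m\ln2}$ were too weak, the fallback would be to keep the exact expansion $\ln\frac{1+1/m}{1-1/m}=2(1/m+1/(3m^3)+\cdots)$.
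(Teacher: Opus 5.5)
Your proof is correct and follows the paper's argument. The equality comes from the compatibility of the supports of $U_K$, $V_e$ and $U_{C_e}$ together with property~3 of Lemma~\ref{lemma:PF_channel}. The rate bound comes from showing $\eta>\frac{3}{4M}$, using Lemma~\ref{lemma:T_1_rate} and the $\chi^2$ bound of Lemma~\ref{lemma:perturbation}. The only difference is in the constants: you use $\ln\frac{m+1}{m-1}\ge\frac{2}{m}$, which makes the gain term $m$-free ($\ge\frac{1}{M\ln 2}$), whereas the paper uses $\ge\frac{2}{m+1}$ and then $\frac{m}{m+1}\ge\frac34$; both leave a comfortable margin.
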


\begin{proof}
    We first prove that the encoder $T_1$ attains equality in~\eqref{eq:PF_lower}, that is,  $I(Y;T_1) = I(X;T_1) - h_0$. By
    Lemma~\ref{lemma:T_1}, the posteriors of $T_1$ are $U_K$, which is
    supported on $A(K)$, and the distributions $V_e$ and $U_{C_e}$, which
    are supported on the sets $A(C_e)$ by~\eqref{eq:V_e_def}. Since $K$
    and the $C_e$ are cliques of $\G$, all these sets are compatible by
    Lemma~\ref{lemma:PF_compatible_sets}. Hence, since the posteriors of $T_1$ are supported on compatible sets, the equality
    $I(Y;T_1) = I(X;T_1) - h_0$ follows from property~3 of
    Lemma~\ref{lemma:PF_channel}.

    We now bound the rate of $T_1$. By Lemma~\ref{lemma:T_1_rate}
    and $R_0=\log N$, one has that
    \begin{equation}\label{eq:PF_I_X_T1}
        I(X;T_1) = R_0 - \eta,
        \qquad\text{where}\qquad
        \eta = \lambda \log\frac{m+1}{m-1} - \frac{B(1-\varepsilon)}{N}\, D_V
    \end{equation}
    is the increase in conditional entropy obtained by replacing $T_0$
    with $T_1$. Hence, to prove the rightmost inequality in~\eqref{eq:PF_yes} it suffices to show that $\eta > \frac{3}{4M}$. 
    To this end, we bound the two terms of $\eta$
    in~\eqref{eq:PF_I_X_T1} separately.

    We start with the first term. Using $\ln(1+z) \ge \frac{z}{1+z}$ with
    $z = \frac{2}{m-1}$, for which $1 + z = \frac{m+1}{m-1}$ and
    $\frac{z}{1+z} = \frac{2}{m+1}$, we get that
    $\ln\frac{m+1}{m-1} \ge \frac{2}{m+1}$. Since
    $\log z = \frac{\ln z}{\ln 2}$, this gives
    \begin{equation}\label{eq:PF_gain}
        \lambda \log\frac{m+1}{m-1} \ge \frac{2\lambda}{(m+1)\ln 2}.
    \end{equation}
    We now bound the second term of $\eta$ in~\eqref{eq:PF_I_X_T1}. 
    By~\eqref{eq:D_V_bound} in
    Lemma~\ref{lemma:perturbation}, rearranging the factors, we have that
    \begin{equation}\label{eq:PF_loss_1}
        \frac{B(1-\varepsilon)}{N}\, D_V
        \le \frac{B(1-\varepsilon)}{N} \cdot \frac{\varepsilon^2}{2\mu(1-\varepsilon)^2 \ln 2}
        = \frac{B\varepsilon}{N} \cdot \frac{\varepsilon}{\mu} \cdot \frac{1}{2(1-\varepsilon)\ln 2}.
    \end{equation}
    By the choice of $p$ in~\eqref{eq:PF_p}, we have that $\frac{B\varepsilon}{N} = \lambda$
    and $\frac{\varepsilon}{\mu} = \varepsilon s = \frac{1}{m+1}$.
    Moreover, since $m \ge 3$, we have that $s(m+1) \ge 24$, so that
    $\varepsilon = \frac{1}{s(m+1)} \le \frac{1}{24}$ and
    $\frac{1}{1-\varepsilon} \le \frac{24}{23}$. Substituting these facts
    into~\eqref{eq:PF_loss_1}, we obtain that
    \begin{equation}\label{eq:PF_loss}
        \frac{B(1-\varepsilon)}{N}\, D_V
        \le \frac{\lambda}{2(m+1)(1-\varepsilon)\ln 2}
        \le \frac{12}{23} \cdot \frac{\lambda}{(m+1)\ln 2}.
    \end{equation}
    We can now combine the two bounds. Substituting~\eqref{eq:PF_gain}
    and~\eqref{eq:PF_loss} into the expression of $\eta$
    in~\eqref{eq:PF_I_X_T1}, we get that
    \begin{equation}\label{eq:PF_eta_bound_1}
        \eta \ge \Big( 2 - \frac{12}{23} \Big) \frac{\lambda}{(m+1)\ln 2}
        = \frac{34}{23} \cdot \frac{\lambda}{(m+1)\ln 2}.
    \end{equation}
    Since $\lambda = \frac{m}{2M}$ by~\eqref{eq:PF_p} and
    $\frac{m}{m+1} \ge \frac34$ since $m \ge 3$, we have that
    $\frac{\lambda}{m+1} = \frac{m}{2(m+1)M} \ge \frac{3}{8M}$. Hence,
    using $\ln 2 < \frac{7}{10}$, it follows
    from~\eqref{eq:PF_eta_bound_1} that
    \begin{equation*}
        \eta > \frac{34}{23} \cdot \frac{3}{8M} \cdot \frac{10}{7}
        = \frac{255}{322\, M} > \frac{3}{4M},
    \end{equation*}
    where the last inequality follows from $4 \cdot 255 = 1020 > 966 = 3 \cdot 322$.
    Hence, by~\eqref{eq:PF_I_X_T1}, $I(X;T_1) = R_0 - \eta < R_0 - \frac{3}{4M}$, which concludes the proof.
\end{proof}

We can now define the instance of \PFT{} associated with $(G,k)$. We
choose the rate $r$ slightly smaller than $R_0$, namely such that
\begin{equation}\label{eq:PF_window}
    R_0 - \frac{3}{4M} \le r \le R_0 - \frac{5}{8M}.
\end{equation}
With this choice, the two cases of \clique{} lead to different values of
$\PF(P_{XY},r)$. Indeed, if $G$ has no $k$-clique, then, by
Lemma~\ref{lemma:PF_no} and the rightmost inequality in~\eqref{eq:PF_window},
every encoder $T$ of $X$ with $I(X;T) \ge r$ satisfies
$I(Y;T) \ge r - h_0 + \frac{R_0 - r}{M} \ge r - h_0 + \frac{5}{8M^2}$.
If, instead, $G$ contains a $k$-clique, then, by~\eqref{eq:PF_T0} and
Lemma~\ref{lemma:PF_yes}, the encoders $T_0$ and $T_1$ attain the lower
bound~\eqref{eq:PF_lower} of Lemma~\ref{lemma:PF_lower} at rates $R_0$ and less than
$R_0 - \frac{3}{4M}$, respectively. Since $r$ lies between these two
rates by~\eqref{eq:PF_window}, we show in the proof of
Theorem~\ref{thm:PF_hard} that, by combining $T_0$ and $T_1$, the lower
bound is also attained at rate $r$, so that $\PF(P_{XY},r) = r - h_0$.
Since the two cases are separated by a gap of $\frac{5}{8M^2}$, we place
the threshold in the middle of this gap, that is,
\begin{equation}\label{eq:PF_theta_def}
    \vartheta = r - h_0 + \frac{5}{16M^2}.
\end{equation}
Finally, since $R_0 = \log N$ is in general irrational, to obtain a
rational $r$ satisfying~\eqref{eq:PF_window}, we take a rational number
$\tilde R_0$ with $|\tilde R_0 - R_0| \le \frac{1}{16M}$, and we set
\begin{equation}\label{eq:PF_r_def}
    r = \tilde R_0 - \frac{11}{16M}.
\end{equation}
The instance of \PFT{} associated with $(G,k)$ is the triple
$(P_{XY}, r, \vartheta)$, where $P_{XY}$ is defined
in~\eqref{eq:PF_joint}, $r$ in~\eqref{eq:PF_r_def}, and $\vartheta$
in~\eqref{eq:PF_theta_def}.

\subsection{Proof of Theorem~\ref{thm:PF_hard}}\label{subsec:PF_proof}

\begin{proof}
     We start by showing that \PFT{} is NP-hard.
    To this end, we show that the construction of the instance $(P_{XY}, r, \vartheta)$
    described above is a polynomial-time reduction from \clique{} to
    \PFT{}. Given an instance $(G,k)$ of \clique{}
    satisfying~\eqref{eq:assumption_on_G}, the reduction constructs the
    extended graph $\G$, the joint distribution $P_{XY}$ defined
    in~\eqref{eq:PF_joint}, and the rational numbers $r$ and $\vartheta$
    defined in~\eqref{eq:PF_r_def} and~\eqref{eq:PF_theta_def}, and
    outputs the instance $(P_{XY}, r, \vartheta)$ of \PFT{}.

    We first show that the reduction runs in time polynomial in
    $n = |V(G)|$. The reduction has to compute $\G$, $P_{XY}$, $r$ and
    $\vartheta$. By property~4 of Lemma~\ref{lemma:gadget}, $\G$ can be
    constructed in polynomial time, and $M = |A(\G)| \le n^4$. We now consider the joint distribution $P_{XY}$ and show that it can be computed in polynomial time. To this end, by~\eqref{eq:PF_Y_x} and~\eqref{eq:PF_joint},  to compute $P_{XY}$ we need the set $\cI(\G)$ of incompatible pairs,
    which determines the alphabet $\cY$, and we need to bound $|\cY|$. We first compute $\cI(\G)$.  There are at most $M^2$ pairs of oriented edges, and,
    for each pair $\{x,x'\}$, deciding whether $x$ and $x'$ are compatible
    requires checking whether the set of their endpoints, which has at
    most four vertices, is a clique of $\G$, that is, checking at most
    $\binom{4}{2} = 6$ adjacencies. Hence, the set $\cI(\G)$ and the numbers $d(x)$
    can be computed in polynomial time. Moreover, by~\eqref{eq:PF_Y_x},
    $\cY$ contains one symbol for each pair in $\cI(\G)$ and $M - d(x)$
    symbols $(x,i)$ for each $x \in A(\G)$. Since each pair of $\cI(\G)$
    contains exactly two oriented edges, we have that
    $\sum_{x \in A(\G)} d(x) = 2|\cI(\G)|$, and hence
    \begin{equation}\label{eq:PF_size_Y}
        |\cY| = |\cI(\G)| + \sum_{x \in A(\G)} \big(M - d(x)\big)
        = |\cI(\G)| + M^2 - 2|\cI(\G)| = M^2 - |\cI(\G)| \le M^2.
    \end{equation}
    Therefore, $P_{XY}$ is a table with at most $M^3$ entries, each equal
    to $1/M^2$ or to $0$ by~\eqref{eq:PF_joint}, which are rational
    numbers with $O(\log n)$ bits, and it can be computed in polynomial
    time. We now consider $r$ and $\vartheta$. By property~1 of
    Lemma~\ref{lemma:PF_channel}, $h_0 = 2|\cI(\G)|/M^2$ is a rational
    number that can be computed exactly, and, by Lemma~\ref{lem:log} in Appendix~\ref{app:log}, a
    rational $\tilde R_0$ with $|\tilde R_0 - R_0| \le \frac{1}{16M}$ can
    be computed in polynomial time. Hence, by~\eqref{eq:PF_r_def}
    and~\eqref{eq:PF_theta_def}, $r$ and $\vartheta$ are rational numbers
    computable in polynomial time. Therefore, the whole reduction runs in polynomial time.

    It remains to check that $r$ and $\vartheta$ are non-negative, as
    required in the definition of \PFT{}. By assumption~\eqref{eq:assumption_on_G} on the graph $G$, we have that $k\geq 4$ and 
    $N \ge B = \binom{k}{2} \ge \binom{4}{2} = 6$, so that
    $R_0 = \log N > 2$. Hence, by the leftmost inequality
    in~\eqref{eq:PF_window}, we get that $r \ge R_0 - \frac{3}{4M} > 1$. Moreover,
    every pair in $\cI(\G)$ consists of two distinct oriented edges, so
    that $|\cI(\G)| \le \binom{M}{2} < \frac{M^2}{2}$. Hence, by property~1 of
    Lemma~\ref{lemma:PF_channel}, it follows that $h_0 = \frac{2|\cI(\G)|}{M^2} < 1$, and therefore,
    by~\eqref{eq:PF_theta_def}, $\vartheta > r - h_0 > 1 - 1 = 0$.

    We now show that the reduction is correct, that is,
    $(P_{XY}, r, \vartheta)$ is a yes-instance of \PFT{} if and only if $(G,k)$ is a yes-instance of \clique{}. We recall that $(P_{XY}, r, \vartheta)$ is a yes-instance of \PFT{} if
    there exists an encoder $T$ of $X$ such that $I(X;T) \ge r$ and
    $I(Y;T) \le \vartheta$, while $(G,k)$ is a yes-instance of \clique{}
    if $\omega(G) \ge k$, that is, if $G$ contains a clique of size $k$. 
    
    Assume first that $(G,k)$ is a yes-instance of \clique{}, that is, $\omega(G) \ge k$.
    By~\eqref{eq:PF_T0} and Lemma~\ref{lemma:PF_yes}, the encoders $T_0$
    and $T_1$ attain the lower bound in~\eqref{eq:PF_lower}, at rates $R_0$ and
    less than $R_0 - \frac{3}{4M}$, respectively. We now show that, by combining the two encoders, we
    obtain an encoder that attains the lower bound in~\eqref{eq:PF_lower} at rate exactly $r$. For
    $\theta \in [0,1]$, let $T_\theta$ be the encoder of
    Lemma~\ref{lemma:mixture} with $T_a = T_0$ and $T_b = T_1$, which
    applies $T_0$ with probability $\theta$ and $T_1$ with probability
    $1-\theta$, and reveals which of the two has been applied.
    By~\eqref{eq:mixture_post}, the posteriors of $T_\theta$ are posteriors
    of $T_0$ or of $T_1$, and are therefore supported on compatible sets.
    Hence, by property~3 of Lemma~\ref{lemma:PF_channel}, it follows that the encoder $T_\theta$ satisfies
    \begin{equation}\label{eq:PF_mixture_equality}
        I(Y;T_\theta) = I(X;T_\theta) - h_0
        \qquad \text{for every } \theta \in [0,1].
    \end{equation}
    We now choose $\theta$ so that $I(X;T_\theta) = r$.
    By the expression
    of the rate of $T_\theta$ in~\eqref{eq:mixture_rate}, 
    $I(X;T_\theta) = \theta\, I(X;T_0) + (1-\theta)\, I(X;T_1)$ is a
    continuous function of $\theta$. Moreover, by Lemma~\ref{lemma:PF_yes},
    by the choice~\eqref{eq:PF_window} of $r$,
    and by the rate of $T_0$ in~\eqref{eq:PF_T0}, we have that
    \begin{equation}\label{eq:PF_r_between}
        I(X;T_1) < R_0 - \frac{3}{4M} \le r
        \le R_0 - \frac{5}{8M} < R_0 = I(X;T_0).
    \end{equation}
    Hence, there exists $\theta \in (0,1)$ such that $I(X;T_\theta) = r$,
    and, for this $\theta$, \eqref{eq:PF_mixture_equality} and
    the definition~\eqref{eq:PF_theta_def} of $\vartheta$ give
    \begin{equation}\label{eq:PF_yes_value}
        I(Y;T_\theta) = r - h_0 = \vartheta - \frac{5}{16M^2} < \vartheta.
    \end{equation}
    Therefore, the encoder $T_\theta$ satisfies both $I(X;T_\theta) \ge r$ and
    $I(Y;T_\theta) \le \vartheta$, and $(P_{XY}, r, \vartheta)$ is a
    yes-instance of \PFT{}. Moreover, by~\eqref{eq:PF_yes_value},
    $\PF(P_{XY},r) \le r - h_0$, while, by Lemma~\ref{lemma:PF_lower},
    $\PF(P_{XY},r) \ge r - h_0$. Hence, it follows that
    \begin{equation}\label{eq:PF_yes_PF}
        \PF(P_{XY},r) = r - h_0 = \vartheta - \frac{5}{16M^2}.
    \end{equation}

    Assume now that $(G,k)$ is a no-instance of \clique{}, that is, $\omega(G) < k$. Let $T$ be an encoder of $X$ with
    $I(X;T) \ge r$. We show that $I(Y;T) > \vartheta$. By Lemma~\ref{lemma:PF_no}, we have that
    \begin{align}\label{eq:PF_no_1}
        I(Y;T) &\ge I(X;T) - h_0 + \frac{R_0 - I(X;T)}{M}\nonumber\\
        &= \Big(1 - \frac1M\Big) I(X;T) + \frac{R_0}{M} - h_0.
    \end{align}
    The right-hand side
    of~\eqref{eq:PF_no_1} is
    increasing in $I(X;T)$. Hence, since $I(X;T) \ge r$, we can
    replace $I(X;T)$ with $r$, and, using the rightmost inequality
    in~\eqref{eq:PF_window} and the definition of $\vartheta$ in~\eqref{eq:PF_theta_def}, we get from~\eqref{eq:PF_no_1} that
    \begin{equation}\label{eq:PF_no_value}
        I(Y;T) \ge r - h_0 + \frac{R_0 - r}{M} \ge r - h_0 + \frac{5}{8M^2}
        = \vartheta + \frac{5}{16M^2} > \vartheta.
    \end{equation}
     Since~\eqref{eq:PF_no_value} holds for any arbitrary encoder $T$ with $I(X;T) \ge r$, it follows that no encoder
    satisfies both $I(X;T) \ge r$ and $I(Y;T) \le \vartheta$, and hence
    $(P_{XY}, r, \vartheta)$ is a no-instance of \PFT{}. Moreover,
    taking the infimum over all encoders with $I(X;T) \ge r$
    in~\eqref{eq:PF_no_value} yields
    \begin{equation}\label{eq:PF_no_PF}
        \PF(P_{XY},r) \ge \vartheta + \frac{5}{16M^2}.
    \end{equation}
    This proves that the reduction is correct, and hence that \PFT{} is
    NP-hard.

    It remains to prove the inapproximability result.
    By~\eqref{eq:PF_yes_PF} and~\eqref{eq:PF_no_PF}, $\PF(P_{XY},r)$ lies
    below $\vartheta - \frac{5}{16M^2}$ if $\omega(G) \ge k$, and above
    $\vartheta + \frac{5}{16M^2}$ otherwise. Suppose that a polynomial-time
    algorithm outputs, given $P_{XY}$ and $r$, a value $v$ with
    $|v - \PF(P_{XY},r)| \le \frac{1}{4|\cX|^2} = \frac{4}{16M^2}$. If
    $\omega(G) \ge k$, then, by~\eqref{eq:PF_yes_PF},
    $v \le \PF(P_{XY},r) + \frac{4}{16M^2} = \vartheta - \frac{1}{16M^2} < \vartheta$,
    while, if $\omega(G) < k$, then, by~\eqref{eq:PF_no_PF},
    $v \ge \PF(P_{XY},r) - \frac{4}{16M^2} \ge \vartheta + \frac{1}{16M^2} > \vartheta$.
    Thus, comparing $v$ with $\vartheta$ would decide \clique{} in
    polynomial time, which is impossible unless
    $\mathrm{P} = \mathrm{NP}$.
\end{proof}

\subsection{The Lagrangian formulation}\label{subsec:PF_lagrangian}

We now show that \PFL$_\beta$ is NP-hard, and hard to approximate, for
every fixed rational $\beta \in (0,1)$. The proof combines the
constructions in the proofs of Theorem~\ref{thm:PF_hard} and
Proposition~\ref{prop:other_beta}. Specifically, we first reduce \clique{} to
\PFL$_{1/2}$, starting from the same
instance of Theorem~\ref{thm:PF_hard}, and then we reduce \PFL$_{1/2}$ to
\PFL$_\beta$, as in Proposition~\ref{prop:other_beta}.

\begin{proposition}\label{prop:PF_lagrangian}
    Let $\beta \in \mathbb{Q} \cap (0,1)$ be fixed, and let
    $\kappa_\beta = 2\min\{\beta, 1-\beta\}$. Then \PFL$_\beta$ is
    NP-hard. Moreover, unless $\mathrm{P} = \mathrm{NP}$, there is no
    polynomial-time algorithm that, given a rational joint distribution
    $P_{XY}$ on $\cX \times \cY$, outputs a value $v$ such that
    \begin{equation*}
        \big| v - \Lambda_\beta(P_{XY}) \big| \le \frac{\kappa_\beta}{8|\cX|^2}.
    \end{equation*}
\end{proposition}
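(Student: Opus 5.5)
The plan is to reuse the \PFT{} instance $P_{XY}$ of~\eqref{eq:PF_joint} and to reduce to an effective multiplier $\gamma$ that is slightly smaller than $1$ and depends on the instance. Only at the end do we rescale to the fixed $\beta$ with an erasure-type construction, as in Proposition~\ref{prop:other_beta}. Taking $\beta=1/2$ directly on $P_{XY}$ does not work: by Lemma~\ref{lemma:PF_lower} we have $I(Y;T)-\frac12 I(X;T)\ge \frac12 I(X;T)-h_0$, so the objective is essentially minimized by uninformative encoders and does not see the clique. The relevant slope of the funnel curve near $R_0$ is close to $1$, so we set $\gamma = 1-\frac1M$ and study $O_\gamma(T)=I(Y;T)-\gamma I(X;T)$.

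\emph{Step 1 (no-instances).} Assume $\omega(G)<k$ and let $I=I(X;T)$. If $I\ge R_0$, Lemma~\ref{lemma:PF_lower} gives $O_\gamma(T)\ge \frac{I}{M}-h_0\ge \frac{R_0}{M}-h_0$. If $I<R_0$, Lemma~\ref{lemma:PF_no} gives $O_\gamma(T)\ge \frac IM - h_0+\frac{R_0-I}{M}=\frac{R_0}{M}-h_0$. Hence $\inf_T O_\gamma(T)\ge \frac{R_0}{M}-h_0$.

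\emph{Step 2 (yes-instances).} Assume $\omega(G)\ge k$. Lemma~\ref{lemma:PF_yes} (with $p$ as in~\eqref{eq:PF_p}) gives $O_\gamma(T_1)=\frac{I(X;T_1)}{M}-h_0<\frac{R_0}{M}-h_0-\frac{3}{4M^2}$. The two cases are therefore separated by a gap of $\frac{3}{4M^2}$ around the value $\frac{R_0}{M}-h_0$. This value is computable to any polynomial precision, since $h_0$ is rational by Lemma~\ref{lemma:PF_channel} and $R_0=\log N$ can be approximated by Lemma~\ref{lem:log}. We then choose a rational threshold in the middle of the gap.

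\emph{Step 3 (passing to the fixed $\beta$).} Let $a=\beta/\gamma$, which is rational and computable from $M$. We have $a\le 1$ once $M\ge 1/(1-\beta)$; the finitely many smaller instances can be solved by brute force. We also have $a\ge\beta$. Let $J$ be independent of $(X,Y)$ with $\mathbb P(J=0)=a$, and set $Z=(0,Y)$ if $J=0$ and $Z=\bot$ otherwise. The chain-rule computation~\eqref{eq:I_Z_T} gives $I(Z;T)=a\,I(Y;T)$, so $I(Z;T)-\beta I(X;T)=a\,O_\gamma(T)$ for every encoder, while $P_X$ and $\cX$ are unchanged. Therefore $\Lambda_\beta(\tilde P_{XZ})=a\inf_T O_\gamma(T)$, and the yes- and no-instances are separated by a gap of $\frac{3a}{4M^2}$. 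After scaling, the threshold sits at distance at least roughly $\frac{3a}{8M^2}\ge\frac{3\beta}{8M^2}$ from both cases. Since $\kappa_\beta\le 2\beta$, any approximation within $\frac{\kappa_\beta}{8M^2}\le\frac{\beta}{4M^2}$ stays on the correct side of the threshold, with slack left to absorb the rounding of the threshold. This gives both NP-hardness of \PFL$_\beta$ and the inapproximability bound.

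The main obstacle is the choice of the effective multiplier. It must be close enough to $1$ that Lemma~\ref{lemma:PF_no} dominates on the range $I<R_0$, and Lemma~\ref{lemma:PF_lower} must still control the range $I\ge R_0$, where the rate of an encoder can go up to $\log M$. The choice $\gamma=1-1/M$ is exactly where both regimes give the same bound. The remaining delicate points are bookkeeping: verifying $a\le1$, keeping the construction rational, and fitting the rounding errors inside the $\kappa_\beta/(8|\cX|^2)$ margin.
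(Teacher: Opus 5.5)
Your proof is correct, and its core is the paper's. Both use the instance-dependent multiplier $\gamma=1-\frac1M$ (the paper's $\beta_0$), the no-case bound $I(Y;T)-\gamma I(X;T)\ge\frac{R_0}{M}-h_0$, the yes-case bound from $T_1$, and an erasure channel on $Y$ to turn $\gamma$ into a fixed multiplier. Your case split in Step~1 is harmless but unnecessary. Lemma~\ref{lemma:PF_no} holds for every encoder, including those with $I(X;T)\ge R_0$, and gives the bound on its own; this is how the paper argues.

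The genuine difference is the last step. The paper factors through $\beta=\frac12$: it erases $Y$ with keep-probability $a=\frac{M}{2(M-1)}$, so that $a\beta_0=\frac12$, and then applies the $\kappa_\beta$-mixture of Proposition~\ref{prop:other_beta}. You erase once, with keep-probability $a=\frac{\beta M}{M-1}$.

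For $\beta\le\frac12$ the two constructions coincide up to merging erasure symbols, since the paper's composite reveals $Y$ with probability $2\beta\cdot\frac{M}{2(M-1)}$. They differ only for $\beta>\frac12$. There the paper's route needs no side condition, but it scales the gap by $\frac{(1-\beta)M}{M-1}$. That factor is the source of $\kappa_\beta$ in the statement, and it degenerates as $\beta\to1$.

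Your route needs $a\le1$, i.e.\ $M\ge\frac{1}{1-\beta}$, so bounded instances must be handled separately. Your brute-force argument for them is valid, since bounded $M$ forces bounded $N$ and $k$. Alternatively, you could pad $G$ with isolated edges as in Theorem~\ref{thm:padding}. Note also that \eqref{eq:assumption_on_G} already gives $M\ge 36$, so no exception arises for $\beta\le\frac{35}{36}$. In exchange, your route scales the gap by $a\ge\beta$, giving a margin of order $\beta/|\cX|^2$. When $\beta>\frac12$, this is a stronger inapproximability bound than the stated $\kappa_\beta/(8|\cX|^2)$.
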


\begin{proof}
    We first prove that \PFL$_\beta$ is NP-hard for every
    $\beta \in \mathbb{Q} \cap (0,1)$. To this end, we give a polynomial-time reduction from \clique{} to \PFL$_{1/2}$,
    and then a polynomial-time reduction from \PFL$_{1/2}$ to
    \PFL$_\beta$.

    We start with the reduction from \clique{} to \PFL$_{1/2}$. Let
    $(G,k)$ be an instance of \clique{} satisfying~\eqref{eq:assumption_on_G},
    let $P_{XY}$ be the joint distribution defined in~\eqref{eq:PF_joint},
    and let
    \begin{equation}\label{eq:PF_lag_params}
        \beta_0 = 1 - \frac1M, \qquad
        \ell_0 = \frac{R_0}{M} - h_0, \qquad
        a = \frac{1}{2\beta_0} = \frac{M}{2(M-1)}.
    \end{equation}
    Note that $\frac12 < a \le 1$, since $M \ge 2$. We define the joint
    distribution $P_{XY'}$ on $\cX \times (\cY \cup \{\bot\})$, where
    $\bot$ is a new symbol, by
    \begin{equation}\label{eq:PF_lag_joint}
        P_{XY'}(x,y) = a\, P_{XY}(x,y), \quad \forall y \in \cY,
        \qquad\text{and}\quad
        P_{XY'}(x,\bot) = (1-a)\, P_X(x),
    \end{equation}
    that is, $Y'$ is equal to $Y$ with probability $a$, and to $\bot$
    otherwise, independently of $(X,Y)$. Moreover, we let $\tau$ be a
    rational number such that
    \begin{equation}\label{eq:PF_lag_tau}
        \Big| \tau - \Big( a\ell_0 - \frac{6}{32M^2} \Big) \Big| \le \frac{1}{32M^2}.
    \end{equation}
    The reduction maps $(G,k)$ to the instance $(P_{XY'}, \tau)$ of
    \PFL$_{1/2}$. We first show that the instance $(P_{XY'}, \tau)$ can be computed in polynomial time.
    Indeed, the joint distribution $P_{XY}$ can be computed in polynomial time, as shown in the
    proof of Theorem~\ref{thm:PF_hard}, and, since $a$ is rational, it follows by~\eqref{eq:PF_lag_joint} that 
    $P_{XY'}$ can also be computed in polynomial time. Moreover,
    $a\ell_0 = \frac{a}{M}\log N - a h_0$, where $a$ and $h_0$ are
    rational. Hence, by Lemma~\ref{lem:log}, $\tau$ can be computed in
    polynomial time. Therefore, the whole instance $(P_{XY'}, \tau)$ can be computed in polynomial time.

    We now show that the reduction is correct.  By
    Lemmas~\ref{lemma:PF_no} and~\ref{lemma:PF_yes}, the Lagrangian of
    $P_{XY}$ with multiplier $\beta_0 = 1 - \frac1M$ already separates the
    two cases of \clique{}. However, $\beta_0$ depends on the instance,
    while the multiplier of \PFL$_{1/2}$ is fixed. For this reason, we  relate the
    Lagrangian of $P_{XY'}$ with multiplier $\frac12$ to the Lagrangian of
    $P_{XY}$ with multiplier $\beta_0$. Since $X$ is unchanged, the
    encoders of $X$ for $P_{XY}$ and for $P_{XY'}$ coincide. Let $T$ be an
    encoder of $X$, and let $E$ be the indicator of the event
    $\{Y' \ne \bot\}$, so that $E = 1$ with probability $a$. Since $E$ is
    a function of $Y'$, by the chain rule, we have that
    \begin{equation}\label{eq:PF_lag_chain}
        I(Y';T) = I(Y',E;T) = I(E;T) + I(Y';T \mid E).
    \end{equation}
    By construction, $E$ is independent of $(X,Y)$, and $T$ depends only
    on $X$. Thus, $E$ is independent of $(X,Y,T)$, and it follows that
    $I(E;T) = 0$, and that, for each value of $E$, the distribution of
    $(Y,T)$ is unchanged. Moreover, if $E = 1$, then $Y' = Y$, so that
    $I(Y';T \mid E = 1) = I(Y;T)$, while, if $E = 0$, then $Y' = \bot$ is
    constant, so that $I(Y';T \mid E = 0) = 0$. Substituting these facts
    into~\eqref{eq:PF_lag_chain}, we obtain that
    \begin{equation}\label{eq:PF_lag_Yprime}
        I(Y';T) = a\, I(Y;T).
    \end{equation}
    Finally, since $a\beta_0 = \frac12$ by~\eqref{eq:PF_lag_params},
    subtracting $\frac12 I(X;T)$ from both sides of~\eqref{eq:PF_lag_Yprime}
    gives
    \begin{equation}\label{eq:PF_lag_half}
        I(Y';T) - \tfrac12 I(X;T) = a\, I(Y;T) - a\beta_0\, I(X;T)
        = a\big[ I(Y;T) - \beta_0\, I(X;T) \big].
    \end{equation}
    We now use the equivalence~\eqref{eq:PF_lag_half} to show that $G$ contains a
    $k$-clique if and only if there exists an encoder $T$ of $X$ such that
    \begin{equation}\label{eq:PF_lag_goal}
        I(Y';T) - \tfrac12 I(X;T) \le \tau.
    \end{equation}

    Assume first that $(G,k)$ is a no-instance of \clique{}, that is, $\omega(G) < k$. By Lemma~\ref{lemma:PF_no},
    and since
    $\beta_0 = 1 - \frac1M$ and $\ell_0 = \frac{R_0}{M} - h_0$
    by~\eqref{eq:PF_lag_params}, every encoder $T$ of $X$ satisfies
    \begin{equation*}
        I(Y;T) \ge I(X;T) - h_0 + \frac{R_0 - I(X;T)}{M} = \beta_0\, I(X;T) + \ell_0,
    \end{equation*}
    that is, $I(Y;T) - \beta_0 I(X;T) \ge \ell_0$. Hence,
    by the equivalence~\eqref{eq:PF_lag_half} and the definition of $\tau$ in~\eqref{eq:PF_lag_tau}, every encoder $T$
    of $X$ satisfies
    \begin{equation}\label{eq:PF_lag_no}
        I(Y';T) - \tfrac12 I(X;T) \ge a\ell_0 \ge \tau + \frac{5}{32M^2}.
    \end{equation}
    Thus, no encoder satisfies~\eqref{eq:PF_lag_goal}, and
    $(P_{XY'}, \tau)$ is a no-instance of \PFL$_{1/2}$.

    Assume now that $(G,k)$ is a yes-instance of \clique{}, that is, $\omega(G) \ge k$. By Lemma~\ref{lemma:PF_yes}, the
    encoder $T_1$ satisfies $I(Y;T_1) = I(X;T_1) - h_0$ and
    $I(X;T_1) < R_0 - \frac{3}{4M}$. Hence, substituting
    $I(Y;T_1) = I(X;T_1) - h_0$ and $\beta_0 = 1 - \frac1M$, we get that
    \begin{equation}\label{eq:PF_lag_yes_beta0}
        I(Y;T_1) - \beta_0\, I(X;T_1) = \frac{I(X;T_1)}{M} - h_0
        < \frac{R_0}{M} - \frac{3}{4M^2} - h_0 = \ell_0 - \frac{3}{4M^2},
    \end{equation}
    where the inequality follows from $I(X;T_1) < R_0 - \frac{3}{4M}$, and
    the last equality from the definition of $\ell_0$ in~\eqref{eq:PF_lag_params}. Multiplying both sides of~\eqref{eq:PF_lag_yes_beta0} by $a > 0$, and using the identity~\eqref{eq:PF_lag_half}
    and $a > \frac12$, we get that
    \begin{equation}\label{eq:PF_lag_yes_a}
        I(Y';T_1) - \tfrac12 I(X;T_1) < a\ell_0 - \frac{3a}{4M^2}
        < a\ell_0 - \frac{3}{8M^2} = a\ell_0 - \frac{12}{32M^2}.
    \end{equation}
     Moreover, by the definition~\eqref{eq:PF_lag_tau} of $\tau$, we have
    that $a\ell_0 \le \tau + \frac{6}{32M^2} + \frac{1}{32M^2} = \tau + \frac{7}{32M^2}$.
    Substituting this into~\eqref{eq:PF_lag_yes_a}, we obtain that
    \begin{equation}\label{eq:PF_lag_yes}
        I(Y';T_1) - \tfrac12 I(X;T_1) < \tau + \frac{7}{32M^2} - \frac{12}{32M^2}
        = \tau - \frac{5}{32M^2},
    \end{equation}
    Thus, the encoder $T_1$ satisfies~\eqref{eq:PF_lag_goal}, and $(P_{XY'}, \tau)$ is a
    yes-instance of \PFL$_{1/2}$. This proves that the reduction is
    correct, and hence that \PFL$_{1/2}$ is NP-hard.

    We now prove that $\Lambda_{1/2}$ is hard to approximate.
    By~\eqref{eq:PF_lag_no} and~\eqref{eq:PF_lag_yes}, we have that
    $\Lambda_{1/2}(P_{XY'}) \ge \tau + \frac{5}{32M^2}$ if $\omega(G) < k$,
    and that $\Lambda_{1/2}(P_{XY'}) < \tau - \frac{5}{32M^2}$ if
    $\omega(G) \ge k$. Suppose that a polynomial-time algorithm outputs,
    given $P_{XY'}$, a value $v$ with
    $|v - \Lambda_{1/2}(P_{XY'})| \le \frac{1}{8|\cX|^2}$. Since $P_{XY'}$
    has the same input alphabet $\cX$ as $P_{XY}$, we have that
    $|\cX| = M$, and hence $\frac{1}{8|\cX|^2} = \frac{4}{32M^2}$.
    Therefore, if $\omega(G) < k$, then
    $v \ge \Lambda_{1/2}(P_{XY'}) - \frac{4}{32M^2} \ge \tau + \frac{1}{32M^2} > \tau$,
    while, if $\omega(G) \ge k$, then
    $v \le \Lambda_{1/2}(P_{XY'}) + \frac{4}{32M^2} < \tau - \frac{1}{32M^2} < \tau$.
    Thus, running the reduction, then the algorithm, and comparing $v$ with $\tau$ would decide \clique{} in polynomial
    time, which is impossible unless $\mathrm{P} = \mathrm{NP}$.

    It remains to give the reduction from \PFL$_{1/2}$ to \PFL$_\beta$.
    Let $(P_{XY'}, \tau)$ be an instance of \PFL$_{1/2}$, and let
    $\tilde P_{XZ}$ be the distribution obtained from $P_{XY'}$ by the
    construction~\eqref{eq:Z_def} in the proof of
    Proposition~\ref{prop:other_beta}. The reduction maps
    $(P_{XY'}, \tau)$ to the instance $(\tilde P_{XZ}, \kappa_\beta\tau)$
    of \PFL$_\beta$, which, as shown in that proof, can be computed in
    polynomial time. Moreover, by~\eqref{eq:L_beta_identity}, every encoder $T$ of
    $X$ satisfies
    \begin{equation}\label{eq:PF_lag_beta}
        I(Z;T) - \beta I(X;T) = \kappa_\beta \Big( I(Y';T) - \tfrac12 I(X;T) \Big).
    \end{equation}
    Since $\kappa_\beta > 0$, by~\eqref{eq:PF_lag_beta} an encoder
    satisfies $I(Y';T) - \frac12 I(X;T) \le \tau$ if and only if it
    satisfies $I(Z;T) - \beta I(X;T) \le \kappa_\beta\tau$. Hence, the
    reduction is correct, and \PFL$_\beta$ is NP-hard. Finally, taking the
    infimum over all encoders in~\eqref{eq:PF_lag_beta}, we get that
    $\Lambda_\beta(\tilde P_{XZ}) = \kappa_\beta \Lambda_{1/2}(P_{XY'})$.
    Since $\tilde P_{XZ}$ has the same input alphabet $\cX$ as $P_{XY'}$,
    an approximation of $\Lambda_\beta$ within
    $\frac{\kappa_\beta}{8|\cX|^2}$, divided by $\kappa_\beta$, would give
    an approximation of $\Lambda_{1/2}$ within $\frac{1}{8|\cX|^2}$,
    which, as shown above, is impossible unless
    $\mathrm{P} = \mathrm{NP}$.
\end{proof}

We observe that the argument of Remark~\ref{rem:real_beta} for the
information bottleneck applies also to the privacy funnel. Therefore,
the result of Proposition~\ref{prop:PF_lagrangian} extends to every fixed
polynomial-time computable real $\beta \in (0,1)$. Finally, note that,
for $\beta \ge 1$, the problem is trivial since the optimum is attained by $T=X$. Indeed, by
Lemma~\ref{lemma:PF_lower}, every encoder $T$ of $X$ satisfies
\begin{equation*}
    I(Y;T) - \beta I(X;T) \ge (1-\beta)\, I(X;T) - H(X \mid Y)
    \ge (1-\beta)\, H(X) - H(X \mid Y) = I(X;Y) - \beta H(X),
\end{equation*}
where the second inequality follows from $\beta \ge 1$ and
$I(X;T) \le H(X)$, and equality holds for $T = X$. Hence,
$\Lambda_\beta(P_{XY}) = I(X;Y) - \beta H(X)$ for every $\beta \ge 1$.

\section{Conclusions}\label{sec:conclusions}

We have established the NP-hardness of the threshold decision problems
for both the constrained and the Lagrangian formulations of the
information bottleneck.
Theorems~\ref{thm:IBL_hard}
and~\ref{thm:IBT_hard} also rule out polynomial-time approximation within
additive errors of order $|\cX|^{-3/2}$, unless
$\mathrm{P} = \mathrm{NP}$, and, by padding the instances of \clique{}
with isolated edges, the exponent $3/2$ can be lowered to $1 + \alpha$, for
every fixed $\alpha > 0$ (Theorem~\ref{thm:padding}). 

By adapting our hardness framework for the information bottleneck, we
also showed that it is NP-hard both to compute the optimal privacy
funnel~\cite{MSFM14} exactly and to approximate it within an additive
error of order $|\cX|^{-2}$ (Theorem~\ref{thm:PF_hard}). The same holds
for its Lagrangian formulation, for every fixed polynomial-time
computable multiplier $\beta \in (0,1)$
(Proposition~\ref{prop:PF_lagrangian}), while, for $\beta \ge 1$,
$\Lambda_\beta(P_{XY}) = I(X;Y) - \beta H(X)$.

An open question is whether approximating the information bottleneck and
the privacy funnel within a larger additive error remains NP-hard.
Thus, whether
hardness holds within errors of order $|\cX|^{-1}$, or even within a
constant additive error, remains open.

Given the many applications of the information bottleneck framework,
perhaps the most important open
problem is the design of efficient approximation algorithms with provable
guarantees.
Another  direction is to identify structural conditions on the joint
distribution that permit efficient exact computation or approximation.

\section*{AI statement}
The authors used OpenAI's ChatGPT Astra and Anthropic's Claude Opus 5.5 throughout the preparation of this manuscript for language editing, expository suggestions, mathematical consistency checks, and literature and reference checks. ChatGPT drew the authors' attention to the relevance of the Motzkin--Straus lemma~\cite{MS65} and the entropic form of Turán's theorem~\cite{CY26} to this work. 
The overall problem formulation
and the overall structure of the paper are the authors’ own.
The authors reviewed and edited the AI-assisted material and take full responsibility for the content of the paper.

\appendices

\section{Approximation of logarithms}\label{app:log}

The reductions of Theorems~\ref{thm:IBL_hard}, \ref{thm:IBT_hard}
and~\ref{thm:PF_hard}, and of Proposition~\ref{prop:PF_lagrangian}, must
compute rational approximations of finitely many logarithmic
expressions. Exact comparison of such expressions is a  delicate
matter (e.g., see~\cite{ESY14}). However, the 
reductions of Theorems~\ref{thm:IBL_hard}, \ref{thm:IBT_hard}
and~\ref{thm:PF_hard}, and of Proposition~\ref{prop:PF_lagrangian}  only need the following standard
approximation result, which follows from classical algorithms for the
multiple-precision evaluation of the logarithm (see~\cite{Brent76}).

\begin{lemma}\label{lem:log}
    There is an algorithm that, given rational numbers
    $\zeta_0, \dots, \zeta_\ell$, positive rational numbers
    $z_1, \dots, z_\ell$, and a rational number $\gamma > 0$, outputs a
    rational number $\tilde F$ such that
    \begin{equation*}
        \Big| \tilde F - \Big( \zeta_0 + \sum_{i=1}^{\ell} \zeta_i \log z_i \Big) \Big| \le \gamma,
    \end{equation*}
    in time polynomial in the total bit length of the input.
\end{lemma}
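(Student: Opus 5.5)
The plan is to reduce everything to computing rational approximations of finitely many logarithms of positive rationals, and then to take the same rational linear combination of those approximations. First, I would compute an integer $L \ge 1$ with $\sum_{i=1}^{\ell}|\zeta_i| \le L$. Such an $L$ has bit length polynomial in the input. I would then set the per-term accuracy to $\gamma' = \gamma / L$, which is a rational number of polynomial bit length.

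The main step is to compute, for each $i$, a rational $\tilde\ell_i$ with $|\tilde\ell_i - \log z_i| \le \gamma'$, in time polynomial in the bit lengths of $z_i$ and $\gamma'$. If we then output $\tilde F = \zeta_0 + \sum_i \zeta_i \tilde\ell_i$, the error is at most
\[
\sum_i |\zeta_i|\,\gamma' \le L\gamma' \le \gamma .
\]
This output is a rational number of polynomial size, since $\ell$ is bounded by the input length.

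To approximate $\log z$, I would write $z = a/b$ with $a, b$ positive integers, so that $\log z = \log a - \log b$. It therefore suffices to approximate $\log q$ for a positive integer $q$ to within $\gamma'/2$.
\begin{itemize}
\item Write $q = 2^{j} y$ with $j = \lfloor \log q \rfloor$ and $y \in [1,2)$ rational. Then $\log q = j + \ln y / \ln 2$.
\item Use the rapidly converging series $\ln y = 2\sum_{t\ge 0} \frac{w^{2t+1}}{2t+1}$, where $w = \frac{y-1}{y+1} \in [0,1/3)$. Its tail after $T$ terms is bounded by $O(3^{-2T})$.
\item Apply the same series to $\ln 2$ (where $w = 1/3$).
\item Take $T = O(\log(1/\gamma') + \log\log q)$ terms, so that the tails are tiny relative to $\gamma'$.
\end{itemize}

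The one technical point, which is the main obstacle, is keeping the bit lengths of the intermediate rationals polynomial. Exact partial sums of the series have denominators that grow with $T$. Since $T$ is polynomial in the input size, these denominators are of polynomial bit length, but to be safe I would round each partial result to a dyadic rational with $O(\log(1/\gamma') + \log\log q)$ fractional bits. Each rounding error would be bounded by $\gamma'/(8(T+1))$.

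It then remains to control the division by $\ln 2$. We have $\ln 2 \ge 1/2$ and $\ln y \le \ln 2$, so an error of $\eta$ in each of $\ln y$ and $\ln 2$ yields an error of $O(\eta)$ in the quotient. Choosing $\eta$ a constant factor smaller than $\gamma'$ therefore suffices. All arithmetic is on rationals of polynomial bit length, and there are polynomially many operations, so the total running time is polynomial. This is the standard multiple-precision evaluation of~\cite{Brent76}.
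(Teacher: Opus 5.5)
Your argument is correct. It is also more explicit than the paper, which gives no proof of this lemma beyond saying that it follows from classical multiple-precision algorithms for the logarithm and citing \cite{Brent76}.

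You supply the two ingredients the reductions actually need. The first is the linear error budget: each $\log z_i$ is approximated to within $\gamma/L$ with $L \ge \sum_i |\zeta_i|$, so the output stays a rational of polynomial size. The second is an elementary evaluation of $\log q$ for integers $q$, via the exact range reduction $q = 2^j y$ with $y\in[1,2)$ and the series $\ln y = 2\sum_{t\ge 0} w^{2t+1}/(2t+1)$, where $w=(y-1)/(y+1)\in[0,1/3)$ and the tail after $T$ terms is $O(9^{-T})$. Your error analysis for the division by $\ln 2$ is also fine.

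The cited work is about quasi-linear complexity in the precision (via the arithmetic--geometric mean), which is far more than ``polynomial time'' requires. Your route gives only polynomial time, which is exactly what the lemma claims, and it makes the appendix checkable without the reference.

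Two of your precautions can be dropped:
\begin{itemize}
\item Because $j$ is computed exactly and $\ln y/\ln 2\in[0,1)$, the error does not depend on $q$. So $T=O(\log(1/\gamma'))$ terms suffice, without the $\log\log q$ term.
\item The exact partial sums have denominators dividing $\mathrm{lcm}(1,3,\ldots,2T-1)\,(q+2^j)^{2T-1}$, hence bit length $O(T\log(2q))$. So no rounding is needed. If you do round, note that errors in the iterated powers are not amplified, since $|w|<1$.
\end{itemize}
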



\begin{thebibliography}{99}

\bibitem{Abu}
Y.~S. Abu-Mostafa, 
Essential average mutual information, in: \textit{Open Problems in
Communication and Computation}, T.~M. Cover and B. Gopinath (Eds.), 
Springer--Verlag, pp. 75, 1987.

\bibitem{Agmon2023}
S. Agmon, The information bottleneck’s ordinary differential equation: First-order root tracking for the information bottleneck, \textit{Entropy}, vol.~{\bf25}, no.~10, 2023.

\bibitem{AC20}
S. Asoodeh and F.~P. Calmon, Bottleneck problems: An information and
estimation-theoretic view, \emph{Entropy}, vol.~{\bf22}, no.~11, 2020.

\bibitem{BAC23}
E.~Benger, S.~Asoodeh and J.~Chen, The cardinality bound on the
information bottleneck representations is tight, in \emph{Proc.\ IEEE
International Symposium on Information Theory (ISIT)}, Taipei, Taiwan, pp.~1478--1483,
2023. 

\bibitem{Brent76}
R.~P. Brent, Fast multiple-precision evaluation of elementary functions,
\emph{Journal of the ACM}, vol.~{\bf23}, pp.~242--251, 1976.

\bibitem{CY26}
T.-W.~Chao and H.-H.~H.~Yu, When entropy meets Tur\'an: new proofs and
hypergraph Tur\'an results, \emph{J.\ London Math.\ Soc.}, vol.~{\bf113},
no.~3, 2026.


\bibitem{CS89}
N.~Chayat and S.~Shamai, Extension of an entropy property for binary
input memoryless symmetric channels, \emph{IEEE Trans.\ Inf.\ Theory},
vol.~{\bf35}, no.~5, pp.~1077--1079, 1989.

\bibitem{CGTW05}
G.~Chechik, A.~Globerson, N.~Tishby and Y.~Weiss, Information bottleneck
for Gaussian variables, \emph{Journal of Machine Learning Research},
vol.~{\bf6}, pp.~165--188, 2005.

\bibitem{DWS22}
M.~Dikshtein, N.~Weinberger and S.~Shamai, On information bottleneck for
Gaussian processes, in \emph{Proc.\ IEEE Information Theory Workshop
(ITW)}, Mumbai, India, pp.~546--551, 2022.

\bibitem{ESY14}
K.~Etessami, A.~Stewart and M.~Yannakakis, A note on the complexity of
comparing succinctly represented integers, with an application to maximum
probability parsing, \emph{ACM Trans.\ Comput.\ Theory}, vol.~{\bf6}, no.~2, 2014.

\bibitem{FriedmanMST01} N. Friedman, O. Mosenzon, N. Slonim and N. Tishby,
Multivariate information bottleneck, in \textit{Proc. 17th Conference on Uncertainty
in Artificial Intelligence (UAI)}, Seattle, WA, USA, pp.~152--161, 2001.


\bibitem{GJ79}
M.~R.~Garey and D.~S.~Johnson, \emph{Computers and Intractability: A
Guide to the Theory of NP-Completeness}, W.~H.~Freeman, 1979.

\bibitem{GP20}
Z.~Goldfeld and Y.~Polyanskiy, The information bottleneck problem and its
applications in machine learning, \emph{IEEE J.\ Sel.\ Areas Inf.\
Theory}, vol.~{\bf1}, no.~1, pp.~19--38, 2020.

\bibitem{HLYY24}
S.~Hu, Z.~Lou, X.~Yan and Y.~Ye, A survey on information bottleneck,
\emph{IEEE Trans.\ Pattern Anal.\ Mach.\ Intell.}, vol.~{\bf46}, no.~8,
pp.~5325--5344, 2024. 

\bibitem{HuangElGamal2021}
T.-H. Huang and A. El Gamal, A provably convergent information bottleneck solution via ADMM, in \textit{Proceedings of the IEEE International Symposium on Information Theory (ISIT)}, Melbourne, Australia, pp.~43–-48, 2021.


\bibitem{Karp72}
R.~M.~Karp, Reducibility among combinatorial problems, in
\emph{Complexity of Computer Computations} (R.~E.~Miller and
J.~W.~Thatcher, eds.), Plenum Press, New York, pp.~85--103, 1972.

\bibitem{KTV19}
A.~Kolchinsky, B.~D.~Tracey and S.~Van Kuyk, Caveats for information
bottleneck in deterministic scenarios, in \emph{Proc.\ International
Conference on Learning Representations (ICLR)}, New Orleans, LA, USA, 2019.


\bibitem{Lew}
J. Lewandowsky and G.  Bauch, Eds.,  \textit{Theory and Application of the Information Bottleneck Method}. Basel, Switzerland: MDPI, 2024.

\bibitem{Liu}
H. Liu and Y. Guo, Survey of information bottleneck frameworks in multimodal representation learning, 
\textit{Journal of Frontiers of Computer Science and Technology}, vol.~{\bf20}, no.~9, pp.~2454-–2468, 2026.

\bibitem{MSFM14}
A.~Makhdoumi, S.~Salamatian, N.~Fawaz and M.~M\'edard, From the
information bottleneck to the privacy funnel, in \emph{Proc.\ IEEE
Information Theory Workshop (ITW)}, Hobart, Australia, pp.~501--505,
2014. 

\bibitem{MS65}
T.~S.~Motzkin and E.~G.~Straus, Maxima for graphs and a new proof of a
theorem of Tur\'an, \emph{Canad.\ J.\ Math.}, vol.~{\bf17}, pp.~533--540,
1965.

\bibitem{Pia}
P. Piantanida and  L. Rey Vega, Information bottleneck and representation learning, in:
\textit{Information-Theoretic Methods in Data Science},
 M.~R.~D. Rodrigues and Y.~C. Eldar (Eds.), Cambridge University Press, pp. 330–-358, 2021.

\bibitem{SV16}
I.~Sason and S.~Verd\'u, $f$-divergence inequalities, \emph{IEEE Trans.\
Inf.\ Theory}, vol.~{\bf62}, no.~11, pp.~5973--6006, 2016.

\bibitem{ZL24}
R.~Shwartz-Ziv and Y.~LeCun, To compress or not to
compress---self-supervised learning and information theory: A review,
\emph{Entropy}, vol.~{\bf26}, no.~3, 2024.

\bibitem{StrouseS17} D.~J. Strouse and D.~J. Schwab, The deterministic
information bottleneck, \textit{Neural Computation}, vol.~{\bf29}, no.~6, pp.~1611--1630, 2017.

\bibitem{SSZ05}
I.~Sutskover, S.~Shamai and J.~Ziv, Extremes of information combining,
\emph{IEEE Trans.\ Inf.\ Theory}, vol.~{\bf51}, no.~4, pp.~1313--1325, 2005.

\bibitem{TPB99}
N.~Tishby, F.~C.~Pereira and W.~Bialek, The information bottleneck
method, in \emph{Proc.\ 37th Annual Allerton Conference on Communication,
Control, and Computing}, Monticello, IL, USA, pp.~368--377, 1999.

\bibitem{WW75}
H.~S. Witsenhausen and A.~D. Wyner, A conditional entropy bound for a pair of
discrete random variables, \emph{IEEE Trans.\ Inf.\ Theory}, vol.~{\bf21},
pp.~493--501, 1975.

\bibitem{WZ73b}
A.~D. Wyner, A theorem on the entropy of certain binary sequences and
applications: Part~II, \emph{IEEE Trans.\ Inf.\ Theory}, vol.~{\bf19},
pp.~772--777, 1973.

\bibitem{WZ73a}
A.~D. Wyner and J.~Ziv, A theorem on the entropy of certain binary sequences
and applications: Part~I, \emph{IEEE Trans.\ Inf.\ Theory}, vol.~{\bf19},
pp.~769--772, 1973.

\bibitem{ZEAS20}
A.~Zaidi, I.~Estella-Aguerri and S.~Shamai (Shitz), On the information bottleneck
problems: Models, connections, applications and information theoretic
views, \emph{Entropy}, vol.~{\bf22}, no.~2, 2020.
\end{thebibliography}
\end{document}